\documentclass{llncs}
\usepackage{hyperref}
\usepackage{amsmath}
\usepackage{amssymb}
\usepackage{thm-restate}
\usepackage{array}
\usepackage{tikz}
\usepackage{url}
\usepackage{microtype}
\usepackage{stmaryrd}
\usepackage[linesnumbered,ruled,vlined]{algorithm2e}
\SetKwInput{KwInput}{Input}

\usepackage{subcaption}
\usepackage{symrob}

\usepackage{longtable}

\spnewtheorem{assumption}{Assumption}{\bfseries}{\itshape}

\usetikzlibrary{automata,calc}
\usetikzlibrary{arrows,shapes,decorations}
\usetikzlibrary{patterns}



\def\post{\text{\rm Post}}

\def\zReset{\textrm{\tt Reset}}

\def\Bits{\mathcal{B}}

\def\bool{\ensuremath{\mathbb{B}}}

\def\minterms{\mathsf{Minterms}}
\def\Paths{\mathsf{Paths}}

\title{Abstraction Refinement Algorithms\\ for Timed Automata\thanks{This work was funded by ANR project Ticktac (ANR-18-CE40-0015) and by ERC grant EQualIS (StG-308087).}}
\author{Victor Roussanaly, Ocan Sankur, Nicolas Markey}
\authorrunning{V. Roussanaly et al.}
\institute{Univ Rennes, Inria, CNRS, IRISA -- Rennes, France}
\begin{document}

\maketitle

\begin{abstract}
  We present abstraction-refinement algorithms for model
  checking safety properties of timed automata.
  The~abstraction domain we consider
  abstracts away zones by restricting the set of clock constraints
  that can be used to define them, while the refinement procedure
  computes the set of constraints that must be taken into
  consideration in the abstraction so as to exclude a given spurious
  counterexample. We~implement this idea in two~ways: an~enumerative
  algorithm where a lazy abstraction approach is adopted, meaning that
  possibly different abstract domains are assigned to each exploration
  node; and a~symbolic algorithm where the abstract transition system
  is encoded with Boolean formulas.
\end{abstract}

\section{Introduction}

Model checking~\cite{Pnu77,CE82,CGP00,BK08} is an automated technique
for verifying that the set of behaviors of a computer system satisfies
a given property. Model-checking algorithms explore finite-state
automata (representing the system under study) in order to decide if
the property holds; if~not, the algorithm returns an explanation.
These algorithms have been extended to verify real-time systems
modelled as timed automata~\cite{AD90,ACD93},
an extension of finite automata with clock variables
to measure and constrain the amount of time elapsed between occurrences
of transitions.
The state-space exploration can be done by representing clock constraints efficiently using convex polyhedra called~\emph{zones}~\cite{BM83,BY04}.
Algorithms based on this data structure have been implemented in
several tools such as Uppaal~\cite{BDL+06}, and have been applied in various
industrial cases.

The well-known issue in the applications of model checking is the
\emph{state-space explosion} problem: the~size of the state space
grows exponentially in the size of the description of the system.
There are several sources for this explosion: the system might be made
of the composition of several subsystems (such as a distributed
system), it might contain several discrete variables (such~as in a
piece of software), or it might contain a number of real-valued clocks
as in our case.

Numerous
attempts have been made to circumvent this problem. Abstraction is a
generic approach that consists in simplifying the model under study,
so~as to make it easier to verify~\cite{Cousot77}. \emph{Existential} abstraction may
only add extra behaviors, so that when a safety property holds in an
abstracted model, it~also holds in the original model; if~on the other
hand a safety property fails to hold, the model-checking algorithms
return a witness trace exhibiting the non-safe behaviour: this
either invalidates the property on the original model, if the
trace exists in that model, or gives information about how to automatically refine
the abstraction. This~approach, named CEGAR (counter-example guided
abstraction refinement)~\cite{CGJ+03}, was further developed and
used, for instance, in software verification (BLAST~\cite{HJMS03}, SLAM~\cite{BR01},~...).

The CEGAR approach has been adapted to timed automata, e.g. in
\cite{formats2007-DKL,HZHSG-tase10}, but the abstractions considered
there only consist in removing clocks and discrete variables, and
adding them back during refinement.  So for most well-designed models,
one ends up adding all clocks and variables which renders the method
useless.  Two~notable exceptions are \cite{HSW-cav13}, in which the
zone extrapolation operators are dynamically adapted during the
exploration, and \cite{TM-formats17}, in which zones are refined when
needed using interpolants. Both approaches define ``exact''
abstractions in the sense that they make sure that all traces
discovered in the abstract model are feasible in the concrete model at
any time.

In this work, we consider a more general setting and study
\emph{predicate abstractions} on clock variables. Just like in
software model checking, we~define abstract state spaces using these
predicates, where the values of the clocks and their relations are
approximately represented by these predicates.  New predicates are
generated if needed during the refinement step.  We~instantiate our
approach by two algorithms. The~first one is a zone-based enumerative
algorithm inspired by the \emph{lazy abstraction} in software model
checking~\cite{HJMS02}, where we assign a possibly different abstract
domain to each node in the exploration.  The~second algorithm is based
on binary decision diagrams~(BDD): by~exploiting the observation that a
small number of predicates was often sufficient to prove safety
properties, we~use an efficient BDD encoding of zones similar to one
introduced in early work~\cite{SB-cav03}.

\begin{figure}[b]
  \centering
  \begin{subfigure}[t]{0.45\textwidth}
    \centering
  \begin{tikzpicture}
    \begin{scope}[scale=.45]
      \fill[black!40] (1,1) -- (2,1) -- (2,2) -- (1,2)-- cycle;
      \draw[-latex',line width=.6pt] (0,0) -- (3.5,0);
      \draw[-latex',line width=.6pt] (0,0) -- (0,3.5);
      \node[left] at (0,3) {$y$};
      \node at (4,0) {$x$};
      \begin{scope}[line width=0.1pt]
        \draw[-] (1,0) -- (1,3.5);
        \draw[-] (2,0) -- (2,3.5);
        \draw[-] (3,0) -- (3,3.5);
        \draw[-] (0,1) -- (3.5,1);
        \draw[-] (0,2) -- (3.5,2);
        \draw[-] (0,3) -- (3.5,3);
        \draw[-] (0,0) -- (3.5,3.5);
        \draw[-] (1,0) -- (3.5,2.5);
        \draw[-] (2,0) -- (3.5,1.5);
        \draw[-] (0,1) -- (2.5,3.5);
        \draw[-] (0,2) -- (1.5,3.5);
      \end{scope}
      \begin{scope}[line width=1.5pt,color=red]
        \draw[-] (0,1) -- (2.5,3.5);
        \draw[-] (2,0) -- (3.5,1.5);
        \draw[-] (2,0) -- (2,3.5);
        \draw[-] (0,2) -- (3.5,2);
        \draw[-] (3,0) -- (3,3.5);
      \end{scope}
    \end{scope}
    \begin{scope}[scale=.45,shift={(6,0)}]
      \fill[black!40] (0,0) -- (0,1) -- (1,2) -- (2,2) -- (2,0)-- cycle;
      \draw[-latex',line width=.6pt] (0,0) -- (3.5,0);
      \draw[-latex',line width=.6pt] (0,0) -- (0,3.5);
      \node[left] at (0,3) {$y$};
      \node at (4,0) {$x$};
      \begin{scope}[line width=0.1pt]
        \draw[-] (1,0) -- (1,3.5);
        \draw[-] (2,0) -- (2,3.5);
        \draw[-] (3,0) -- (3,3.5);
        \draw[-] (0,1) -- (3.5,1);
        \draw[-] (0,2) -- (3.5,2);
        \draw[-] (0,3) -- (3.5,3);
        \draw[-] (0,0) -- (3.5,3.5);
        \draw[-] (1,0) -- (3.5,2.5);
        \draw[-] (2,0) -- (3.5,1.5);
        \draw[-] (0,1) -- (2.5,3.5);
        \draw[-] (0,2) -- (1.5,3.5);
      \end{scope}
      \begin{scope}[line width=1.5pt,color=red]
        \draw[-] (0,1) -- (2.5,3.5);
        \draw[-] (2,0) -- (3.5,1.5);
        \draw[-] (2,0) -- (2,3.5);
        \draw[-] (0,2) -- (3.5,2);
        \draw[-] (3,0) -- (3,3.5);
      \end{scope}
    \end{scope}
  \end{tikzpicture}
    \caption{Abstraction of zone $1\leq x,y\leq 2$}\label{fig-1a}
  \end{subfigure}
\begin{subfigure}[t]{0.52\textwidth}
  \centering
  \begin{tikzpicture}
    \begin{scope}[scale=.45]
      \fill[black!40] (1,0) -- (2,1) -- (3,1) -- (2,0)-- cycle;
      \draw[-latex',line width=.6pt] (0,0) -- (3.5,0);
      \draw[-latex',line width=.6pt] (0,0) -- (0,3.5);
      \node[left] at (0,3) {$y$};
      \node at (4,0) {$x$};
      \begin{scope}[line width=0.1pt]
        \draw[-] (1,0) -- (1,3.5);
        \draw[-] (2,0) -- (2,3.5);
        \draw[-] (3,0) -- (3,3.5);
        \draw[-] (0,1) -- (3.5,1);
        \draw[-] (0,2) -- (3.5,2);
        \draw[-] (0,3) -- (3.5,3);
        \draw[-] (0,0) -- (3.5,3.5);
        \draw[-] (1,0) -- (3.5,2.5);
        \draw[-] (2,0) -- (3.5,1.5);
        \draw[-] (0,1) -- (2.5,3.5);
        \draw[-] (0,2) -- (1.5,3.5);
      \end{scope}
      \begin{scope}[line width=1.5pt,color=red]
        \draw[-] (0,1) -- (2.5,3.5);
        \draw[-] (2,0) -- (3.5,1.5);
        \draw[-] (2,0) -- (2,3.5);
        \draw[-] (0,2) -- (3.5,2);
        \draw[-] (3,0) -- (3,3.5);
      \end{scope}
    \end{scope}

    \begin{scope}[scale=.45,shift={(5,0)}]
      \fill[black!40] (0,0) -- (0,1) -- (1,2) -- (3,2) -- (3,1) -- (2,0) -- cycle;
      \draw[-latex',line width=.6pt] (0,0) -- (3.5,0);
      \draw[-latex',line width=.6pt] (0,0) -- (0,3.5);
      \node[left] at (0,3) {$y$};
      \node at (4,0) {$x$};
      \begin{scope}[line width=0.1pt]
        \draw[-] (1,0) -- (1,3.5);
        \draw[-] (2,0) -- (2,3.5);
        \draw[-] (3,0) -- (3,3.5);
        \draw[-] (0,1) -- (3.5,1);
        \draw[-] (0,2) -- (3.5,2);
        \draw[-] (0,3) -- (3.5,3);
        \draw[-] (0,0) -- (3.5,3.5);
        \draw[-] (1,0) -- (3.5,2.5);
        \draw[-] (2,0) -- (3.5,1.5);
        \draw[-] (0,1) -- (2.5,3.5);
        \draw[-] (0,2) -- (1.5,3.5);
      \end{scope}
      \begin{scope}[line width=1.5pt,color=red]
        \draw[-] (0,1) -- (2.5,3.5);
        \draw[-] (2,0) -- (3.5,1.5);
        \draw[-] (2,0) -- (2,3.5);
        \draw[-] (0,2) -- (3.5,2);
        \draw[-] (3,0) -- (3,3.5);
      \end{scope}
    \end{scope}
  \end{tikzpicture}
  \caption{Abstraction of zone $y\leq 1 \mathrel\wedge 1\leq x-y\leq 2$}\label{fig-1b}
\end{subfigure}
\caption{The abstract domain is defined by the clock constraints shown
  in thick red lines.  In each example, the abstraction of the zone
  shown on the left (shaded area) is the larger zone on the right.}
\label{fig:abstraction-basic}
\end{figure}

Let us explain the abstract domains we consider.
Assume there are two clock variables~$x$ and~$y$. The abstraction we
consider consists in restricting the clock constraints that can be
used when defining zones.  Assume that we only allow to compare~$x$
with~$2$ or~$3$; that $y$ can only be compared with~$2$, and $x-y$ can
only be compared with $-1$ or~$2$. Then any conjunction of constraints
one might obtain in this manner will be delimited by the thick red
lines in Fig.~\ref{fig:abstraction-basic}; one~cannot define a finer
region under this restriction.  The~figure shows the abstraction
process: given a ``concrete'' zone, its~abstraction is the smallest
zone which is a superset and is definable under our restriction.
For~instance, the~abstraction of $1\leq x,y\leq 2$ is $0\leq x,y\leq
2\land -1\leq x-y$ (cf.~Fig.~\ref{fig-1a}).

\paragraph{Related Works.}
We give more detail on zone abstractions in timed automata.  Most
efforts in the literature have been concentrated in designing zone
abstraction operators that are exact in the sense that they preserve
the reachability relation between the locations of a timed automaton;
see~\cite{BBLP-tacas04}. The~idea is to determine bounds on the
constants to which a given clock can be compared to in a given part of
the automaton, since the clock values do not matter outside these
bounds.  In~\cite{HKSW11,HSW-cav13}, the~authors give an algorithm
where these bounds are dynamically adapted during the exploration,
which allows one to obtain coarser abstractions.
In~\cite{TM-formats17}, the exploration tree contains pairs of zones:
a concrete zone as in the usual algorithm, and a coarser abstract
zone. The algorithm explores all branches using the coarser zone and
immediately refines the abstract zone whenever an edge which is
disabled in the concrete zone is enabled.  In~\cite{formats2010-EMP},
a CEGAR loop was used to solve timed games by analyzing strategies
computed for each abstract game. The abstraction consisted in
collapsing locations.

Some works have adapted the abstraction-refinement paradigm to timed
automata.  In~\cite{formats2007-DKL}, the authors apply ``localization
reduction'' to timed automata within an abstraction-refinement loop:
they abstract away clocks and discrete variables, and only introduce
them as they are needed to rule out spurious counterexamples.
A~more general but similar approach was developed in~\cite{HZHSG-tase10}.
In~\cite{WJ-atva14}, the authors adapt the trace abstraction refinement
idea to timed automata where a finite automaton is maintained to
rule out infeasible edge sequences. 

The CEGAR approach was also used recently in the LinAIG framework for
verifying linear hybrid automata~\cite{althaus-avocs-si17}. In this
work, the backward reachability algorithm exploits \emph{don't-cares}
to reduce the size of the Boolean circuits representing the state
space.  The~abstractions consist in enlarging the size of
\emph{don't-cares} to reduce the number of linear predicates used in
the representation.

\section{Timed Automata and Zones}
\label{section:definitions}
\subsection{Timed automata}
Given a finite set of clocks $\Clocks$, we~call \emph{valuations} the
elements of~$\Realnn^\Clocks$.
For a clock valuation~$v$, a~subset $R\subseteq \Clocks$, and a
non-negative real~$d$, we~denote with $v[R \leftarrow d]$ the
valuation~$w$ such that $w(x) = v(x)$ for $x \in \Clocks \setminus R$
and $w(x) = d$ for $x \in R$, and with $v+d$ the valuation~$w'$ such that
$w'(x)=v(x)+d$ for all~$x\in\Clocks$.
%
%
We~extend these operations to sets of valuations in the obvious
way. We~write~$\vec{0}$ for the valuation that assigns~$0$ to every
clock.
An~\emph{atomic guard} is a formula of the form $x \prec k$ or~$x-y
\prec k$ with $x,y \in \Clocks$, $k\in \mathbb{N}$, and~$\mathord\prec
\in \{\mathord{<},\mathord{\leq},\mathord{>},\mathord{\geq}\}$.
A~\emph{guard} is a conjunction of atomic guards.  A~valuation~$v$
satisfies a guard~$g$, denoted $v \models g$, if all atomic guards
hold true when each $x\in \Clocks$ is replaced
with~$v(x)$. Let~$\sem{g} = \{ v \in \Realnn^\Clocks \mid v \models
g\}$ denote the set of valuations satisfying~$g$.  We~write
$\Phi_\Clocks$ for the set of guards built on~$\Clocks$.

\smallskip
A~\emph{timed automaton} $\TA$ is a tuple
$(\Locs,\invar,\ell_0,\Clocks,E)$, where $\Locs$~is a finite set of
locations, $\invar\colon\Locs \rightarrow \Phi_\Clocks$ defines
location invariants, $\Clocks$~is a finite set of clocks, $E \subseteq
\mathcal{L} \times \Phi_\Clocks \times 2^\Clocks \times \mathcal{L}$
is a set of edges, and $\ell_0\in \Locs$ is the initial location.
An~edge $e = (\ell,g,R,\ell')$ is also written as $\ell
\xrightarrow{g, R} \ell'$.  For any location~$\ell$, we~let~$E(\ell)$
denote the set of edges leaving~$\ell$.

A~\emph{configuration} of~$\TA$ is a pair~$q=(\ell,v)\in \Locs\times
\Realnn^{\Clocks}$ such that $v\models\invar(\ell)$.
A~\emph{run} of~$\TA$ is a sequence $q_1e_1q_2e_2\ldots q_n$ where
for all~$i\geq 1$, $q_i=(\ell_i,v_i)$ is a configuration,
and either $e_i \in \mathbb{R}_{>0}$, in which case $q_{i+1} =
(\ell_i,v_i + e_i)$, or $e_i =(\ell_i,g_i,R_i,\ell_{i+1})\in E$, in
which case $v_i \models g_i$ and $q_{i+1} =
(\ell_{i+1},v_i[R_i\leftarrow 0])$.
%
A~\emph{path} is a sequence of edges with matching endpoint
locations.

\subsection{Zones and DBMs}
Several tools for timed automata implement algorithms based on \emph{zones},
which are particular polyhedra definable with clock constraints.
Formally, a zone~$Z$ is a subset of~$\Realnn^{\Clocks}$ definable by
a guard in~$\Phi_\Clocks$.

We recall a few basic operations defined on zones.  First, the
intersection~$Z\cap Z'$ of two zones~$Z$ and~$Z'$ is clearly a zone.
Given a zone~$Z$, the~set of time-successors of~$Z$, defined as
$\timesucc Z = \{v+t \in \Realnn^\Clocks \mid t\in\Realnn,\ v \in Z\}$, is
easily seen to be a~zone;
similarly for time-predecessors $\timepred Z = \{v \in \Realnn^\Clocks \mid
\exists t \geq 0.\ {v + t \in Z}\}$.  Given~$R\subseteq\Clocks$,
we~let $\reset_R(Z)$ be the zone $\{v[R\leftarrow 0] \in \Realnn^\Clocks
\mid v \in Z\}$, and
$\freeta_x(Z) = \{v' \in \Realnn^\Clocks \mid \exists v \in Z,
d \in \Realnn, v' = v[x \leftarrow d]\}$.

Zones can be represented as \emph{difference-bound
  matrices~(DBM)}~\cite{Dil89,BY04}.  Let~$\Clocksz = \Clocks \cup
\{0\}$, where~$0$ is an extra symbol representing a special clock
variable whose value is always~$0$.
A~DBM is a $\size\Clocksz \times \size\Clocksz$-matrix taking values in
$(\bbZ\times\{\mathord <,\mathord \leq\})\cup\{(+\infty,\mathord <)\}$.
Intuitively, cell~$(x,y)$ of a DBM~$M$ stores a pair~$(d,\prec)$
representing an upper bound on the difference~$x-y$. 
For any DBM~$M$, we~let~$\sem{M}$ denote the zone it defines.

While several DBMs can represent the same zone, each zone admits a
\emph{canonical} representation, which is obtained by storing the
tightest clock constraints defining the zone. This~canonical
representation can be obtained by computing shortest paths in a graph
where the vertices are clocks and the edges weighted by clock
constraints, with natural addition and comparison of elements
of~$(\bbZ\times\{\mathord <,\mathord \leq\})\cup\{(+\infty,\mathord
<)\}$. This graph has a negative cycle if, and only~if, the associated
DBM represents the empty zone.
%
%

All~the operations on zones can be performed efficiently
(in~$O(\size\Clocksz^3)$) on their associated DBMs while maintaining
reduced form.  For~instance, the~intersection $N=Z\cap Z'$ of two
canonical DBMs~$Z$ and~$Z'$ can be obtained by first computing the DBM
$M=\min(Z,Z')$ such that $M(x,y)=\min\{Z(x,y),Z'(x,y)\}$ for
all~$(x,y)\in\Clocksz^2$, and then turning~$M$ into canonical form.
We~refer to~\cite{BY04} for full details. By~a slight abuse of notation,
we~use the same notations for DBMs as for zones,  writing e.g.
$M'=\timesucc M$, where~$M$ and~$M'$ are reduced DBMs such that
$\sem{M'}=\timesucc{\sem{M}}$.
%
%
%
Given an edge~$e = (\ell,g,R,\ell')$, and a zone~$Z$, we define
$\postta_e(Z)=\invar(\ell') \cap
\timesucc{(g \cap \reset_R(Z))}$, and
$\preta_e(Z)=
\timepred{(g \cap \freeta_R(\invar(\ell') \cap Z))}$.
For a path~$\rho=e_1e_2\ldots
e_n$, we define $\postta_\rho$ and $\preta_\rho$ by iteratively
applying $\postta_{e_i}$ and $\preta_{e_i}$ respectively.

\subsection{Clock-predicate abstraction and interpolation}
For all clocks~$x$ and $y$
in~$\Clocksz$, we consider a finite set $\Dom_{x,y} \subseteq
\mathbb{N}\times\{\mathord\leq,\mathord<\}$, and gather these in a table $\Dom =
(\Dom_{x,y})_{x,y \in \Clocksz}$.  $\Dom$~is the \emph{abstract
  domain} which restricts zones to be defined only using constraints
of the form $x-y \prec k$ with~$(k,\mathord\prec) \in \Dom_{x,y}$, as seen
earlier.  Let~us call $\Dom$ the \emph{concrete domain} if $\Dom_{x,y}
= \mathbb{N}\times\{\mathord\leq,\mathord<\}$ for all~$x,y \in \Clocksz$.
A~zone~$Z$ is
$\Dom$-definable if there exists a DBM~$D$ such that $Z=\sem D$ and $D(x,y) \in
\Dom_{x,y}$ for all~$x,y\in\Clocksz$.  Note that we do not require
this witness DBM~$D$ to be reduced; the~reduction of such a DBM might
introduce additional values.
%
%
We~say that domain~$\Dom'$ is  a \emph{refinement} of~$\Dom$ if
for all~$x,y \in \Clocksz$, we~have $\Dom_{x,y}\subseteq \Dom'_{x,y}$.

An abstract domain~$\Dom$ induces an \emph{abstraction
function}~$\alpha_\Dom\colon {2^{\Realnn^\Clocks}}
\rightarrow {2^{\Realnn^\Clocks}} $ where $\alpha_\Dom(Z)$
is the smallest $\Dom$-definable zone containing~$Z$.  For~any
reduced~DBM~$D$, $\alpha_\Dom(\sem{D})$ can be computed by
setting~$D'(x,y) = \min\{(k,\prec) \in \Dom_{x,y} \mid D(x,y) \leq
(k,\prec)\}$ (with~$\min\emptyset=(\infty,<)$).



An \emph{interpolant} for a pair of zones~$(Z_1,Z_2)$ with~$Z_1\cap
Z_2 = \emptyset$ is a zone~$Z_3$ with~$Z_1 \subseteq Z_3$ and~$Z_3
\cap Z_2 = \emptyset$\footnote{It~is sometimes also required that
  the interpolant only involves clocks that have non-trivial constraints in both~$Z_1$ and~$Z_2$. We~do not impose this requirement in our definition, but it will hold true in the interpolants computed by our algorithm.}~\cite{TM17}.
%
We~use interpolants to refine our abstractions;
in~order not to add too many new constraints when refining, our~aim is
to find \emph{minimal interpolants}: define the density of a DBM~$D$
as $\dens D=\#\{(x,y) \in \Clocksz^2 \mid D(x,y)\not=(\infty,<)\}$.
Notice that while any pair of disjoint convex polyhedra can be
separated by hyperplanes, not all pairs of disjoint zones admit
interpolants of density~$1$; this is because not all (half-spaces
delimited~by) hyperplanes are zones.

\begin{restatable}{lemma}{lemmanosimpleinterp}
There exist pairs of zones accepting no simple interpolants.
\end{restatable}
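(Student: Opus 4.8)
The plan is to make precise what a density-one (``simple'') interpolant can be, reduce its existence to a one-dimensional separation condition, and then exhibit an explicit pair of disjoint zones violating that condition.

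First I would observe that a zone of density one is exactly a single half-space delimited by one atomic constraint, that is, a set of the form $\{v : v(x) \prec k\}$ or $\{v : v(x) - v(y) \prec k\}$. Consequently a density-one interpolant $Z_3$ for a pair $(Z_1,Z_2)$ exists if and only if there is a functional $f$ among the clocks $x_i$ and the differences $x_i - x_j$, together with a threshold $k$, such that $Z_1$ lies entirely on one side of $\{f = k\}$ while $Z_2$ lies disjointly on the other. In other words, a simple interpolant exists precisely when the projections of $Z_1$ and of $Z_2$ onto some such functional $f$ can be separated by a point. Hence, to prove the lemma it suffices to produce two disjoint zones whose projections onto every clock and every clock difference overlap in a nondegenerate interval.

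The key realization is that this cannot happen with one or two clocks. Emptiness of $Z_1 \cap Z_2$ corresponds to a negative cycle in the constraint graph on the vertices $\{0\}$ together with the clocks, and with at most three vertices every simple cycle has length at most three; the two edges coming from the same zone then share a vertex and can be contracted into a single difference constraint, which always yields a density-one separation. The first genuinely new phenomenon appears with three clocks, where a negative cycle of length four whose edges alternate between the two zones admits no such contraction. I would therefore build the example from such an alternating $4$-cycle: taking clocks $x,y,z$, set $Z_1 = \{\, 2 \le x \le 4,\ 0 \le y \le z \le 4 \,\}$ and $Z_2 = \{\, 0 \le x \le y \le 4,\ 0 \le z \le 1 \,\}$, which are both nonempty zones.

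It then remains to verify two things. For disjointness, combining the constraint $x \ge 2$ from $Z_1$, the constraint $x \le y$ from $Z_2$, the constraint $y \le z$ from $Z_1$, and the constraint $z \le 1$ from $Z_2$ forces $2 \le x \le y \le z \le 1$, which is impossible, so $Z_1 \cap Z_2 = \emptyset$. For the absence of a simple interpolant, I would compute, for each of the six functionals $x,\,y,\,z,\,x-y,\,x-z,\,y-z$, the ranges attained on $Z_1$ and on $Z_2$ and check that they overlap with nonempty interior in every case (for instance $x$ ranges over $[2,4]$ on $Z_1$ and over $[0,4]$ on $Z_2$; $z$ over $[0,4]$ and $[0,1]$; $y-z$ over $[-4,0]$ and $[-1,4]$; and similarly for the remaining three). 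By the reduction above, no single atomic constraint separates the two zones, hence no density-one interpolant exists. The main effort is exactly this last case analysis — ruling out each of the finitely many candidate constraints — together with the structural observation that fewer than three clocks can never produce such a pair.
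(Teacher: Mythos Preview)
Your argument is correct and follows the same underlying idea as the paper: reduce ``no simple interpolant'' to the statement that for every atomic difference functional $f\in\{x_i,\,x_i-x_j\}$ the ranges of $f$ on the two zones overlap, and then check this on an explicit three–clock example. The paper phrases the very same condition in DBM language---it shows that for its pair $(A,B)$ one has $A(y,x)+B(x,y)\ge(0,\le)$ for every pair of indices, which is exactly your ``projections onto $x-y$ overlap''---so the two proofs are equivalent at the level of the separating criterion.

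Where you differ is in the concrete witness and the surrounding discussion. The paper takes the unbounded zone $A=\{z=0,\ x=y\}$ against a bounded tetrahedron $B$, and reads the non-separation directly off the canonical DBMs. Your pair $Z_1=\{2\le x\le4,\ 0\le y\le z\le4\}$ and $Z_2=\{0\le x\le y\le4,\ 0\le z\le1\}$ is equally valid (I~checked the six overlaps), and is arguably more transparent since both zones are bounded and the disjointness witness---the alternating chain $2\le x\le y\le z\le 1$---is exactly the length-four negative cycle that Lemma~\ref{lemma-smallinterpol} predicts. Your observation that two clocks cannot work, because any negative cycle on three vertices contains two consecutive edges from the same zone that collapse to a single constraint, is a nice piece of motivation that the paper omits; it is not needed for the lemma but it explains why the example must live in dimension at least three.
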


\begin{proof}
  Consider 3-dimensional zones $A$, defined as $z=0\et x=y$, and $B$,
  defined as $y\geq 2 \et z\leq 2 \et y-x\leq 1 \et x-z\leq 1$. Both
  zones and their canonical DBMs are represented on
  Fig.~\ref{fig-nosimpleinterp}.
  \begin{figure}[h]
    \centering\noindent
    \begin{minipage}{.45\linewidth}
    \begin{tikzpicture}[scale=1.8,x={(0.45 cm,-0.05 cm)},z={(-0.3 cm,-0.15 cm)},y={(0,0.5 cm)}]
      \draw[-latex'] (0,0,0) -- (3,0,0) node[right] {$x$};
      \draw[-latex'] (0,0,0) -- (0,2.5,0) node[above] {$z$};
      \draw[-latex'] (0,0,0) -- (0,0,4) node[left] {$y$};
      \draw[line width=2pt,black!40!white] (0,0,0) -- (3.5,0,3.5) node[right,black] {$A$}; 
      \fill[black!30!white,draw=black!50!white,line width=2pt,opacity=.4] (1,0,2) -- (1,2,2) -- (3,2,2) -- cycle;
      \fill[black!30!white,draw=black!50!white,line width=2pt,opacity=.4] (1,0,2) -- (3,2,4) -- (3,2,2) node[above right,black,opacity=1] {$B$} -- cycle;
      \fill[black!30!white,draw=black!50!white,line width=2pt,opacity=.4] (1,0,2) -- (1,2,2) -- (3,2,4) -- cycle;
      \fill[black!30!white,draw=black!50!white,line width=2pt,opacity=.4] (1,2,2) -- (3,2,2) -- (3,2,4) -- cycle;
      \begin{scope}
        \draw[dashed] (1,0,2) -- (3,0,2) -- (3,0,4) -- cycle;
        \draw[dotted] (3,2,4) -- (3,0,4);
        \draw[dotted] (3,2,2) -- (3,0,2);
        \draw[dashed] (1,0,0) -- (3,2,0) -- (1,2,0) -- cycle;
        \draw[dotted] (1,0,2) -- (1,0,0);
        \draw[dotted] (3,2,2) -- (3,2,0);
        \draw[dotted] (1,2,2) -- (1,2,0);
        \draw[dashed] (0,0,2) -- (0,2,4) -- (0,2,2) -- cycle;
        \draw[dotted] (1,0,2) -- (0,0,2);
        \draw[dotted] (3,2,4) -- (0,2,4);
        \draw[dotted] (1,2,2) -- (0,2,2);
      \end{scope}
    \end{tikzpicture}
    \end{minipage}
    \begin{minipage}{.5\linewidth}
      \newcolumntype{C}{>{\centering $(}p{13mm}<{)$}}
      \[
      A = \left(\begin{array}{CCCC}
        0,\le & +\infty,< & +\infty,< & 0,\le \tabularnewline
        +\infty,< & 0,\le & 0,\le & +\infty,< \tabularnewline
        +\infty,< & 0,\le & 0,\le & +\infty,< \tabularnewline
        0,\le & +\infty,< & +\infty,< & 0,\le 
      \end{array}\right)
      \]
      \newcolumntype{C}{>{\centering $(}p{13mm}<{,\le)$}}
      \[
      B = \left(\begin{array}{CCCC}
        0 & 3 & 4 & 2 \tabularnewline
        -1 & 0 & 1 & 1 \tabularnewline
        -2 & 1 & 0 & 0 \tabularnewline
        0 & 1 & 2 & 0 
      \end{array}\right)
      \]
    \end{minipage}
    \caption{Two zones that cannot be separated by a simple interpolant}
    \label{fig-nosimpleinterp}
  \end{figure}

  We~observe that they are disjoint: if a triple~$(x,y,z)$ were in
  both~$A$ and~$B$, then $x=y$ and~$z=0$ (for being in~$B$); in~$A$,
  $y\geq2$, hence also $x\geq 2$, contradicting $x-z\leq 1$.

  Now, assume that there is a simple interpolant~$I$, with~$A\cap
  I=\emptyset$ and $B\subseteq I$. In~the canonical DBM of~$I$, only
  one non-diagonal element is not~$(+\infty,<)$; assume
  $I(x,y)\not=(+\infty,<)$. Then we~must have $A(y,x)+I(x,y) <
  (0,\mathord\leq)$, and $B(x,y)\leq I(x,y)$. Then $A(x,y)+B(x,y) <
  (0,\mathord\leq)$. However, it~can be observed that in our example,
  $A(x,y)+B(y,x)\geq (0,\mathord\leq)$ for all pairs~$(x,y)$.
\end{proof}

Still, we can bound the density of a minimal interpolant:
\begin{restatable}{lemma}{lemmasmallinterp}\label{lemma-smallinterpol}
For any pair of disjoint, non-empty zones~$(A,B)$, there exists an
interpolant of density less than or equal to~$\size{\Clocksz}/2$.
\end{restatable}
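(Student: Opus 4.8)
The plan is to exploit the DBM characterisation of emptiness recalled above: writing $A$ and $B$ in canonical form, disjointness $A\cap B=\emptyset$ means precisely that $\min(A,B)$ represents the empty zone, i.e.\ its associated constraint graph contains a negative cycle. First I would extract a \emph{simple} negative cycle $C=v_0\to v_1\to\dots\to v_k=v_0$ in that graph (a negative closed walk always contains a simple negative one), so that the vertices $v_0,\dots,v_{k-1}$ are pairwise distinct and $k\le\size{\Clocksz}$.

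Next I would label each edge of $C$ by which DBM achieves the minimum on it, calling it an $A$-edge or a $B$-edge, and group the edges into maximal runs of edges carrying the same label. Since $A$ and $B$ are non-empty, neither graph can contain a negative cycle on its own, so $C$ cannot consist of $A$-edges only, nor of $B$-edges only; hence, reading the labels around the cycle, the maximal runs alternate $A,B,A,B,\dots$, their number $2m$ is even, and exactly $m$ of them carry each label. Using canonicity I would then shortcut each run: a maximal $A$-run between two vertices can be replaced by the single corresponding entry of the canonical DBM of $A$ (and symmetrically for $B$-runs), because the triangle inequalities satisfied by a reduced DBM guarantee that this direct entry is at most the sum of the weights along the run. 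Performing all these shortcuts therefore keeps the cycle weight negative, while each shortcut edge is a genuine constraint of the corresponding zone. The $2m$ endpoints of these runs are among the distinct vertices of $C$, so $2m\le k\le\size{\Clocksz}$, which gives $m\le\size{\Clocksz}/2$.

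Finally I would let $I$ be the zone defined by the (not necessarily reduced) DBM whose only finite entries are the $m$ $A$-shortcut constraints, every other entry — including the diagonal — being set to $(\infty,<)$. Each of these $m$ constraints is an entry of the canonical DBM of $A$, hence holds throughout $A$, so $A\subseteq I$; and in $\min(I,B)$ the shortcut cycle survives, its $A$-edges being provided by $I$ and its $B$-edges by $B$, so the weights are at most the shortcut weights and the cycle stays negative, whence $I\cap B=\emptyset$. Thus $I$ is an interpolant with $\dens I=m\le\size{\Clocksz}/2$, as required.

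I expect the only delicate point to be the factor $\tfrac12$: simply retaining the $A$-edges of a simple negative cycle already yields an interpolant, but of density up to $\size{\Clocksz}$, and the whole improvement rests on the shortcutting step. That step must simultaneously collapse each maximal run into a single constraint and force the alternation, so that the resulting $2m$ run endpoints — all distinct because $C$ was chosen simple — are squeezed among the $\size{\Clocksz}$ available clocks, yielding $m\le\size{\Clocksz}/2$. Checking that shortcutting preserves negativity (via canonicity) and that the recovered cycle still witnesses $I\cap B=\emptyset$ is where I would be most careful about index conventions.
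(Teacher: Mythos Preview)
Your proposal is correct and follows essentially the same route as the paper: both arguments extract a negative cycle in $\min(A,B)$, use canonicity of $A$ and $B$ to collapse consecutive same-label edges, and exploit non-emptiness of $A$ and $B$ to force alternation, yielding at most $\size{\Clocksz}/2$ constraints from one side. The only cosmetic differences are that the paper phrases the collapsing step as ``take a shortest negative sequence and argue it is already alternating'' rather than ``take a simple cycle and shortcut runs,'' and it states the interpolant using the $B$-entries rather than the $A$-entries (which by symmetry is immaterial).
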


\begin{proof}
  Assume that $A$ and~$B$ are given as canonical DBMs, which we also
  write~$A$ and~$B$ for the sake of readability.  We prove the
  stronger result that $A\cap B=\emptyset$ if, and only~if, for
  some~$n\leq \size{\Clocksz}/2$, there exists a sequence of
  pairwise-distinct clocks $(x_i)_{0\leq i\leq 2n-1}$ such that,
  writing~$x_{2n}=x_0$, 
  \[
  \sum_{i=0}^{n-1} A(x_{2i},x_{2i+1}) + B(x_{2i+1},x_{2i+2}) < (0,\leq).
  \]
  Before proving this result, we~explain how we conclude the proof:
  the inequality above entails that
  \[
  \bigcap_{i=0}^{n-1} A(x_{2i},x_{2i+1})
  \cap
  \bigcap_{i=0}^{n-1} B(x_{2i+1},x_{2i+2}) = \emptyset
  \]
  where we abusively identify $A(x,y)$ with the half-space it
  represents.  It~follows that $\bigcap_{i=0}^{n-1}
  B(x_{2i+1},x_{2i+2})$ is an interpolant, whose density is less than
  or equal to~$\size\Clocksz/2$.
  
  Assume that such a sequence exists, and write~$C$ for the
  DBM $\min(A, B)$. Then
  \begin{xalignat*}1
    \sum_{i=0}^{2n-1} C(x_{i},x_{i+1}) &=
    \sum_{i=0}^{2n-1} \min \{A(x_{i},x_{i+1}),B(x_{i},x_{i+1})\}
    \\ &
    \le \sum_{i=0}^{n-1} A(x_{2i},x_{2i+1}) + B(x_{2i+1},x_{2i+2})
      < (0,\leq).
  \end{xalignat*}
  This entails that the intersection is empty.

  Conversely, if the intersection is empty, then there is a sequence
  of clocks $(x_i)_{0\leq i<m}$, with~$m\leq\size\Clocksz$, such that,
  letting $x_m=x_0$, we~have
  \[
  \sum_{i=0}^{m-1} \min\{A(x_i,x_{i+1}), B(x_i,x_{i+1})\} < (0,\leq).
  \]
  Consider one of the shortest such sequences.  Since~$A$ and~$B$ are
  non-empty, the sum must involve at least one element of each
  DBM. Moreover, if~it involves two consecutive elements of the same
  DBM (i.e., if $A(x_i,x_{i+1}) < B(x_i,x_{i+1})$ and
  $A(x_{i+1},x_{i+2}) < B(x_{i+1},x_{i+2})$ for some~$i$), then by
  canonicity of the DBMs of~$A$ and~$B$, we~can drop clock~$x_{i+1}$
  from the sequence and get a shorter sequence satisfying the same
  inequality, contradicting minimality of our sequence. The~result follows.
\end{proof}

By adapting the algorithm of~\cite{TM17} for computing interpolants,
we can compute minimal interpolants efficiently:
\begin{restatable}{proposition}{propniminterp}
Computing a minimal interpolant can be performed in~$O(\size{\Clocks}^4)$.
\end{restatable}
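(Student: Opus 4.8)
The plan is to reduce the computation of a minimal interpolant to the search for a shortest negative cycle in an auxiliary graph, and then to bound the cost of that search. By the proof of Lemma~\ref{lemma-smallinterpol}, when $A$ and $B$ are given as canonical DBMs the emptiness of $A\cap B$ is witnessed by an alternating negative cycle, and the interpolant it produces, namely $\bigcap_i B(x_{2i+1},x_{2i+2})$, has density equal to the number of $B$-edges the cycle traverses. So a minimal interpolant corresponds to an alternating negative cycle using as few $B$-edges as possible. To turn this into an ordinary shortest-cycle problem, I would form the min-plus product $M=A\otimes B$, that is $M(x,z)=\min_{y\in\Clocksz}\bigl(A(x,y)+B(y,z)\bigr)$, whose entries encode the least weight of an $A$-edge followed by a $B$-edge; a cycle of length $k$ in the graph weighted by~$M$ is exactly an alternating cycle with $k$ $A$-edges and $k$ $B$-edges, and its weight is the corresponding alternating sum. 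Computing a minimal interpolant thus amounts to finding a negative cycle of~$M$ with the least number of edges and reading off the $B$-constraints along it.

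For the algorithm, I would compute $M$ once in $O(\size{\Clocksz}^3)$, and then find a shortest negative cycle by the incremental dynamic program that maintains matrices $L_k$ with $L_k(x,z)$ the least weight of an alternating path of exactly $k$ blocks from $x$ to $z$, starting from $L_1=M$ and iterating $L_{k+1}=L_k\otimes M$. The first index $k$ for which some diagonal entry satisfies $L_k(x,x)<(0,\leq)$ gives the minimal number of $B$-edges, and by Lemma~\ref{lemma-smallinterpol} this index never exceeds $\size{\Clocksz}/2$. Each product costs $O(\size{\Clocksz}^3)$ and at most $\size{\Clocksz}/2$ of them are performed, so the search runs in $O(\size{\Clocksz}^4)=O(\size{\Clocks}^4)$. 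This is the announced adaptation of the interpolation procedure of~\cite{TM17}, which already extracts an interpolant from a negative cycle but does not minimise its length; standard predecessor bookkeeping during the iterations lets me reconstruct the witnessing cycle and hence the family of $B$-constraints forming the interpolant.

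Two correctness points then remain, and I expect the first to be the main obstacle. First, I must argue that a negative alternating cycle of minimal length is necessarily simple, visiting pairwise distinct clocks as in Lemma~\ref{lemma-smallinterpol}, since otherwise the reconstructed walk would not be a genuine cycle and its $B$-edges might coincide. I would establish this by the splitting argument: a repeated clock decomposes the closed walk into two shorter alternating closed walks whose weights sum to the original weight, and a short case analysis on the semiring $(\bbZ\times\{\mathord<,\mathord\leq\})$ shows that at least one summand is again strictly below $(0,\leq)$, contradicting minimality of~$k$; consecutive edges of the same DBM are excluded by canonicity exactly as in Lemma~\ref{lemma-smallinterpol}. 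Distinctness guarantees that the $k$ traversed $B$-edges are pairwise distinct, so the density is exactly~$k$. Second, for the lower bound I would show that any interpolant~$I$ of density~$d$ yields an alternating negative cycle with at most $d$ $B$-edges: from $A\cap I=\emptyset$ one obtains a negative cycle in $\min(A,I)$, and since $B\subseteq I$ forces $B(u,v)\leq I(u,v)$ on every edge, replacing each $I$-edge by the corresponding $B$-edge keeps the sum below $(0,\leq)$ while using only the at most~$d$ nontrivial entries of~$I$; the shortening of Lemma~\ref{lemma-smallinterpol} then makes the cycle alternating without increasing its $B$-count. Hence the minimal cycle length computed by the algorithm coincides with the minimal achievable density.
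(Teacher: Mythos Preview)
Your approach is correct and follows essentially the same route as the paper. Both arguments compute the minimal density by a dynamic program that counts the number of $B$-constraints needed in a negative cycle of the combined constraint graph, running $O(\size{\Clocksz})$ rounds of $O(\size{\Clocksz}^3)$ relaxations each. The paper's Algorithm~\ref{alg:interpol} maintains $N^i(x,y)$ as the minimum $\min(A,B)$-weight of a path from $x$ to~$y$ using at most~$i$ edges where $B$ beats~$A$, relaxing one $B$-edge and one $A$-edge per round; you instead precompute the block matrix $M=A\otimes B$ and iterate $L_{k+1}=L_k\otimes M$. These are equivalent reformulations of the same bounded-$B$-count shortest-path computation, both stop at the first negative diagonal, both appeal to Lemma~\ref{lemma-smallinterpol} to bound the number of rounds by $\size{\Clocksz}/2$, and both read the interpolant off the $B$-edges of the witnessing cycle.

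Your detour through simplicity is unnecessary and is the one place where your write-up is shakier than the paper's. You do not need the minimal-length negative walk to visit pairwise-distinct clocks: your own lower-bound argument already shows that every interpolant has density at least~$k^*$, so the set of $B$-edges you extract (of density at most~$k^*$) is automatically minimal, and distinctness of those edges follows a~posteriori. The paper does not discuss simplicity for the same reason. Incidentally, your splitting argument only cleanly handles repetitions among the block endpoints~$y_i$; repetitions at the intermediate vertices~$z_i$ hidden inside $M=A\otimes B$ would need a separate treatment, but as just noted this whole step can simply be dropped.
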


\begin{proof}
  \begin{figure}[t]
    \begin{algorithm}[H]
      \small
      \KwInput{canonical DBM $A$, $B$}\;
      \For{$(x,y)\in \Clocksz^2$}{%
        \eIf{$A(x,y)\leq B(x,y)$}
           {$M^0(x,y):=A(x,y)$\;}
           {$M^0(x,y):=(\infty,<)$\;}
      }
      $N^0:=\canonical(M^0)$\;
      \For{($i=1$; $i\leq \size\Clocksz/2$; $i++$)}
          {
            \For {$(x,y)\in \Clocksz^2$}
                 {
                   $M^i(x,y):= \min\{N^{i-1}(x,y),
                      \min_{z\in\SC_B(y)} N^{i-1}(x,z)+B(z,y)\}$\;
                 }
            \For {$(x,y)\in \Clocksz^2$}
                 {
                   $N^i(x,y):= \min\{M^{i}(x,y),
                      \min_{z\in\SC_A(y)} M^{i}(x,z)+A(z,y)\}$\;
                   \If{$(x=y)$ and $N^i(x,x)<(0,\mathord\leq)$}
                      {\Return ($\texttt{true}, i$)\;}
                 }
          }
          \Return \texttt{false}\;
      \caption{Algorithm for minimal interpolant}
      \label{alg:interpol}
      \label{alg-interpol}
    \end{algorithm}
\end{figure}

Algorithm~\ref{alg-interpol} describes our procedure. In~order to prove
its correctness, we~begin with proving that the sequence of DBM it computes
satisfies the following property:
\begin{lemma}
  For any~$i\geq 0$ such that $N^i$ has been computed by
  Algorithm~\ref{alg:interpol}, for any $(x,y)\in\Clocksz^2$, it~holds
  \[
  N^i(x,y) = \min_{\substack{\pi\in\Paths(x,y)\\\size\pi_B\leq i}} W_{\min(A,B)}(\pi).
  \]
\end{lemma}

\begin{proof}
The proof proceeds by induction on~$i$.  For~$i=0$, pick a
path~$\pi=(x_i)_{0\leq i\leq k}$ from~$x+0$ to~$x_k$ such that
$\size\pi_B=0$. Then
\begin{xalignat*}1
  W_{\min(A,B)}(\pi) &= \sum_{0\leq i<k} A(x_i,x_{i+1}) 
  = \sum_{0\leq i<k} M^0(x_i,x_{i+1}) \\
   &= \sum_{0\leq i<k} N^0(x_i,x_{i+1}) \geq N^0(x_0,x_k).
\end{xalignat*}
The first two equalities follow from the fact that $\pi$ only involves
transitions in~$E_{A\leq B}$; the third equality is because
canonization will not modify entries from~$A$ (since $A$ is originally
in canonical form). The~last inequality follows from canonicity of~$N^0$.

Now assume that the result holds at step~$i$, and that $N^{i+1}$ is
defined. Pick~$x$ and~$y$ in~$\Clocksz$. By~construction of~$M^{i+1}$
and~$N^{i+!}$, there exists $z$ and~$t$ in~$\Clocksz$ such that
$N^{i+1}(x,y) = N^i(x,z)+B(z,t)+A(t,y)$ with $t\in \SC_A(y)$
(or~$t=y$) and $z\in\SC_B(t)$ (or~$z=t$).  From the induction
hypothesis, there is a path~$\pi'$ from~$x$ to~$z$ such that
$N^i(x,z)=W_{\min(A,B)}(\pi')$, and~$\size{\pi'}_B\leq i$. Adding~$t$
and~$y$ to this path, we~get a path~$\pi$ from~$x$ to~$y$ such that
$N^{i+1}(x,y) = W_{\min(A, B)}(\pi)$ and $\size\pi_B\leq i+1$.

It~remains to prove that any path~$\pi$ from~$x$ to~$y$ with
$\size\pi_B\leq i+1$ is such that $N^{i+1}(x,y) \leq W_{\min(A,
  B)}(\pi)$. Fix such a path $\pi=(x_j)_{0\leq j\leq k}$\;
we~concentrate on the case where $\size\pi_B=i+1$, since the other
case follows from the induction hypothesis. We~decompose~$\pi$
as~$\pi_1=(x_j)_{0\leq j\leq l}$, $\pi_2=(x_l,x_{l+1})$ and
$\pi_3=(x_j)_{l+1\leq j\leq k}$, such that $(x_l,x_{l+1})\in E_B$ and
$(x_j,x_{j+1})\in E_A$ for all~$j>l$; in~other terms, $\pi_2$ is the
last $E_B$ transition of~$\pi$, and $\pi_3$~is a path from~$x_{l+1}$
to~$x_k$ only involving transitions in~$A$. Then
\[
w_{\min(A, B)}(\pi_2\cdot\pi_3) = B(x_l,x_{l+1}) + w_A(\pi_3)
\geq  B(x_l,x_{l+1}) + w_{A}(x_{l+1},x_k),
\]
and
\begin{itemize}
\item if $(x_{l+1},x_k)\in E_B$, then $w_{\min(A,B)}(\pi_2\cdot\pi_3)
  \geq w_B(x_l,x_k)$, and, applying the induction hypothesis, 
  $w_{\min(A, B)}(\pi) \geq N^i(x,x_l)+\min\{A(x_l,x_k), B(x_l,x_k)\}$.
  Since $M^{i+1}(x,x_l)\leq N^i(x,x_l)$, we~get
  \begin{xalignat*}1
  w_{\min(A, B)}(\pi) &\geq \min\{N^i(x,x_l)+B(x_l,x_k), M^{i+1}(x,x_l)+A(x_l,x_k)\}
  \\ &\geq N^{i+1}(x,x_k).
  \end{xalignat*}
\item if $(x_{l+1},x_k)\in E_B$, then $w_{\min(A,B)}(\pi_2\cdot \pi_3)
  \geq B(x_l,x_{l+1}) + A(x_{l+1}, x_k)$. From the induction
  hypothesis, $w_{\min(A, B)}(\pi) \geq N^i(x,x_l)+B(x_l,x_{l+1}) +
  A(x_{l+1},x_k) \geq N^{i+1}(x,x_k)$.\qed
\end{itemize}
\let\qed\relax
\end{proof}

Following the argument of the proof of Lemma~\ref{lemma-smallinterpol}, we~get:
\begin{corollary}
If $A \cap B\not=\emptyset$, then Algorithm~\ref{alg:interpol} returns
\texttt{\textup{false}}; otherwise, it~returns
$(\texttt{\textup{true}},k)$ for the smallest~$k$ such that for all
cyclic path~$\pi$ such that $\size\pi_B<k$, it~holds $w_{\min(A, B)}(\pi)\geq (0,\leq)$.
\end{corollary}
This entails that $k$ is the dimension of the minimal interpolant. The
minimal interpolant can be obtained by taking the $B$-elements of the
negative cycle found by the algorithm.
\end{proof}

\section{Enumerative Algorithm}
\label{sec-enum}
The first type of algorithm we present is a zone-based enumerative
algorithm based on the clock-predicate abstractions.  Let us first
describe the overall algorithm in Algorithm~\ref{alg:cegar}, which is a
typical abstraction-refinement loop.  We~then explain how the
abstract reachability and refinement procedures are instantiated.


\begin{figure}[ht]
  \begin{minipage}[t]{0.47\textwidth}
    \begin{algorithm}[H]
      \small
      \KwInput{$\TA = (\Locs, \invar, \ell_0, \Clocks, E)$, 
        $\ell_T$}
      Initialize~$\Dom_0$\;
      \label{alg:cegar:init}
      \wait := $\{\node(\ell_0,\vec{0}{\uparrow}, \Dom_0)\}$\;
      \label{alg:cegar:wait-init}
      \passed := $\emptyset$\;
      \While{}
      {%
        $\pi :=\AbsReach(\TA, \wait,\penalty0 \passed, \ell_T)$\;
        \eIf{$\pi = \emptyset$}{\Return Not reachable\;}%
        {%
          \eIf{trace~$\pi$ is feasible}{\label{alg:cegar:feas}\Return Reachable\;}%
          {\hbox to 10pt{$\Refine(\pi, \wait, \passed)$\;\hss}}%
        }%
      }%
      \Return Not reachable\;
      \caption{Enumerative algorithm checking the reachability of a target location~$\ell_T$.}
      \label{alg:cegar}
    \end{algorithm}
  \end{minipage}
  \hfill
  \begin{minipage}[t]{0.55\textwidth}
    \begin{algorithm}[H]
      \small
      \KwInput{$(\Locs, \invar, l_0, \Clocks, E)$, \wait, \passed, $\ell_T$}
      \While{$\wait \neq \emptyset$}
      {
        $n := \wait.pop()$\;
        \If{$n.\ell = \ell_T$}{
          \Return Trace from root to~$n$\;
        }
        \eIf{$\exists n' \in \passed \text{ such that } n.\ell = n'.\ell \land n.Z \subseteq n'.Z$}{
          $n.\covered := n'$\;
        }{
          $n.Z := \alpha(n.Z, n)$\;
          \label{alg:line:alpha}
          $\passed.add(n)$\;
          \For{$e=(\ell,g,R,\ell') \in E(n.\ell)$ s.t.
            $Z':=\post_e(n.Z) \neq \emptyset$}{
            $\Dom':=\choosedom(n,e)$\;            \label{alg:line:choose-dom}
            $n' := \node(\ell',Z',\Dom')$\;
            $n'.\parent := n$\;
            $\wait.add(n')$\;
          }
        }
      }
      \Return $\emptyset$\;
      \caption{\AbsReach
      }
      \label{alg:abs-reach}
    \end{algorithm}

  \end{minipage}
\end{figure}

The initialization at line~\ref{alg:cegar:init} chooses an abstract
domain for the initial state, which can be either empty (thus the
coarsest abstraction) or defined according to some heuristics.
The algorithm maintains the \wait and \passed lists that are used in
the forward exploration.  As~usual, the~\wait list can be implemented
as a stack, a queue, or another priority list that determines the
search order.
The algorithm also uses covering nodes. Indeed if there are two node
$n$ and $n'$, with $n\in \passed$, $n'\in \wait$, $n.\ell=n'.\ell$,
and $n'.z\subseteq n.Z$, then we know that every location reachable
from $n'$ is also reachable from $n$. Since we have already explored
$n$ and we generated its successors, there is no need to explore the
successors of $n'$.  The algorithm explicitly creates an exploration
tree: line~\ref{alg:cegar:wait-init} creates a node containing
location~$\ell_0$, zone~$\vec{0}{\uparrow}$, and the abstract
domain~$\Dom_0$ as the root of our tree, and adds this to the \wait
list.  More details on the tree are given in the next subsection.
Procedure~\AbsReach then looks for a trace to the target
location~$\ell_T$.
If~such a trace exists, line~\ref{alg:cegar:feas} checks its
feasibility.
Here $\pi$ is a sequence of node and edges of~$\TA$.  The feasibility
check is done by computing predecessors with zones starting from the
final state, without using the abstraction function.  If the last zone
intersects our initial zone, this means that the trace is
feasible. More details are given in Section~\ref{sec:refine}.

\subsection{Abstract forward reachability: \AbsReach}
We give a generic algorithm independently from the implementations of
the abstraction functions and the refinement procedure.

Algorithm~\ref{alg:abs-reach} describes the reachability procedure
under a given abstract domain~$\Dom$.  It~is similar to the standard
forward reachability algorithm using a \wait-list and a \passed-list.
We~explicitly create an exploration tree where the leaves are nodes in $\wait$, covered nodes, or nodes that have no non-empty successors. Each node~$n$ contains the
fields $\ell,Z$ which are labels describing the current location and
zone; field \covered points to a node covering the current node
(it~is undefined if the current node is not (known to~be) covered);
field \parent
points to the parent node in the tree (it~is undefined for the root);
and field~$\Dom$ is the abstract domain associated with the node.
Thus, the algorithm uses a possibly different abstract domain for
each node in the exploration tree.

The difference of our algorithm w.r.t. the standard reachability can
be seen at lines~\ref{alg:line:alpha} and~\ref{alg:line:choose-dom}.
At~line~\ref{alg:line:alpha}, we~apply the abstraction function to the
zone taken from the \wait-list before adding it to the \passed-list.
The abstraction function~$\alpha$ is a function of a zone~$Z$ and a
node~$n$.  This allows one to define variants with different
dependencies; for instance, $\alpha$ might depend on the abstract
domain $n.\Dom$ at the current node, but it can also use other
information available in~$n$ or on the path ending in~$n$.  For~now,
it~is best to think of~$\alpha$ simply as~$Z \mapsto
\alpha_{n.\Dom}(Z)$.
At line~\ref{alg:line:choose-dom}, the function \choosedom chooses an
abstract domain for the node~$n'$. 

%
%
%
%
In our implementation, the abstraction function always abstracts the
given zone w.r.t. the abstract domain~$n.\Dom$. For~\choosedom, we
considered three variants:
\begin{itemize}
\item one using a global domain, the same for all nodes: this way,
  each refinement benefits to all nodes, but this is often a drawback
  since in general different parts of the automaton will better have
  different abstract domain;
\item one using a local domain for each node: this has the
  advantage of using the coarsest possible abstraction, but it takes
  more memory and usually involves more refinements.
\item one using one domain per location of the automaton: this appears
  to be a good trade-off between the above two approaches.
\end{itemize}

\begin{remark}
Note that we use the abstraction function when the node is inserted in
the \passed list. This is because we want the node to
contain the smallest zone possible when we test whether the node is
covered. We~only need to use the abstracted zone when we
compute its successor and when we test whether the node is
covering. This allows us to store a unique zone.
\end{remark}


As a first step towards proving correctness of our algorithm, we consider that
the following property is preserved by Algorithm~\AbsReach:
\begin{equation}
  \begin{minipage}{.86\linewidth}
    For all nodes $n$ in \passed, for all edges $e$
    from~$n.\ell$, if $\postta_{e}(n.Z)\neq \emptyset$, then $n$ has a
    child~$n'$ such that $\postta_{e}(n.Z)\subseteq n'.Z$. If~$n'$ is
    in \passed, then we also have
    $\alpha_{n'.\Dom}(\postta_{e}(n.Z))\subseteq n'.Z$.
  \end{minipage}\hskip5mm
  \label{Prop1}
\end{equation}

The following is an easy observation about our algorithm:
\begin{lemma}\label{lemma-absr}
  Algorithm~\AbsReach preserves Property~\eqref{Prop1}.
\end{lemma}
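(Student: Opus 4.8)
The plan is to prove that Property~\eqref{Prop1} is an invariant of Algorithm~\AbsReach by tracking the single place in the algorithm where new nodes are created and added to \passed, namely the body of the main \textbf{while}-loop. The property is a statement about every node~$n$ that lives in the \passed list, so I would argue that it holds vacuously at initialization (when \passed is empty) and is preserved by each iteration. Since the only line that adds a node to \passed is \passed.add($n$) after the abstraction step at line~\ref{alg:line:alpha}, the real content is to show that when such an~$n$ is inserted, the required children exist (or come to exist), and that once a node is in \passed its~$Z$ field is never modified, so previously-established instances of the property cannot be broken.

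First I would observe that the $Z$ field of a node, once the node is in \passed, stays fixed: the abstraction $n.Z := \alpha(n.Z,n)$ is applied exactly once, immediately \emph{before} the node joins \passed, and no later line rewrites a \passed node's zone. This monotonicity is what lets me treat the invariant one node at a time. Next, fix a node~$n$ at the moment it is added to \passed, and fix an edge~$e$ from~$n.\ell$ with $\postta_e(n.Z)\neq\emptyset$. The \textbf{for}-loop over $e\in E(n.\ell)$ then creates a child $n' := \node(\ell',Z',\Dom')$ with $Z' := \post_e(n.Z)$ and $n'.\parent := n$, and pushes $n'$ onto \wait. This directly gives a child~$n'$ with $\postta_e(n.Z) = Z' \subseteq n'.Z$, establishing the first half of the property. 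I would note that the loop guard is precisely the nonemptiness condition $\post_e(n.Z)\neq\emptyset$, so every edge that matters does produce a child, and edges whose successor is empty are correctly excluded by the hypothesis $\postta_e(n.Z)\neq\emptyset$.

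For the second half, I would consider what happens when that child~$n'$ is itself later popped from \wait and enters \passed. At that moment line~\ref{alg:line:alpha} replaces $n'.Z$ by $\alpha_{n'.\Dom}(n'.Z)$; since abstraction is extensive ($W\subseteq\alpha_\Dom(W)$ for every zone~$W$, as $\alpha_\Dom(W)$ is by definition the smallest $\Dom$-definable zone \emph{containing}~$W$), the inclusion $\postta_e(n.Z)\subseteq n'.Z$ is preserved and moreover we obtain $\alpha_{n'.\Dom}(\postta_e(n.Z)) \subseteq \alpha_{n'.\Dom}(n'.Z) = n'.Z$ by monotonicity of~$\alpha_{n'.\Dom}$. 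The one subtlety is that $n'$ might instead be \emph{covered} rather than expanded, i.e.\ the \textbf{if} branch sets $n'.\covered$ and $n'.Z$ is left un-abstracted; but in that case $n'$ is not added to \passed, so the conditional clause ``if $n'$ is in \passed'' in the property is not triggered, and there is nothing to check. This case split between the covered branch and the expand branch of \AbsReach is the only real obstacle, and it dissolves once one notices that the stronger conclusion about~$\alpha_{n'.\Dom}$ is explicitly guarded by $n'$ being in \passed.

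The main delicacy, then, is bookkeeping rather than mathematics: I must make sure the invariant is read with the correct temporal quantifier, i.e.\ it constrains \passed nodes at any point of the execution, and that the two ``child'' obligations are discharged at the two distinct moments when $n$ enters \passed (creating the child in \wait) and when $n'$ later enters \passed (applying~$\alpha$). I would close the argument by remarking that the only operations affecting \passed and the $Z$ fields are the ones just analyzed, so no step can falsify an already-established instance of~\eqref{Prop1}; by induction on the number of loop iterations the property holds throughout, which is exactly the claim of Lemma~\ref{lemma-absr}.
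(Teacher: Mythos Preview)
Your proposal is correct and is precisely the natural unpacking of what the paper treats as ``an easy observation about our algorithm''; the paper gives no detailed proof of this lemma at all. Your careful tracking of the two moments---when $n$ enters \passed (children are created with $n'.Z=\postta_e(n.Z)$) and when a child~$n'$ later enters \passed (its zone is replaced by $\alpha_{n'.\Dom}(n'.Z)$, and extensivity/monotonicity of~$\alpha$ give the required inclusions)---together with the observation that \AbsReach never rewrites the zone of a node already in \passed, is exactly the argument one would supply, and the paper even remarks just after the lemma that \AbsReach alone would give equality rather than mere inclusion, which your analysis confirms.
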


Note that although we use inclusion in Property~\eqref{Prop1},
\AbsReach would actually preserve equality of zones, but we will not
always have equality before running \AbsReach. This is because \Refine
might change the zones of some nodes without updating the zones of all
their descendants.


\subsection{Refinement: \Refine}\label{sec:refine}
\begin{figure}[t]
  \begin{minipage}[t]{0.47\textwidth}
    \begin{algorithm}[H]
      \KwInput{$\pi$, 
        \wait, \passed}\;
        $n:= \text{last node of $\pi$}$\;
        $Z:=n.Z$\;
        $r:= \Refinerec(n,Z,\wait,\passed)$\; 
        $n_{cut}:=$ node to cut \mbox{(according to heuristics)}\;
        $\cut(n_{cut})$\;
        \uIf{$n_{cut}.Z=\emptyset$} {
        	\textit{delete} $n_{cut}$
        }
        \Else{
        	$\passed.remove(n_{cut})$\;
        	$\wait.add(n_{cut})$\;
        	$n_{cut}.Z:=\Concrete(n_{cut})$\;
        }
        \Return $r$\;
        
        \caption{\Refine}
    \end{algorithm}

  \end{minipage}
  \hfill
  \begin{minipage}[t]{0.52\textwidth}
    \begin{algorithm}[H]
      \small
      \KwInput{$n,Z,\wait,\passed$}\;
      $C:=\Concrete(n)$\;
      \uIf{$C\cap Z = \emptyset$} {

      	\Strengthen($n,Z,C,\wait$)\;\label{refine:modif1}
		\Return Not Feasible\;
	  }
      \uElseIf{$n$ has no parent}{
       		\Return Feasible \;
      	}     	
      \Else{
      		$e$ edge from $n.\parent$ to $n$\;
      		$Z':=\preta_e(Z)\cap n.\parent.Z$\;
      		\uIf{$\Refinerec(n.\parent,Z',\wait,\penalty-100\relax\passed)$= \textup{Feasible}}{
      			\Return Feasible\;
      		}\Else {
      			$C:=$\textsf{Concrete}$(n)$\;
      			\textsf{Strengthen}($n,Z,C,\wait$)\;\label{refine:modif2}
      			\Return Not Feasible\;
      		}
      }
      
      \caption{\Refinerec}
    \end{algorithm}

  \end{minipage}

    \begin{minipage}[t]{0.47\textwidth}
        \begin{algorithm}[H]
      \KwInput{$n$}\;
        \uIf{$n$ has parent} {
        	$e:=$ edge from $n.\parent$ to $n$\;
        	\Return $\postta_e(n.\parent.Z)$;
        }
        \Else{
        	\Return initial zone\;
        }

      \caption{\Concrete}
    \end{algorithm}
    \end{minipage}\hfill
    \begin{minipage}[t]{0.52\textwidth}

    \begin{algorithm}[H]
      \KwInput{$n,Z,C,\wait$}\;
      	\If{$\alpha_{n.\Dom}(C)\cap Z\neq \emptyset$}{

			$I:=$\textsf{interpolant}$(C,Z)$\;
			$n.\Dom.add(I)$\;
			
		}
		$n:=\alpha_{n.\Dom}(C)$\;
		Add every uncovered nodes to \wait\;
		      
      \caption{\Strengthen}
    \end{algorithm}
		
  \end{minipage}
\end{figure}

We now describe our refinement proecdure \Refine.
Let us now assume that \AbsReach returns $\pi = A_1
\xrightarrow[]{\sigma_1} A_2 \xrightarrow{\sigma_2} \ldots
\xrightarrow{\sigma_{k-1}} A_k$, and write $\Dom_i$ for the domain
associated with each~$A_i$. We~write $C_1$ for the initial concrete zone, and
for $i< k$, we~define $C_{i+1}=\postta_{\sigma_i}(A_i)$. We~also note $Z_k=A_k$
and for $i<k$, $Z_i= \preta_{\sigma_i}(Z_{i+1}) \cap A_i$. Then
$\pi$ is not feasible if, and only~if,
$\postta_{\sigma_1\ldots\sigma_k}(C_1)=\emptyset$, or equivalently
$\preta_{\sigma_1\ldots\sigma_k}(A_k)\cap C_1=\emptyset$. Since
for all $i<k$, it~holds $C_i\subseteq A_{i+1}$, we~have that
$\pi$ is not feasible if, and only~if,
$\exists i\leq k.\ C_i\cap Z_i=\emptyset$.
We~illustrate this on Fig.~\ref{fig-spurious}.

\begin{figure}[ht]
  \centering
  \begin{tikzpicture}
    \path[use as bounding box] (0,-.6) -- (7.6,2);
    \begin{scope}
    \draw[clip] (.8,.8) ellipse (1cm and 1.4cm);
    \draw[rounded corners=2mm,fill=black!10!white] (2.5,0.8) --
      (1,0.8) -- (.8,1) --
      (.9,1.7) -- (1.6,1.7) -- cycle;
    \draw[rounded corners=2mm,fill=black,opacity=.2] (.5,0) --
    (1.2,-.2) node[coordinate,above=1mm] (A1) {} -- (1.4,.3) --
    (.8,.9) node[coordinate,below=1mm] (B1) {} --
    (.2,.2) -- cycle;
    \draw[rounded corners=2mm] (.5,0) --
    (1.2,-.2) node[coordinate,above=1mm] (A1) {} -- (1.4,.3) --
    (.8,.9) node[coordinate,below=1mm] (B1) {} --
    (.2,.2) -- cycle;
    \path (1.4,1.4) node {$Z_1$};
    \path (.7,.3) node {$C_1$};
    \path (.5,1.6) node {$A_1$};
    \end{scope}
    \begin{scope}[xshift=3cm]
      \draw[clip] (.8,.8) ellipse (1cm and 1.4cm);
      \draw[rounded corners=2mm,fill=black!10!white] (2.5,0.3) --
      (1,0.2) -- (.8,.5) --
      (.9,1.2) -- (1.6,1.2) -- cycle;
      \draw[rounded corners=2mm,fill=black,opacity=.2] (.5,0) --
      (1.2,-.2)  node[coordinate,above=1mm] (A2) {} -- (1.4,.3) --
      (.8,.9) node[coordinate,below=1mm] (B2) {} --
      (.2,.2) -- cycle;
      \draw[rounded corners=2mm] (.5,0) --
      (1.2,-.2)  node[coordinate,above=1mm] (A2) {} -- (1.4,.3) --
      (.8,.9) node[coordinate,below=1mm] (B2) {} --
      (.2,.2) -- cycle;
      \path (1.4,.9) node {$Z_2$};
      \path (.7,.3) node {$C_2$};
      \path (.5,1.6) node {$A_2$};
    \end{scope}
    \begin{scope}[xshift=6cm]
    \draw[fill=black!10!white] (.8,.8) ellipse (1cm and 1.4cm);
      \draw[rounded corners=2mm,fill=black,opacity=.2] (.5,0) --
      (1.2,-.2)   node[coordinate,above=1mm] (A3) {} -- (1.4,.3) --
      (.8,.9) node[coordinate,below=1mm] (B3) {} --
      (.2,.2) -- cycle; 
      \draw[rounded corners=2mm] (.5,0) --
      (1.2,-.2)   node[coordinate,above=1mm] (A3) {} -- (1.4,.3) --
      (.8,.9) node[coordinate,below=1mm] (B3) {} --
      (.2,.2) -- cycle;
      \path (.7,.3) node {$C_3$};
    \path (.5,1.6) node {$A_3\hbox to 0pt{${}=Z_3$}$};
    \end{scope}

    \draw[opacity=.6,dashed] (1,2.2) -- (B2);
    \draw[opacity=.6,dashed] (1,-.6) -- (A2);
    \draw[opacity=.6,shorten >=15mm,shorten <=8mm,-latex'] (1,-.6) -- (A2)
      node[below,pos=.4] {$\postta$};

    \draw[opacity=.6,dashed] (4,2.2) -- (B3);
    \draw[opacity=.6,dashed] (4,-.6) -- (A3);
    \draw[opacity=.6,shorten >=15mm,shorten <=8mm,-latex'] (4,-.6) -- (A3)
      node[below,pos=.4] {$\postta$};

    \draw[opacity=.6,dotted] (6.7,2.2) -- (4.1,1.2);
    \draw[opacity=.6,dotted] (6.7,-.6) -- (4,0.25);
    \draw[opacity=.6,shorten >=12mm,shorten <=8mm,-latex']
      (6.7,2.2) -- (4.1,1.2) node[above,pos=.4] {$\preta$};

    \draw[opacity=.6,dotted] (4.5,1.2) -- (1.5,1.7);
    \draw[opacity=.6,dotted] (4,0.25) -- (1,0.85);
    \draw[opacity=.6,shorten >=3mm,shorten <=17mm,-latex']
      (4.5,1.2) -- (1.5,1.7) node[above,pos=.7] {$\preta$};

  \end{tikzpicture}
  \caption{Spurious counter-example: $Z_1 \cap C_1 = \emptyset$}
  \label{fig-spurious}
\end{figure}


Let us assume that $\pi$ is not feasible. Let us denote by $i_0$ the maximal
index such that $C_{i_0}\cap Z_{i_0}=\emptyset$. This index also has
the property that for all $j < i_0$, we~have $Z_j=\emptyset$ and $Z_{i_0}\neq
\emptyset$. Once we have identified this trace as spurious by
computing the $Z_j$, we have two possibilities:

\begin{itemize}
\item if $Z_{i_0} \cap \alpha_{\Dom_{i_0}}(C_{i_0})\neq \emptyset$: this
  means that we can reach $A_k$ from $\alpha_{\Dom_{i_0}}(C_{i_0})$
  but not from $C_{i_0}$. In other words, our abstraction is too
  coarse and we must add some values to $\Dom_{i_0}$ so that $Z_{i_0}
  \cap \alpha_{\Dom_{i_0}}(C_{i_0})= \emptyset$. Those values are
  found by computing the interpolant of $Z_{i_0}$ and $C_{i_0}$

\item Otherwise it means that $\alpha_{\Dom_{i_0}}(C_{i_0})$ cannot
  reach $A_k$ and the only reason the trace exists is because either
  $\Dom_{i_0}$ or $A_{i_0 -1}$ has been modified at some point and
  $A_{i_0}$ was not modified accordingly.
\end{itemize}

We can then update the values of $C_i$ for $i>i_0$ and repeat the
process until we reach an index~$j_0$ such that $C_{j_0}=
\emptyset$. We~then have modified the nodes $n_{i_0},\ldots ,n_{j_0}$ and
knowing that $n_{j_0}.Z=\emptyset$, we~can delete it and all of its
descendants. Since some of the descendants of $n_{i_0}$ have not been
modified, this might cause some refinements of the first type in the
future. In~order to ensure termination, we~sometimes have to cut a
subtree from a node in $n_{i_0},\ldots ,n_{j_0-1}$ and reinsert it in
the \wait list to restart the exploration from there. We~call this
action~\cut, and we can use several heuristics to decide when to
use~it. In~the rest of this paper we will use the following
heuristics:
we~perform~\cut on the first node
of $n_{i_0}...n_{j_0}$ that is
covered by some other node.
Since this node is covered, we know that we will not restart the
exploration from this node, or that the node was covered by one of its
descendant. If~none of these nodes are covered, we~delete~$n_{j_0}$ and
its descendants.
Other heuristics are possible, for instance applying \cut
on~$n_{i_0}$. We found that the above heuristics was the most efficient
in our experiments.

\begin{restatable}{lemma}{lemmainclusion}
\label{lemma:inclusion}
  Pick a node~$n$, and let $Y=n.Z$. Then after running~\Refine,
  either node~$n$ is deleted, or it holds  $n.Z\subseteq Y$.
  In~other words, the~zone of a node can only be reduced by \Refine.
\end{restatable}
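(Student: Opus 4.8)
The plan is to enumerate the places where \Refine writes to a node's $Z$ field and to show each such write can only shrink the zone (or delete the node). Reading the pseudocode, $n.Z$ is modified in exactly three situations: (i)~inside \Strengthen, where $n.Z$ becomes $\alpha_{n.\Dom}(\Concrete(n))$ after possibly enlarging $n.\Dom$ with an interpolant; (ii)~in the last block of \Refine, where $n_{cut}$ is reinserted into \wait with $n_{cut}.Z := \Concrete(n_{cut})$; and (iii)~deletion, which is the first alternative of the statement. A node touched by none of these keeps $n.Z=Y$, so the inequality holds trivially; it remains to treat the strengthened nodes and $n_{cut}$.

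I will use three elementary monotonicity facts, all immediate from the definitions of Section~\ref{section:definitions}: $\alpha_\Dom$ is extensive ($Z\subseteq\alpha_\Dom(Z)$) and monotone; enlarging the domain shrinks the abstraction ($\Dom\subseteq\Dom'$ implies $\alpha_{\Dom'}(Z)\subseteq\alpha_\Dom(Z)$, since every $\Dom$-definable zone is $\Dom'$-definable); and $\postta_e$ is monotone. The essential input is Property~\eqref{Prop1}, which by Lemma~\ref{lemma-absr} holds when \Refine is entered: for a node $n'\in\passed$ that is the $e$-child of its parent $p$, it yields $\alpha_{n'.\Dom}(\postta_e(p.Z))\subseteq n'.Z$, that is $\alpha_{n'.\Dom}(\Concrete(n'))\subseteq Y_{n'}$ evaluated at the start-of-\Refine state (so in particular $\Concrete(n')\subseteq Y_{n'}$ there).

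The heart of the proof is the order in which \Strengthen fires. Following the recursion of \Refinerec along the spurious path $A_1\xrightarrow{\sigma_1}\cdots\xrightarrow{\sigma_{k-1}}A_k$, no zone changes on the way down; strengthening is performed only from the deepest node with $C\cap Z=\emptyset$, say $n_{i_0}$, back up to $n_k$, so the nodes $n_{i_0},\ldots,n_k$ are strengthened parent-before-child. I then induct along this order. For $n_{i_0}$ the parent is untouched, so the recomputed $\Concrete(n_{i_0})$ still equals its start-of-\Refine value; since the domain only grew, the new zone $\alpha_{n_{i_0}.\Dom}(\Concrete(n_{i_0}))$ is contained in the start-of-\Refine value of $\alpha_{n_{i_0}.\Dom}(\Concrete(n_{i_0}))$, hence in $Y_{n_{i_0}}$ by Property~\eqref{Prop1}. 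For a child $n_{i+1}$, the parent $n_i$ has already been shrunk to $n_i.Z\subseteq Y_{n_i}$, so by monotonicity of $\postta$ the recomputed $\Concrete(n_{i+1})=\postta_{\sigma_i}(n_i.Z)$ is contained in its start-of-\Refine value; composing with monotonicity of $\alpha$, the domain-refinement inequality, and Property~\eqref{Prop1} gives $n_{i+1}.Z\subseteq Y_{n_{i+1}}$.

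It remains to note that $n_{cut}$ is one of $n_{i_0},\ldots,n_k$, hence already shrunk; the reinsertion overwrites $n_{cut}.Z$ with the un-abstracted $\Concrete(n_{cut})\subseteq\alpha_{n_{cut}.\Dom}(\Concrete(n_{cut}))$, which is smaller still and remains $\subseteq Y_{n_{cut}}$, while nodes whose subtrees are cut off are deleted and covered by the first alternative. The main obstacle, and the reason a naive ``every write shrinks $Z$'' claim fails, is precisely that $\Concrete(n)$ is recomputed from the parent's \emph{current} zone, which may itself have been modified earlier in the same \Refine call; this dependence is what forces the induction along the parent-before-child strengthening order, anchored at the start-of-\Refine instance of Property~\eqref{Prop1} guaranteed by Lemma~\ref{lemma-absr}.
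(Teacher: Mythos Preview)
Your argument is correct and follows the same two ingredients the paper uses—refining $\Dom$ can only shrink $\alpha_\Dom$, and $\postta_e$ is monotone—carried through by induction along the trace; you simply make explicit the parent-before-child order in which \Strengthen fires and the role of Property~\eqref{Prop1}, both of which the paper's two-sentence proof leaves implicit.

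One small caveat worth patching: your inductive step invokes the $\alpha$-strengthened clause of Property~\eqref{Prop1}, which only applies when the child is in $\passed$. The last node~$n_k$ of the trace is the target and is returned by \AbsReach before being added to $\passed$, so that clause is unavailable at~$n_k$. This does not break the proof: the strengthening invariant $n_i.Z^{\text{new}}\cap Z_i=\emptyset$ propagates down to give $C_k^{\text{new}}\cap Z_k=\emptyset$, while the weak clause of Property~\eqref{Prop1} still yields $C_k^{\text{new}}\subseteq Y_k=Z_k$; hence $C_k^{\text{new}}=\emptyset$, so $n_k.Z^{\text{new}}=\emptyset\subseteq Y_k$ (and~$n_k$ is then deleted by \cut anyway).
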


\begin{proof}
\Refine may only add values to the domain of a
node, so that the refined zone is included in the previous one.
%
If~no values were added to the domain of a node, then its parent must
have been modified.  Since $A\subseteq B \Rightarrow
\postta_e(A)\subseteq \postta_e(B)$, the result follows
by induction.
\end{proof}

It follows that \Refine also preserves Property~\eqref{Prop1}, so that:
\begin{restatable}{lemma}{lemmacomplete}
  \label{lemma:complete}
  Algorithm~\ref{alg:cegar} satisfies Property~\eqref{Prop1}.
\end{restatable}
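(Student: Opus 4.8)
The plan is to read Property~\eqref{Prop1} as a loop invariant of Algorithm~\ref{alg:cegar} and prove it by induction on the number of iterations of the main loop, using the fact that each iteration does nothing but call \AbsReach and \Refine. Immediately after initialization the list \passed is empty, so the property holds vacuously: this is the base case. For the inductive step, Lemma~\ref{lemma-absr} already states that \AbsReach preserves the property, so the entire argument reduces to showing that \Refine preserves it too. Throughout I would lean on Lemma~\ref{lemma:inclusion}, which guarantees that \Refine can only shrink the zone of a surviving node (or delete the node outright), never enlarge it.

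The conceptual heart of the \Refine case is that both inclusions in Property~\eqref{Prop1} are monotone in the parent zone and anti-monotone in the child zone. Fix a node $n$ surviving in \passed and an edge $e$ from $n.\ell$ with $\postta_e(n.Z)\neq\emptyset$ after \Refine. By Lemma~\ref{lemma:inclusion}, $n.Z_{\mathrm{new}}\subseteq n.Z_{\mathrm{old}}$, so $\postta_e(n.Z_{\mathrm{new}})\subseteq\postta_e(n.Z_{\mathrm{old}})$ and $\alpha_{n'.\Dom}(\postta_e(n.Z_{\mathrm{new}}))\subseteq \alpha_{n'.\Dom}(\postta_e(n.Z_{\mathrm{old}}))$ by monotonicity of $\postta$ and $\alpha$. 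Hence shrinking a parent zone can only make the two left-hand sides smaller, and if the required child $n'$ still exists with an unchanged zone and domain, both inclusions are inherited from the pre-\Refine instance of the property. The only delicate case is a child $n'$ whose own zone was reduced. But every such reduction along the spurious path is done by \Strengthen, which sets $n'.Z:=\alpha_{n'.\Dom}(\Concrete(n'))=\alpha_{n'.\Dom}(\postta_e(n'.\parent.Z))$; since $\postta_e(n'.\parent.Z)\subseteq\alpha_{n'.\Dom}(\postta_e(n'.\parent.Z))=n'.Z$, this re-establishes the first inclusion and gives the second with equality. For this to refer to the correct parent zone I would check that \Refinerec strengthens nodes in ancestor-to-descendant order: the recursion descends to the spurious index, strengthens there, and then strengthens each node in turn on the way back toward the leaf, so that when a child is strengthened its parent already carries its final reduced zone.

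It then remains to account for the structural modifications. The node $n_{j_0}$ satisfies $\Concrete(n_{j_0})=\emptyset$, hence after strengthening $n_{j_0}.Z=\emptyset$; deleting it together with its whole subtree is harmless, since the only parent edge pointing to it (the edge $\sigma_{j_0-1}$ out of $n_{j_0-1}$) now has empty post and so imposes no obligation, and removing an entire subtree cannot deprive a surviving \passed node of a required child. The \cut-and-reinsert branch is analogous: $n_{cut}$ is removed from \passed and placed in \wait with $n_{cut}.Z:=\Concrete(n_{cut})=\postta_e(n_{cut}.\parent.Z)$, so the first inclusion for the edge into $n_{cut}$ holds trivially while the second is no longer demanded (as $n_{cut}\notin\passed$); its descendants are again cut as a full subtree. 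Finally, for off-path children $m$ of a strengthened node $n$, neither $m.Z$ nor $m.\Dom$ is touched, so the inherited inclusions of the second paragraph apply directly, and existence of the required child is preserved because $\postta_e(n.Z_{\mathrm{new}})\neq\emptyset$ forces $\postta_e(n.Z_{\mathrm{old}})\neq\emptyset$, which provided a child that survives the subtree removals.

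The step I expect to be the main obstacle is not conceptual but organizational: strengthening a path node leaves its off-path children with stale, now-too-large zones, and one must verify carefully that this never breaks the invariant. This is exactly why Property~\eqref{Prop1} is phrased with inclusion rather than equality, and the monotonicity argument shows the weaker inclusion survives. The remaining care is in pinning down the order in which \Refinerec strengthens nodes, in tracking which \Concrete values refer to already-updated parent zones, and in enumerating precisely which nodes are deleted, reinserted into \wait, or merely left with a refined domain, so that in each case both the existence of the required child and the two inclusions can be certified.
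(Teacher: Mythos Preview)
Your proposal is correct and follows essentially the same approach as the paper: both reduce the claim to showing that \Refine preserves Property~\eqref{Prop1} (invoking Lemma~\ref{lemma-absr} for \AbsReach), and both argue by a case analysis on whether a child node is strengthened, deleted, or left untouched while its parent shrinks, using Lemma~\ref{lemma:inclusion} and the monotonicity of $\postta$ and $\alpha$. Your write-up is more explicit about the strengthening order and the handling of \cut and off-path children, but the structure and key ideas coincide with the paper's proof.
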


\begin{proof}
  We~prove that procedure \Refine preserves
  property~\eqref{Prop1}. Combined with Lemma~\ref{lemma-absr}, this
  entails the result.
  First notice that \Refine may not add new nodes. 
  Let~$n'$ be a node, $n$ its parent, and $e$ the edge from~$n$ to~$n'$.
  Three cases may arise:
  \begin{itemize}
  \item $n'$ has been modified: then it must have been modified at
    line \ref{refine:modif1} or \ref{refine:modif2}. At this point $n$
    is no longer modified, and \Refine ensures that
    $\Concrete(n)\subseteq n'.Z$, and
    $\Concrete(n)=\postta_{e}(n.Z)$. If $n'$ is in \passed, we also
    have $\alpha_{n'.\Dom}(\postta_{e}(n.Z))= n'.Z$;
  \item $n'$ has been deleted: in~this case, if $n$ is part of the
    subtree and it has either been deleted or moved to \wait, and is
    not in \passed anymore. Otherwise $n$ is not part of the subtree
    that has been cut and $n'$ is the root of this subtree, with
    $n'.Z=\emptyset$ and since $ \Concrete(n)\subseteq n'.Z$;
  \item $n'$ has not been modified, but $n$~has: then
    using Lemma~\ref{lemma:inclusion}, we~know that the inclusion
    $\postta_{e}(n.Z)\subseteq n'.Z$ (or
    $\alpha_{n'.\Dom}(\postta_{e}(n.Z))\subseteq n'.Z$) is preserved
    by \textsf{Refine}.\qed
\end{itemize}
\let\qed\relax
\end{proof}

We can then prove that our algorithm correctly decides the reachability problem
and always terminates.
\begin{restatable}{theorem}{thmalgoenum}
  Algorithm~\ref{alg:cegar} terminates and
  is correct.
\end{restatable}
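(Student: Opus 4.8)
The plan is to prove the two assertions separately and then combine them: \emph{partial correctness} (every answer returned is correct) together with \emph{termination} yields that the algorithm decides reachability of~$\ell_T$. Partial correctness is the routine part and follows from the invariants already in place, whereas termination is the substantial part and is where the structure of timed automata must be used.

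\emph{Partial correctness.} First I would treat the two possible outputs. The answer \textsf{Reachable} is produced at line~\ref{alg:cegar:feas} only after the explicit feasibility test, which recomputes concrete predecessors along~$\pi$ \emph{without} ever applying~$\alpha$; a positive test therefore exhibits an actual run of~$\TA$ ending in~$\ell_T$, so the answer is sound. For the answer \textsf{Not reachable}, returned when \AbsReach empties its \wait-list without encountering~$\ell_T$, I would invoke Lemma~\ref{lemma:complete}: Property~\eqref{Prop1} holds on termination. Since the \wait-list is empty, every child produced by a passed node lies in \passed or is covered by a passed node carrying a larger zone, so the family $\{\, n.\ell\,\}\times n.Z$ for $n\in\passed$ is closed under concrete successors and contains the initial configuration. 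As $\alpha$ only enlarges zones, an induction on the length of a run of~$\TA$ shows that every reachable configuration belongs to some $n.Z$ with $n\in\passed$; since no such~$n$ satisfies $n.\ell=\ell_T$, location~$\ell_T$ is unreachable.

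\emph{Termination.} I would argue in two layers. For a \emph{fixed} assignment of abstract domains to nodes, \AbsReach terminates: each zone $n.Z$ is $n.\Dom$-definable, a finite domain admits only finitely many definable zones, hence there are finitely many pairs $(\ell,Z)$, and any branch long enough to repeat a pair is stopped by the covering test, making the exploration tree finite. The crux is then to bound the number of refinements. Let~$M$ be the largest constant occurring in a guard or invariant of~$\TA$, and maintain the invariant that every constant appearing in any domain is drawn from the finite set determined by~$M$. Under this invariant each abstract zone~$A_i$ uses constants bounded by~$M$, and since the backward zones are obtained by $Z_i=\preta_{\sigma_i}(Z_{i+1})\cap A_i$, whose only operations are intersections with guards and invariants (constants~$\le M$) and bound-relaxations, every $Z_i$ also uses constants bounded by~$M$. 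Now the interpolant added in \Strengthen is $\textsf{interpolant}(C,Z)$, and by the characterisation established for Algorithm~\ref{alg:interpol} this interpolant is built from the entries of its second argument~$Z$; hence its constants are bounded by~$M$, preserving the invariant. Consequently all domains remain inside a fixed finite ceiling domain, and since every refinement that adds an interpolant strictly enlarges some domain within this ceiling, only finitely many such refinements can occur.

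It remains to exclude an infinite interleaving of \AbsReach with refinements of the second kind — those that add no constraint but \cut a subtree and reconcretise it. Here I would use Lemma~\ref{lemma:inclusion}: every modification performed by \Refine can only shrink the zone of a node. Between two constraint-adding refinements all domains are fixed and bounded, so node zones range over a finite set; each reconcretisation followed by \AbsReach strictly decreases this finite multiset of zones in the associated well-founded order and therefore cannot recur forever. Combining the two layers gives termination, which together with partial correctness proves that Algorithm~\ref{alg:cegar} correctly decides reachability of~$\ell_T$. The~main obstacle is exactly the constant bound~$M$ on the interpolants: it is what keeps the abstract domains from growing without bound, and it is the single point at which decidability of timed-automaton reachability — rather than a generic CEGAR soundness argument — is genuinely used.
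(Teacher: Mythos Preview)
Your partial-correctness argument matches the paper's and is fine. For termination, however, you take a different route from the paper, and that route has a genuine gap.

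The paper does not bound the number of calls to \Refine{} globally via a constant ceiling~$M$. Instead it argues that (i)~each individual node can be modified only finitely many times, (ii)~hence the nodes at every fixed depth eventually stabilise, so non-termination would force the exploration tree to have unbounded depth, and (iii)~an infinite branch would then contain two nodes $n_1,n_2$ with identical~$(\ell,Z)$, the deeper one covered by the shallower; the specific \cut{} heuristic (cut at the \emph{first covered} node among $n_{i_0},\ldots,n_{j_0}$) then forces a contradiction, since $n_2$ would have been cut the last time it was modified and hence could not be in \passed{} with children. The \cut{} heuristic is thus load-bearing in the paper's proof, and you never invoke it.

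Your argument instead tries to show there are only finitely many calls to \Refine{} altogether: finitely many constraint-adding ones because the domains live below a fixed ceiling determined by~$M$, and finitely many non-constraint-adding ones by a ``multiset of zones strictly decreases'' claim. The first part is plausible, though your justification glosses over canonicalisation: $\preta_e$ and $\postta_e$ are not merely intersections-and-relaxations, they also re-canonicalise, and one must argue separately that canonical entries of a \emph{non-empty} zone with bounded defining constraints remain bounded. The second part is where the real gap lies. Between two constraint-adding refinements the domains are fixed, but \Strengthen{} explicitly re-inserts freshly uncovered nodes into \wait, and \AbsReach{} can then grow entirely new subtrees from them. Hence the multiset of zones over all existing nodes is \emph{not} monotone under a \Refine{}${}+{}$\AbsReach{} round, and your well-founded-order claim does not go through as stated. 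To close this you would need either a more carefully chosen potential (for instance, a depth-by-depth lexicographic profile, which is essentially what the paper's stabilisation argument builds) or an explicit appeal to the \cut{} heuristic to rule out repeated-$(\ell,Z)$ branches---which is precisely the step the paper supplies and you omit.
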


\begin{proof}
We~first prove correctness, assuming termination.  First let us notice
that if the \wait set is empty, then for any reachable location~$l$,
there is a node $n$ such that $n.\ell = l$. This is because
we over-approximate the zones as shown in Lemma~\ref{lemma:complete},
so we over-approximate the set of reachable states.
%
Thus, if $\AbsReach(\TA$, \wait, \passed, $\ell_T)$ returns
$\emptyset$, then $\ell_T$ is not reachable in~$\TA$. In~other words,
if the enumerative algorithm returns "\emph{Not reachable}" then
$\ell_T$ is indeed not reachable.

On the other hand, if the algorithm returns "\emph{Reachable}", it
means that there is a feasible trace reaching~$\ell_T$, and
$\ell_T$~is indeed reachable.

\medskip

%
We~now prove termination.
Since there are a finite number of possible locations and we can limit
the number of possible zones to a finite number using abstraction
functions, we can deduce that \AbsReach terminates.

Let us assume that the enumerative algorithm does not terminate. Then
it means that \Refine is called infinitely many timed. Note that
\Refine is modifying a node and a node can be modified a
finite number of~time.

We can also note that a node can be destroyed only if one of its
ancestors is modified. As~such, we~can show that for every depth~$k$,
there is a point in the algorithm where every node at depth~$k$ or
less is fixed and will no longer be modified.

So we know that the algorithm does not terminate if, and only~if,
the depth of the
resulting tree is unbounded. This means that there exists a path where
we have two distinct nodes $n_1$ and~$n_2$ with $n_1.\ell = n_2.\ell$
and $n_1.Z = n_2.Z$, since the number of location and possible zones is
finite. Without loss of generality, we can assume that $n_1$ is an ancestor of
$n_2$ and $n_2$ is the parent of another node. This is only possible if
$n_2$ is in \passed, which means that $n_2$ was in \wait and was
not covered at some point. Since a zone can only be modified to be
smaller, this means that $n_2$ has been modified at some
point. Otherwise $n_2$ has always been covered by~$n_1$, which is not
possible. Since $n_2$ has been modified, and it is covered (at~least
by~$n_1$), this means that \cut has been called on~$n_2$ last
time its zone has been modified. This means that $n_2$ has no children
and is not in \passed, contradicting our assumption.
Hence the algorithm always terminates.
\end{proof}

\section{Symbolic Algorithm}
\label{sec-symb}

\subsection{Boolean encoding of zones}
We now present a symbolic algorithm that represents abstract states
using Boolean formulas.  Let $\bbB=\{0,1\}$, and~$\calV$ be a set of
variables.  A~Boolean formula~$f$ that uses variables from
set~$X\subseteq\calV$ will be written~$f(X)$ to make the dependency
explicit; we~sometimes write~$f(X,Y)$ in place~of~$f(X\cup Y)$.
Such~a formula represents a set $\sem{f} = \{v\in
\bbB^{\calV} \mid v\models f\}$.  We~consider primed versions of all
variables; this~will allow us to write formulas relating two valuations.
For~any subset~$X\subseteq\calV$, we~define $X' = \{p' \mid p \in X\}$.

A~\emph{literal} is either~$p$ or $\lnot p$ for a variable~$p$.  Given
a set~$X$ of variables, an~\emph{$X$-minterm} is the conjunction of
literals where each variable of~$X$ appears exactly once. $X$-minterms can
be seen as elements of~$\mathbb{B}^X$.

Given a vector of Boolean formulas~$Y=(Y_x)_{x \in X}$,
formula~$f[Y/X]$ is the \emph{substitution of~$X$ by~$Y$ in~$f$},
obtained by replacing each~$x \in X$ with the formula~$Y_x$.
The~positive cofactor of~$f(X)$ by~$x$ is~$ \exists x.\ (x \land
f(X))$, and its negative cofactor is~$\exists x.\ (\lnot x \land
f(X))$.

Let us define a generic operator~$\postop$ that computes successors of a
set~$S(X,Y)$ given a relation~$R(X,X')$ (here, $Y$~designates any set
of variables on which~$S$ might depend outside of~$X$):
\(
\postop_R(S(X,Y)) = (\exists X. S(X,Y) \land R(X,X'))[X/X']
\).
Similarly, we~set
\(
\preop_R(S(X,Y)) = (\exists X'. S(X,Y)[X'/X] \land R(X,X'))
\),
which computes the predecessors of~$S(X,Y)$ by the relation~$R$~\cite{McM93}.

\paragraph{Clock predicate abstraction.}
We~fix a total order~$\avant$ on~$\Clocksz$. 
%
In this section, abstract domains are defined as $\Dom =
(\Dom_{x,y})_{{x\avant y} \in \Clocksz}$, that is only for pairs $x\avant y$. 
In~fact, constraints of the form~$x-y \leq k$ with $x\apres y$ are
encoded using the negation of $y-x<-k$ since $(x-y\leq k)
\Leftrightarrow \lnot(y-x <-k$).  We thus define~$\Dom_{x,y} =
-\Dom_{y,x}$ for all~$x \apres y$.


For~$x,y \in \Clocksz$, let~$\Pred_{x,y}$ denote the set of
\emph{clock predicates associated to~$\Dom_{x,y}$}:
\[
	\Pred_{x,y}^\Dom = \{ \clPred{x-y\prec k}\mid (k,\prec) \in \Dom_{x,y}\}.
\]
Let~$\Pred^\Dom = \cup_{x,y \in \Clocksz} \Pred_{x,y}$ denote the set
of all clock predicates associated with~$\Dom$ (we~may omit the
superscript~$\Dom$ when it is clear).  For all~$(x,y) \in \Clocksz^2$
and~${(k,\prec) \in \Dom_{x,y}}$, we~denote by~$\clpred{x-y\prec k}$
the literal $\clPred{x-y\prec k}$ if~$x\avant y$, and~$\lnot \clPred{y-x
  \prec^{-1} -k}$ otherwise (where $\mathord\leq^{-1}=\mathord<$ and
$\mathord<^{-1}=\mathord\leq$).
We also consider a set~$\Bits$ of Boolean variables used to encode
locations.  Overall, the state space is described using Boolean
formulas on these two types of variables, so states are elements of
$\bbB^{\Pred\cup\Bits}$.

Our Boolean encoding of clock constraints and semantic operations
follow those of~\cite{SB-cav03} for a concrete
domain. We~define these however for abstract domains, and show how all
successor computation and refinement operations can be performed.

Let us define the \emph{clock semantics} of predicate~$\clPred{x-y\preceq
  k}$ as $\sem{\clPred{x-y \preceq k}}_{\Clocksz} = \{\nu \in
\mathbb{R}_{\geq 0}^\Clocksz \mid \nu(x) - \nu(y) \preceq k\}$.  Since
the set~$\Clocks$ of clocks is fixed, we~may omit the subscript and
just write $\sem{\clPred{x-y\preceq k}}$.
We~define the conjunction, disjunction, and negation as intersection,
union, and complement, respectively.
Given a $\Pred$-minterm~$v \in \bbB^\Pred$,
we define~$\sem{v}_\Dom = \bigcap_{p \text{ s.t. } v(p)} \sem{p}_\Dom \cap \bigcap_{p \text{ s.t. } \lnot v(p)} \sem{p}_\Dom^c$.
Thus, negation of a predicate encodes its complement.
For a Boolean formula~$F(\Pred)$, we~set
$\sem{F} = \bigcup_{v \in \minterms(F)} \sem{v}_\Dom$.
Intuitively, the minterms of~$\Pred$ define smallest zones of~$\Realnn^\Clocks$ definable using~$\Pred$.
A~minterm $v \in \bbB^{\calB\cup\Pred}$ defines a pair
$\sem{v}_\Dom = (l,Z)$ where~$l$~is encoded by~$v_{|\calB}$ and~$Z=\sem{v_{|\Pred}}_\Dom$.
A~Boolean formula~$F$ on~$\calB \cup \Pred$ defines a set 
$\sem{F}_\Dom = \cup_{v \in \minterms(F)} \sem{v}_\Dom$ of such pairs.
A~minterm~$v$ is \emph{satisfiable} if~$\sem{v}_\Dom \neq \emptyset$.


An abstract domain~$\Dom$ induces an \emph{abstraction 
function}~$\alpha_\Dom\colon {2^{\Realnn^\Clocks}} \rightarrow 2^{\bbB^\Pred}$
with~$\alpha_\Dom(Z) = \{ v \mid v \in \bbB^\Pred\text{ and }
\sem{v}_\Dom \cap Z \neq \emptyset\}$,
from the set of zones to the set of subsets of Boolean valuations on~$\Pred$.
We~define the \emph{concretization function} as~$\sem{\cdot}_\Dom \colon 2^{\bool^\Pred} \rightarrow {2^{\Realnn^\Clocks}}$.
The~pair~$(\alpha_\Dom, \sem{\cdot}_\Dom)$ is a Galois connection, and
$\sem{\alpha_\Dom(Z)}_\Dom$ is the most precise abstraction of~$Z$ in the domain induced by~$\Dom$.
Notice that $\alpha_\Dom$ is non-convex in general:
for instance, if the clock predicates are~$x\leq 2,y\leq 2$, then the
set defined by the constraint~$x=y$ maps to $(\clpred{x\leq 2} \land
\clpred{y\leq 2}) \lor (\lnot \clpred{x\leq 2} \land \lnot
\clpred{y\leq 2})$, which is not convex.



\subsection{Reduction} 
\label{section:reduction}

We now define the reduction operation, which is similar to the
reduction of DBMs.  The~idea is to eliminate unsatisfiable minterms
from a given Boolean formula.  For~example, we~would like to make sure
that in all minterms, if~$p_{x-y\leq 1}$ holds, then so
does~$p_{x-y\leq 2}$, when both are available predicates.  Another
issue is to eliminate minterms that are unsatisfiable due to triangle
inequality.  This is similar to the shortest path computation used to
turn DBMs in canonical form.

Let a \emph{path} in~$\Dom$ be a sequence
$x_1,(\alpha_1,\mathord{\prec_1}),x_2,(\alpha_2,\mathord{\prec_2}), \ldots,
x_k,(\alpha_k,\mathord{\prec_k}),x_{k+1}$ where~$x_1,\ldots,x_{k+1} \in
\Clocksz$, and~$(\alpha_i,\mathord{\prec_i}) \in \Dom_{x_i,x_{i+1}}$ for $1\leq
i\leq k$.  Let us define $\Paths_k^\Dom(x-y\prec \alpha)$ as the set
of paths from~$x$ to~$y$ of length~$k$ and weight at
most~$(\alpha,\mathord\prec)$, that~is,
paths~$x_1,(\alpha_1,\mathord{\prec_1}),x_2,(\alpha_2,\mathord{\prec_2}), \ldots,
x_k,(\alpha_k,\mathord{\prec_k}),x_{k+1}$ with~$x_1=x$, $x_{k+1} = y$,
and~$\sum_{i=1}^k (\alpha_i,\mathord{\prec_i}) \leq (\alpha,\mathord\prec)$.
We~also
denote~${\Paths_{\leq k}^\Dom(x-y\prec \alpha)} = \bigcup_{l\leq k}
\Paths_l^\Dom(x-y\prec \alpha)$.  For a
path~$\pi=x_1,(\alpha_1,\mathord{\prec_1}),\ldots,x_{k+1}$ and minterm~$v$, let
us write~$v \models \pi$ for the statement $v \models \wedge_{i=1}^k
\clpred{x_i - x_{i+1}\prec \alpha_i}$.

A~minterm~$v \in \bbB^{\Pred}$ is \emph{$k$-reduced} if for all 
$(x,y) \in \Clocksz^2$ and~$(\alpha,\mathord\prec) \in \Dom_{x,y}$, 
for all~$\pi = x_1,{(\alpha_1,\mathord{\prec_1})},x_2, \ldots,x_{k+1} \in \Paths_{\leq k}(x-y\prec \alpha)$,
whenever $v \models \pi$, we also have $v \models \clpred{x_1 -
  x_{k+1}\prec \alpha}$ for all~$(\alpha,\mathord\prec) \in
\Dom_{x_1,x_{k+1}}$ with $(\alpha,\mathord\prec) \geq \sum_{i=1}^k
(\alpha_i,\mathord{\prec_i})$.
Thus, in a~$k$-reduced minterm, no~contradictions can be obtained via
paths of length less than or equal to~$k$.  Observe that
$|\Clocksz|$-reduction is equivalent to satisfiability since the
condition then includes all paths, and it is known that in the absence
of negative cycles, a~set of difference constraints is
satisfiable. Furthermore, for the concrete domain, $k$-reduction is
equivalent to reduction for any~$k\geq 2$.  A formula is said to be
$k$-reduced if all its minterms are $k$-reduced.

\begin{example}
  \label{example:reduced}
  Given predicates $\Pred = \{\clpred{x-y\leq 1}, \clpred{y-z\leq 1},
  \clpred{x-z\leq 2}\}$, the formula $\clpred{x-y\leq 1} \land
  \clpred{y-z\leq 1}$ is not reduced since it contains the
  unsatisfiable minterm $\clpred{x-y\leq 1} \land \clpred{y-z\leq 1}
  \land \lnot \clpred{x-z\leq 2}$.  However, the same formula is
  reduced if~$\Pred = \{\clpred{x-y\leq 1}, \clpred{y-z\leq 1}\}$.

  Consider now the predicate set~$\calP = \{\clpred{x-y\leq 1},
  \clpred{y-z\leq 1}, \clpred{z-w\leq 3}, \clpred{x-w\leq 5}\}$, and
  consider the formula~$\phi = \clpred{x-y\leq 1}\land \clpred{y-z\leq
    1} \land \clpred{z-w\leq 3}$ which is~$2$-reduced.  Notice that
  the reduction of~$\phi$ is $\clpred{x-y\leq 1}\land \clpred{y-z\leq
    1} \land \clpred{z-w\leq 3} \land \clpred{x-w\leq 5}$ since the
  last predicate is implied by the conjunction of others.  However,
  $\phi$ is $2$-reduced since no path of length~$2$ allows to
  deduce~$\clpred{x-w \leq 5}$.  In~the concrete domain,
  $\clpred{x-y\leq 1} \land \clpred{y-z \leq 1}$ would imply
  $\clpred{x-z\leq 2}$, thus, $\clpred{x-w\leq 5}$ could be derived~too.
  It~is indeed because of the abstract domain that $2$-reduction
  might fail to capture all shortest paths.
\end{example}

In this paper, we only consider $2$-reduction since computing
reductions is the most expensive operation in our algorithms, and the
formula below defining $2$-reduction already tends to grow in~size.
Let us define $\reduce_\Dom^2$ as follows
\[
  \bigwedge_{\substack{(x,y) \in \Clocksz^2\\ (k,\mathord{\prec}) \in \Dom_{x,y}}}
  \Biggl[\clpred{x-y \prec k} \leftarrow
  \Bigl(
    \bigvee_{
      \substack{
        (l_1,\mathord{\prec_1}) \in \Dom_{x,y}\\
        (l_1,\mathord{\prec_1}) \leq (k,\mathord{\prec})
      }
    } \clpred{x-y\prec_1 l_1} \lor \!\!\!\!
    \bigvee_{
      \substack{z \in \Clocksz,
        (l_1,\mathord{\prec_1}) \in \Dom_{x,z}, \\
        (l_2,\mathord{\prec_2}) \in \Dom_{z,y}\\
        (l_1,\mathord{\prec_1})+(l_2,\mathord{\prec'_2})\leq (k,\mathord{\prec})
      }
    }
    \!\!\!\!
    \clpred{x-z\prec_1 l_1} \mathrel{\wedge} \clpred{z-y\prec_2 l_2}
  \Bigr)\Biggr]
\]
The formula intuitively applies shortest paths over
paths of length $1$ or~$2$.


\begin{restatable}{lemma}{lemmareduce}
\label{lemma:reduce}
  For all formulas~$S(\Pred)$,
  we have $\sem{S}_\Dom = \sem{\reduce_\Dom^2(S)}_\Dom$
  and all minterms of~$\reduce_\Dom^2(S)$ are $2$-reduced.
\end{restatable}


\begin{proof}
  It is easy to see that $\sem{\reduce_\Dom^k(S)}_\Dom \subseteq \sem{S}_\Dom$.
  In fact, any minterm of the former is a minterm of~$S$ as well by definition.

  To see the converse, consider~$v \in \minterms(S)$. We show
  that~$\sem{v}_\Dom \subseteq \sem{\reduce_\Dom^k(S)}_\Dom$.
  If~$\sem{v}_\Dom = \emptyset$ then the inclusion holds trivially.
  Otherwise, we must have $v \in \minterms(\reduce_\Dom^k(S))$. In
  fact, we show that all minterms~$v \not \in
  \minterms(\reduce_\Dom^k(S))$ satisfy~$\sem{v} = \emptyset$.
  Consider such a minterm~$v$. There must exist~$x,y \in \Clocksz$ and
  $(k,\prec_k) \in \Dom_{x,y}$ such that~$\lnot \clpred{x-y\prec k}$ but
  the right hand side of the implication holds.  But this implies
  that~$\sem{v} = \emptyset$.

  The fact that all miterms of $\reduce_\Dom^k(S)$ are $k$-reduced
  follow by the definition of the operator.
\end{proof}

\begin{example}
  Note that~$\reduce_\Dom^2(S)$ can still contain valuations that are
  unsatisfiable.  Consider $\Pred=\{ \clpred{x-y\leq 1}, \clpred{y-z\leq 1},
  \clpred{x-z\leq 4}, \clpred{z-x\leq -3}\}$.  Then the minterm~$u$ that sets
  all predicates to true is still contained in $\reduce_\Dom^2(S)$
  although~$\sem{u} = \emptyset$ since~$x-y\leq 1 \land y-z\leq 1$
  implies~$x-z\leq 2$ which contradicts~$z-x\leq -3$.  Here, adding
  the predicate~$\clpred{x-z\leq 2}$ or~$\clpred{x-z<3}$ would render the
  abstraction precise enough to eliminate this valuation
  in~$\reduce_\Dom^2(S)$.
\end{example}

Let us see how an abstraction can be refined so that the reduced constraint
eliminates a given unsatisfiable minterm.

\begin{restatable}{lemma}{lemmaemptyrefine}
\label{lemma:empty-refine}
  Let~$v \in \bbB^{\Pred^\Dom}$ be a minterm such
  that~$v \models \reduce_\Dom^2$ and
  $\sem{v} = \emptyset$.
  One~can compute in polynomial time a refinement~$\Dom' \supset \Dom$
  such that~$v \not \models \reduce_{\Dom'}^2$.
\end{restatable}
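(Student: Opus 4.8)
The plan is to exploit the DBM characterization of emptiness and then to \emph{telescope} the resulting negative cycle into a length-two contradiction that $2$-reduction can detect. First I would read the minterm $v$ as a complete system of difference constraints: each literal of $v$ over $\Pred^\Dom$ asserts a bound on some difference, a true predicate $\clpred{x-y\prec k}$ giving $x-y\prec k$ and a false one giving the complementary bound on $y-x$. Since $\sem{v}=\emptyset$, this system is unsatisfiable, so its constraint graph contains a negative cycle; taking a shortest one and dropping repeated clocks (as in the proof of Lemma~\ref{lemma-smallinterpol}) yields pairwise-distinct clocks $x_0,x_1,\dots,x_{m-1}$ with $x_m=x_0$ and $m\le\size\Clocksz$ such that, writing $(a_i,\prec_i)$ for the bound asserted by $v$ on $x_i-x_{i+1}$, one has $\sum_{i=0}^{m-1}(a_i,\prec_i)<(0,\le)$.

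Next I would establish the reusable technical fact that a length-two negative cycle is always caught by $2$-reduction, and deduce $m\ge 3$ from $v\models\reduce^2_\Dom$. Indeed, such a cycle would consist of two asserted bounds $\clpred{x_0-x_1\prec_0 a_0}$ and $\clpred{x_1-x_0\prec_1 a_1}$ with $(a_0,\prec_0)+(a_1,\prec_1)<(0,\le)$; the second literal is the negation of the predicate $\clpred{x_0-x_1\prec_1^{-1}-a_1}\in\Pred^\Dom$, which is therefore false in $v$, while $(a_0,\prec_0)\le(-a_1,\prec_1^{-1})$ makes $\clpred{x_0-x_1\prec_0 a_0}$ a length-one witness on the right-hand side of that predicate's $\reduce^2_\Dom$ clause, contradicting $v\models\reduce^2_\Dom$.

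The heart of the argument is the refinement. For $j=2,\dots,m-1$ I would add to $\Dom_{x_0,x_j}$ (together with its mirror entry in $\Dom_{x_j,x_0}$, following the $\Dom_{x,y}=-\Dom_{y,x}$ convention) the predicate $\clpred{x_0-x_j\prec s_j}$ whose bound is the prefix weight $(s_j,\cdot)=\sum_{i=0}^{j-1}(a_i,\prec_i)$; call the result $\Dom'$, so $\Pred^{\Dom'}\supset\Pred^\Dom$. I claim no extension of $v$ to $\Pred^{\Dom'}$ satisfies $\reduce^2_{\Dom'}$. By telescoping: the length-two path $x_0\to x_1\to x_2$ has weight exactly $(s_2,\cdot)$ and both edges are asserted by $v$, so the $\reduce^2_{\Dom'}$ clause for $\clpred{x_0-x_2\prec s_2}$ forces it true; inductively, the path $x_0\to x_{j-1}\to x_j$ (the forced shortcut plus the original edge, of weight $(s_j,\cdot)$) forces $\clpred{x_0-x_j\prec s_j}$ true up to $j=m-1$. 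Finally $x_0\to x_{m-1}\to x_0$ is a length-two cycle of weight $(s_{m-1},\cdot)+(a_{m-1},\prec_{m-1})=\sum_i(a_i,\prec_i)<(0,\le)$, which by the length-two fact violates the clause of the false predicate $\clpred{x_0-x_{m-1}\prec_{m-1}^{-1}-a_{m-1}}$, its new shortcut being a length-one witness since $(s_{m-1},\cdot)<(-a_{m-1},\prec_{m-1}^{-1})$. Any extension that instead sets a new predicate against this forcing already violates the clause that forces it, so in every case $v\not\models\reduce^2_{\Dom'}$.

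For complexity I would note that the cycle is read off from the canonical form of the DBM of $v$ in $O(\size\Clocksz^3)$ and that at most $\size\Clocksz$ predicates are added, so the refinement is polynomial. The step I expect to be most delicate is the bookkeeping around the $\avant$ ordering and the $-\Dom$ convention: checking that each shortcut $\clpred{x_0-x_j\prec s_j}$ is a genuine predicate literal of $\Dom'$, that the complement-direction predicate used in the final clause already lies in $\Pred^\Dom$, and that the weight inequalities translate correctly between a constraint and the complement of its reversed predicate. Once the length-two detection fact is isolated as a small auxiliary lemma, it underpins both the bound $m\ge 3$ and the closing contradiction, and the rest is the telescoping induction.
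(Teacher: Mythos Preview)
Your proposal is correct and follows essentially the same approach as the paper: build the DBM of~$v$, extract a negative cycle, add the prefix-sum predicates $x_0-x_j\prec s_j$ along that cycle, and telescope so that $\reduce^2_{\Dom'}$ derives a length-two contradiction. Your presentation is in fact tighter than the paper's on two points: you explicitly treat the question of how a $\Pred^\Dom$-minterm extends to $\Pred^{\Dom'}$, and you locate the final contradiction precisely at the clause of the false predicate $\clpred{x_0-x_{m-1}\prec_{m-1}^{-1}-a_{m-1}}$, whereas the paper's closing sentences (with $s_m=s_1$) are somewhat loosely phrased. The paper additionally lists the single-edge predicates $s_i-s_{i+1}\le D(s_i,s_{i+1})$ among those to be added, but since $D(s_i,s_{i+1})$ is by construction already an element of $\Dom_{s_i,s_{i+1}}$, this is redundant; your omission of it is harmless.
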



\begin{proof}
  Consider a (non-canonial) DBM~$D$ that encodes~$v$. Formally, for
  all~$(x,y) \in \Clocksz^2$, $D(x,y) = \min \{ (k,\mathord\prec)
  \mid \clpred{x - y \prec k} \in \Pred_{x,y}\text{ and } v(\clpred{x - y \prec
    k}) = 1\}$.  The~corresponding graph must have a negative
  cycle~$s_1s_2\ldots s_m$ with $s_m = s_1$.  To~define~$\Dom'$,
  we~add the following predicates to~$\Dom$:
  \begin{itemize}
    \item $s_i - s_{i+1} \leq D(s_i,s_{i+1})$ for all~$1\leq j \leq m-1$.
    \item $s_1 - s_{j} \leq D(s_1,s_2) + D(s_2,s_3) + \ldots + D(s_{j-1},s_{j})$
      for~$1\leq j\leq m-1$.
    \end{itemize}
    Intuitively, along the negative cycle, we are adding predicates to
    represent exactly each single step, and also each big step from
    $s_1$ to~$s_j$. This allows to derive the negative cycle using
    only paths of length~$2$.

    More precisely, $v \land \reduce^2_{\Dom'}$ implies the following
    two predicates: $s_1 - s_m \leq \sum_{j=1}^{m-1} D(s_j,s_{j+1})$,
    and $s_m - s_1 \leq D(s_m,s_1)$ as implied by~$v$.  Since
    $-D(s_m,s_1) > \sum_{j=1}^{m-1} D(s_j,s_{j+1})$ (due to the
    negative cycle), the~first predicate entails that $\lnot \clpred{s_1 -
      s_m \leq -D(s_m,s_1)}$, which contradicts the second one. Hence~$v
    \not \models \reduce^2_{\Dom'}$.
\end{proof}

\subsection{Successor Computation}
In this section, we explain how successor computation is realized in
our encoding.
For a guard~$g$, assume we have computed an
abstraction~$\alpha_\Dom(g)$ in the present abstract domain. 
%
For each transition~$\sigma=(\ell_1,g,R,\ell_2)$, let us define the
formula~$T_\sigma = \ell_1 \land \alpha_\Dom(g)$.  We~show how each
basic operation on zones can be computed in our BDD encoding.
In~our algorithm, all formulas~$A(\Bits,\Pred)$ representing sets of
states are assumed to be reduced, that~is, $A(\Bits,\Pred) \subseteq
\reduce_\Dom^2(A(\Bits,\Pred))$.


\medskip
The~intersection operation is simply logical conjunction.

\begin{restatable}{lemma}{lemmainter}
\label{lemma:inter}
  For all reduced formulas~$A(\Pred), B(\Pred)$,
  $A(\Pred) \land B(\Pred) = \alpha_\Dom(\sem{A(\Pred)}_\Dom \cap \sem{B(\Pred)}_\Dom)$.
\end{restatable}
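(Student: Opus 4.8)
The plan is to move everything to the level of minterms and their geometric meaning, using the fact that minterms behave like the cells of a partition of~$\Realnn^\Clocks$. First I would record the structural observation that each point $\nu \in \Realnn^\Clocks$ determines a unique minterm $v_\nu \in \bbB^\Pred$, namely the one with $v_\nu(p)=1$ iff $\nu \in \sem{p}_\Dom$, and that $\nu \in \sem{w}_\Dom$ holds exactly when $w = v_\nu$. Two consequences follow immediately: distinct minterms have disjoint concretizations, and for every formula $F(\Pred)$ the set $\sem{F}_\Dom$ is the union of the cells $\sem{v}_\Dom$ of its \emph{satisfiable} minterms, unsatisfiable ones contributing the empty set.

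Next I would compute the concretization of the conjunction. Since $\minterms(A \land B) = \minterms(A) \cap \minterms(B)$, distributing the two unions in $\sem{A}_\Dom \cap \sem{B}_\Dom$ and cancelling the cross terms $\sem{u}_\Dom \cap \sem{w}_\Dom = \emptyset$ for $u \neq w$ yields $\sem{A \land B}_\Dom = \sem{A}_\Dom \cap \sem{B}_\Dom$. Thus the geometric set denoted by the left-hand side of the claimed identity is exactly $\sem{A}_\Dom \cap \sem{B}_\Dom$, and, being an intersection of unions of cells, this set is itself a union of cells.

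I would then apply $\alpha_\Dom$ to this union of cells. For any minterm~$v$, the definition gives $v \in \alpha_\Dom(\sem{A}_\Dom \cap \sem{B}_\Dom)$ iff $\sem{v}_\Dom$ meets $\sem{A}_\Dom \cap \sem{B}_\Dom$; because the latter is a union of cells and cells are pairwise disjoint, this happens exactly when $v$ is satisfiable and $\sem{v}_\Dom \subseteq \sem{A}_\Dom \cap \sem{B}_\Dom$, i.e.\ exactly when $v$ is a satisfiable minterm of $A \land B$. Hence $\alpha_\Dom(\sem{A}_\Dom \cap \sem{B}_\Dom)$ is precisely the set of satisfiable minterms of $A \land B$, whose concretization is again $\sem{A}_\Dom \cap \sem{B}_\Dom$. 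Comparing with the previous paragraph closes the argument.

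The delicate point, and the one I expect to require care, is the status of unsatisfiable minterms: the Boolean formula $A \land B$ may still carry minterms~$v$ with $\sem{v}_\Dom = \emptyset$, whereas $\alpha_\Dom(\cdot)$ never returns such minterms. The identity must therefore be read at the level of the denoted sets $\sem{\cdot}_\Dom$ (equivalently, of satisfiable minterms), which is exactly the convention under which the formulas manipulated by the algorithm are interpreted. The reducedness hypothesis, via Lemma~\ref{lemma:reduce}, is what makes this reading consistent with how $A$ and $B$ are stored; I would stress that $2$-reduction only removes the unsatisfiable minterms detectable by paths of length at most two, not all of them, so the elimination of the remaining spurious minterms is genuinely performed by $\alpha_\Dom$ rather than by reduction alone.
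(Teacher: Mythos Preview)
Your argument is correct and follows essentially the same route as the paper's proof: both establish the two inclusions at the level of minterms, relying on the fact that each valuation~$\nu$ determines a unique minterm and hence that distinct minterms have disjoint concretizations. You are in fact more explicit than the paper about the status of unsatisfiable minterms---the paper's step ``$v = \alpha_\Dom(\sem{v})$'' silently assumes $\sem{v}_\Dom\neq\emptyset$, whereas you correctly flag that the identity must be read at the level of denoted sets (equivalently, satisfiable minterms), which is the only reading under which it holds given that $2$-reduction need not eliminate all unsatisfiable minterms.
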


\begin{proof}
  Consider~$v \in \minterms(A(\Pred) \land B(\Pred))$.  Then $v \in
  \minterms(A(\Pred)) \cap \minterms(B(\Pred))$.  So~$\sem{v}
  \subseteq \sem{A(\Pred)} \cap \sem{B(\Pred)}$, and~$v =
  \alpha_\Dom(\sem{v}) \subseteq \alpha_\Dom(\sem{A(\Pred)}_\Dom \cap
  \sem{B(\Pred)}_\Dom)$.

  We will now show $\alpha_\Dom(\sem{A(\Pred)}_\Dom \cap
  \sem{B(\Pred)}_\Dom) \subseteq A(\Pred) \land B(\Pred)$, which is
  equivalent to $\sem{A(\Pred)}_\Dom \cap \sem{B(\Pred)}_\Dom
  \subseteq \sem{A(\Pred) \land B(\Pred)}$.  Consider any clock
  valuation~$\nu$ in the LHS, and let~$v = \alpha_\Dom(\nu)$.
  Since~$\nu \in \sem{A(\Pred)}_\Dom$ and~$\nu \in
  \sem{B(\Pred)}_\Dom$, we must have $v \in A(\Pred) \land B(\Pred)$,
  so $\nu \in \sem{A(\Pred) \land B(\Pred)}$.
\end{proof}

\medskip
For the time successors, we define
\[
S_{\Up} = 
 \bigwedge_{\substack{x \in \Clocks\\ (k,\prec) \in \Dom_{x,0}}}
 (\lnot \clpred{x-0\prec k} \rightarrow \lnot \clprpr{x-0\prec k})
 \bigwedge_{\substack{x,y \in \Clocksz, x\neq 0\\ (k,\prec) \in \Dom_{x,y}}}
 (\clprpr{x-y\prec k} \leftrightarrow \clpred{x-y \prec k}).
 \]
Note that this relation is not a function: for~$x\leq 0,
y=0$, if~$\lnot \clpred{x-y\prec k}$, then necessarily $\lnot
\clprpr{x-y\prec k}$; but otherwise both truth values for~$\clprpr{x-y\prec
  k}$ are allowed.  In~fact, the~formula only says that all lower
bounds on clocks and diagonal constraints must be preserved.
We~let $\Up(A(\Bits,\Pred)) = \reduce(\postop_{S_\Up}(A(\Bits,\Pred)))$.

\begin{figure}[t]
  \centering
  \begin{tikzpicture}
    \begin{scope}[scale=.45]
      \fill[black!40] (0,0) -- (2,0) -- (2,1) -- (1,1)-- cycle;
      \draw[-latex',line width=.6pt] (0,0) -- (5.5,0);
      \draw[-latex',line width=.6pt] (0,0) -- (0,4.5);
      \node at (0,5) {$y$};
      \node at (6,0) {$x$};
      \begin{scope}[line width=0.1pt]
        \draw[-] (1,0) -- (1,4.5);
        \draw[-] (2,0) -- (2,4.5);
        \draw[-] (3,0) -- (3,4.5);
        \draw[-] (4,0) -- (4,4.5);
        \draw[-] (5,0) -- (5,4.5);
        \draw[-] (0,1) -- (5.5,1);
        \draw[-] (0,2) -- (5.5,2);
        \draw[-] (0,3) -- (5.5,3);
        \draw[-] (0,4) -- (5.5,4);
        \draw[-] (0,0) -- (4.5,4.5);
        \draw[-] (1,0) -- (5.5,4.5);
        \draw[-] (2,0) -- (5.5,3.5);
        \draw[-] (3,0) -- (5.5,2.5);
        \draw[-] (0,1) -- (3.5,4.5);
        \draw[-] (0,2) -- (2.5,4.5);
        \draw[-] (0,3) -- (1.5,4.5);
      \end{scope}
      \begin{scope}[line width=1.5pt,color=red]
        \draw[-] (0,1) -- (5.5,1);
        \draw[-] (2,0) -- (2,4.5);
        \draw[-] (3,0) -- (3,4.5);
        \draw[-] (0,0) -- (4.5,4.5);
        \draw[-] (4,0) -- (5.5,1.5);
      \end{scope}
    \end{scope}
  \end{tikzpicture}
\hfil
  \begin{tikzpicture}
    \begin{scope}[scale=.45]
      \draw[-latex',line width=.6pt] (0,0) -- (5.5,0);
      \draw[-latex',line width=.6pt] (0,0) -- (0,4.5);
      \fill[pattern=north west lines, pattern color=blue]
      (0,0) -- (4.5,4.5) -- (5.5,4.5) -- (5.5,1.5) -- (4,0) -- cycle;
      \node at (0,5) {$y$};
      \node at (6,0) {$x$};
      \begin{scope}[line width=0.1pt]
        \draw[-] (1,0) -- (1,4.5);
        \draw[-] (2,0) -- (2,4.5);
        \draw[-] (3,0) -- (3,4.5);
        \draw[-] (4,0) -- (4,4.5);
        \draw[-] (5,0) -- (5,4.5);
        \draw[-] (0,1) -- (5.5,1);
        \draw[-] (0,2) -- (5.5,2);
        \draw[-] (0,3) -- (5.5,3);
        \draw[-] (0,4) -- (5.5,4);
        \draw[-] (0,0) -- (4.5,4.5);
        \draw[-] (1,0) -- (5.5,4.5);
        \draw[-] (2,0) -- (5.5,3.5);
        \draw[-] (3,0) -- (5.5,2.5);
        \draw[-] (0,1) -- (3.5,4.5);
        \draw[-] (0,2) -- (2.5,4.5);
        \draw[-] (0,3) -- (1.5,4.5);
      \end{scope}
      \begin{scope}[line width=1.5pt,color=red]
        \draw[-] (0,1) -- (5.5,1);
        \draw[-] (2,0) -- (2,4.5);
        \draw[-] (3,0) -- (3,4.5);
        \draw[-] (0,0) -- (4.5,4.5);
        \draw[-] (4,0) -- (5.5,1.5);
      \end{scope}
    \end{scope}
  \end{tikzpicture}
  \caption{Time successors: the successors of the gray zone on the left are the
  union of all blue dashed zones on the right.}
  \label{fig:up}
\end{figure}

\begin{example}
Figure~\ref{fig:up} shows this operation applied to the gray zone on the left.
The overapproximation is visible in this example. In~fact,
the blue dashed zone defined by $Z = 3< x< 5 \land 0\leq y< 1 \land x-y< 4$
(on the bottom right) is computed as a successor of the gray zone
although no point of the gray zone can actually reach~$Z$ by a time delay.
Adding more diagonal constraints to the abstract domain, for instance, 
$x-y\leq 2$ would eliminate this successor.
\end{example}

\begin{restatable}{lemma}{lemmaup}
\label{lemma:up}
	For any Boolean formula~$A(\Bits, \Pred)$,
    $\alpha_\Dom(\timesucc{\sem{A}})\subseteq \Up(A)$.
    Moreover, if~$\Dom$ is the concrete domain and $A$ is reduced, 
    then this holds with equality.    
\end{restatable}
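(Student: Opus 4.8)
The plan is to prove the inclusion and the equality separately, and to reduce each to a statement about individual \emph{satisfiable} minterms. The one preliminary fact I would isolate is that a minterm with nonempty semantics is automatically $2$-reduced: if such a minterm entails the premises of a length-$\leq 2$ shortest-path implication, then, having a nonempty zone, it must entail the conclusion as well. Hence a satisfiable minterm~$w$ satisfies $\reduce_\Dom^2$, and for it $w\models\postop_{S_\Up}(A)$ is equivalent to $w\models\Up(A)=\reduce_\Dom^2(\postop_{S_\Up}(A))$. Since by definition every minterm of $\alpha_\Dom(\timesucc{\sem{A}})$ meets the nonempty set $\timesucc{\sem{A}}$ and is therefore satisfiable, it suffices throughout to compare $\alpha_\Dom(\timesucc{\sem{A}})$ with the unreduced image $\postop_{S_\Up}(A)$ on satisfiable minterms, invoking Lemma~\ref{lemma:reduce} only to know that reduction changes neither the semantics nor the satisfiable minterms.

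For the inclusion $\alpha_\Dom(\timesucc{\sem{A}})\subseteq\Up(A)$ I would argue pointwise. Take $v'\in\alpha_\Dom(\timesucc{\sem{A}})$ together with a witness $\nu'\in\sem{v'}_\Dom\cap\timesucc{\sem{A}}$, and write $\nu'=\nu+d$ with $d\geq 0$ and $\nu\in\sem{A}_\Dom$ by definition of $\timesucc$. Letting $v$ be the unique minterm with $\nu\in\sem{v}_\Dom$, we have $v\models A$. It then remains to check $(v,v')\models S_\Up$, which I would verify conjunct by conjunct directly on the valuations $\nu$ and $\nu'$: every conjunct of $S_\Up$ relates a single difference constraint on $\nu$ to the same constraint on $\nu'$, and each such conjunct holds because delaying by~$d$ leaves every difference $\nu(x)-\nu(y)$ unchanged and can only increase each individual clock value $\nu(x)$. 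Thus $v'\in\postop_{S_\Up}(A)$, and since $\nu'$ witnesses $\sem{v'}_\Dom\neq\emptyset$, the first paragraph upgrades this to $v'\in\Up(A)$. Locations play no role here, as $S_\Up$ constrains only $\Pred$.

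For the equality I would now assume $\Dom$ concrete and $A$ reduced, and prove the reverse inclusion $\Up(A)\subseteq\alpha_\Dom(\timesucc{\sem{A}})$. Fix $v'\in\Up(A)$. Being a minterm of a reduced formula over the concrete domain, $v'$ is $2$-reduced, hence satisfiable, so $\sem{v'}_\Dom\neq\emptyset$; and $v'\in\postop_{S_\Up}(A)$ provides a minterm $v\models A$ with $(v,v')\models S_\Up$, which is likewise satisfiable because $A$ is reduced and $\Dom$ concrete. The goal is to show $\sem{v'}_\Dom\subseteq\timesucc{\sem{v}_\Dom}$; since $\sem{v}_\Dom\subseteq\sem{A}_\Dom$ this yields $\emptyset\neq\sem{v'}_\Dom\subseteq\timesucc{\sem{A}}$ and therefore $v'\in\alpha_\Dom(\timesucc{\sem{A}})$. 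The inclusion $\sem{v'}_\Dom\subseteq\timesucc{\sem{v}_\Dom}$ I would obtain from the DBM characterization of time-successors: for a nonempty zone, $\timesucc{\sem{v}_\Dom}$ is exactly the set of valuations satisfying the lower bounds and the differences of~$v$, with all upper bounds on clocks dropped; and the conjuncts of $S_\Up$ force $v'$ to retain every lower bound and every difference of~$v$, so every valuation of $\sem{v'}_\Dom$ already satisfies those constraints.

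The hard part will be the two places where concreteness is genuinely used, and I would treat them as the heart of the proof. The first is the implication ``$2$-reduced $\Rightarrow$ satisfiable'', valid only in the concrete domain (where $k$-reduction coincides with reduction for $k\geq 2$): it is exactly what guarantees that no spurious, empty minterm survives in $\Up(A)$, whereas in a genuine abstract domain such a minterm can remain, which is why only $\subseteq$ holds in general and why $\Up$ strictly over-approximates in the example following the statement. The second is the entrywise identification of $\timesucc{\sem{v}_\Dom}$ with ``lower bounds and differences of~$v$'': here I must be careful that the constraints read off the satisfiable minterm~$v$ are already canonical, so that dropping the clock upper bounds neither loosens a retained lower bound nor tightens a difference, which is again a concrete-domain phenomenon. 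Everything else --- in particular the bookkeeping of the literal encoding $\clpred{x-y\prec k}$ for pairs $x\apres y$ --- I would dispatch uniformly by always reasoning about constraint satisfaction by concrete valuations rather than about the Boolean literals themselves.
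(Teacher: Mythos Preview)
Your proposal is correct and follows essentially the same approach as the paper: both directions are proved by passing to minterms, verifying the conjuncts of~$S_\Up$ directly from the effect of time delay on a concrete valuation, and for the converse using the DBM description of~$\timesucc$. Your explicit isolation of the fact that satisfiable minterms automatically satisfy~$\reduce_\Dom^2$ (hence that one may work with $\postop_{S_\Up}(A)$ on such minterms) is a welcome clarification of a step the paper leaves implicit via Lemma~\ref{lemma:reduce} and the Galois connection, but the argument is otherwise the same.
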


\begin{proof}
  We show that 
  $\alpha_\Dom(\timesucc{\sem{A}})\subseteq \Up(A)$,
  which is equivalent to 
  $\timesucc{\sem{A}}\subseteq \sem{\Up(A)}_\Dom$.
  Let~$\nu' \in \timesucc{\sem{A}}$, and let~$\nu \in \sem{A}$ such that
  $\nu' = \nu + d$ for some~$d\geq 0$.
  We write~$u = \alpha_\Dom(\nu)\in \minterms(A)$, and~$v = \alpha_\Dom(\nu')$.
  We now show that~$(u,v) \in S_\Up$. This suffices to prove the inclusion since
  $v \in \minterms(\Up(A))$ implies that~$\nu' \in \sem{v}_\Dom \subseteq \sem{\Up(A)}_\Dom$.
  Observe that $\alpha_\Dom(\nu)$ and~$\alpha_\Dom(\nu')$ satisfy the same diagonal predicates
  of the form~$p_{x-y\prec \alpha}$ with~$x,y \neq 0$, since~$\nu' = \nu+d$.
  Moreover, any lower bound satisfied by~$u$ is also satified by~$v$. Thus, $(u,v) \in S_\Up$.

  \smallskip
  Assume now that~$A$ is reduced. We show that $\Up(A) \subseteq \alpha_\Dom(\timesucc{\sem{A}})$.
  Let~$v \in \minterms(\Up(A))$, and let~$(u,v) \in S_\Up$ with~$u \in \minterms(A)$.
  We claim that~$\sem{v} \subseteq \timesucc{\sem{u}}$.
  This suffices to prove the inclusion since 
  $\{v\} = \alpha_\Dom(\sem{v}) \subseteq \alpha_\Dom(\timesucc{\sem{u}}) \subseteq \alpha_\Dom(\timesucc{\sem{A}})$.
  To~prove the claim, it~suffices to see~$\sem{u}$ and~$\sem{v}$ as DBMs. In fact, $S_\Up$ precisely corresponds to the up operation
  on DBMs. In particular, $u$ and~$v$ have the same diagonal constraints, and any lower bound of~$u$ is a lower bound of~$v$.
  Because~$u$ is reduced, this implies that~$\timesucc{\sem{v}}
  \subseteq \sem{u}$.
\end{proof}

\begin{example}
  Note that we do need that the domain is concrete to prove
  equality. In~fact, assume the predicates are $p_1 = y-x\leq 3$,
  $p_2= x-y\leq 1$, $x\leq 1, y\leq 2, x\leq 2, y\leq 4$.  Then if
  $u=x\leq 1\land y\leq 2\land p_1\land p_2$ and~$v=1\leq x\leq 2\land
  y\leq 4\land p_1\land p_2$.  So~$v$ is larger than $\timesucc{u}$.

  Second, to see that the $\Up$ operation can yield strict
  over-approximations, consider $\clpred{x\leq 1} \land \clpred{y\geq
    1}$ in a domain with no predicate on~$x-y$.  The operator relaxes
  all upper bounds, yielding~$\clpred{y\geq 1}$ which defines a set
  larger than  the concrete
  time-successors $y\geq 1 \land x-y \leq 0$.
\end{example}

Alternatively, we can also use the method described in~\cite[Theorem
  2]{SB-cav03} to compute time successors, but the above relation will
allow us to compute predecessors as well.

\medskip
Last, we define the reset operation as follows.
For any~$z \in \Clocks$,
\begin{xalignat*}1
  S_{\Reset_z} ={} &\phantom{{}\et{}}
  \bigwedge_{(0,\leq) \leq (k,\mathord{\prec}) \in \Dom_{z,0}}{\clprpr{z-0\prec k}}\\
  &{} \et \bigwedge_{\substack{x \neq z\\(k,\prec) \in \Dom_{x,z}}}
  \Bigl( \bigvee_{\substack{(l,\prec') \in \Dom_{x,0} \\ (l,\prec') \leq (k,\prec) }}
  \clpred{x-0\prec' l} \Bigr)
  \Rightarrow \clprpr{x-z\prec k}
    \\\noalign{\allowbreak}
  &{}\et
  \bigwedge_{\substack{y \neq z  \\ (k,\prec) \in \Dom_{z,y}}}
  \Bigl( \bigvee_{\substack{(l,\prec') \in \Dom_{0,y}\\ (l,\prec') \leq (k,\prec)}}\clpred{0 - y \prec' l} \Bigr)
  \Rightarrow \clprpr{z-y\prec k}\\\noalign{\allowbreak}
  &{}\et
  \bigwedge_{\substack{y \neq z \\
      (0,\leq) \leq (k,\prec) \in \Dom_{z,y}}} \clprpr{z-y\prec k}\\
  &{}\et
    \bigwedge_{\substack{x, y \neq z \\ (k,\prec) \in \Dom_{x,y}}}
    \clprpr{x-y\prec k} \Leftrightarrow p_{x-y\prec k}.
  \end{xalignat*}
Intuitively, the first conjunct ensures all non-negative upper bounds
on the reset clock hold; the~second conjunct ensures that a diagonal
predicate $x-z\prec k$ with~$x \neq z$ is set to true if, and only if,
an upper bound~$(l,\prec') \leq (k,\prec)$ already holds on~$x$.
Recall that in operations on~DBMs, one~sets such a diagonal component
to the tightest upper bound on~$x$.  The~third conjunct is symmetric
to the second, and the last one ensures diagonals not affected by
reset are unchanged.
Let us define~$\Reset_z(A) = \reduce(\postop_{S_{\Reset_z}}(A))$.

\begin{figure}[t]
  \centering
  \begin{tikzpicture}
    \begin{scope}[scale=.45]
      \fill[black!40] (0,1) -- (2,3) -- (0,3) -- cycle;
      \draw[-latex',line width=.6pt] (0,0) -- (5.5,0);
      \draw[-latex',line width=.6pt] (0,0) -- (0,4.5);
      \node at (0,5) {$y$};
      \node at (6,0) {$x$};
      \begin{scope}[line width=0.1pt]
        \draw[-] (1,0) -- (1,4.5);
        \draw[-] (2,0) -- (2,4.5);
        \draw[-] (3,0) -- (3,4.5);
        \draw[-] (4,0) -- (4,4.5);
        \draw[-] (5,0) -- (5,4.5);
        \draw[-] (0,1) -- (5.5,1);
        \draw[-] (0,2) -- (5.5,2);
        \draw[-] (0,3) -- (5.5,3);
        \draw[-] (0,4) -- (5.5,4);
        \draw[-] (0,0) -- (4.5,4.5);
        \draw[-] (1,0) -- (5.5,4.5);
        \draw[-] (2,0) -- (5.5,3.5);
        \draw[-] (3,0) -- (5.5,2.5);
        \draw[-] (0,1) -- (3.5,4.5);
        \draw[-] (0,2) -- (2.5,4.5);
        \draw[-] (0,3) -- (1.5,4.5);
      \end{scope}
      \begin{scope}[line width=1.5pt,color=red]
        \draw[-] (0,1) -- (3.5,4.5);
        \draw[-] (0,3) -- (5.5,3);
        \draw[-] (4,0) -- (4,4.5);
        \draw[-] (0,1) -- (5.5,1);
        \draw[-] (2,0) -- (5.5,3.5);
      \end{scope}
    \end{scope}
    \begin{scope}[scale=.45,shift={(10,0)}]
     \fill[pattern=north west lines, pattern color=blue] (0,0) -- (4,0) -- (4,1) -- (0,1) -- cycle;
     \draw[-latex',line width=.6pt] (0,0) -- (5.5,0);
      \draw[-latex',line width=.6pt] (0,0) -- (0,4.5);
      \node at (0,5) {$y$};
      \node at (6,0) {$x$};
      \begin{scope}[line width=0.1pt]
        \draw[-] (1,0) -- (1,4.5);
        \draw[-] (2,0) -- (2,4.5);
        \draw[-] (3,0) -- (3,4.5);
        \draw[-] (4,0) -- (4,4.5);
        \draw[-] (5,0) -- (5,4.5);
        \draw[-] (0,1) -- (5.5,1);
        \draw[-] (0,2) -- (5.5,2);
        \draw[-] (0,3) -- (5.5,3);
        \draw[-] (0,4) -- (5.5,4);
        \draw[-] (0,0) -- (4.5,4.5);
        \draw[-] (1,0) -- (5.5,4.5);
        \draw[-] (2,0) -- (5.5,3.5);
        \draw[-] (3,0) -- (5.5,2.5);
        \draw[-] (0,1) -- (3.5,4.5);
        \draw[-] (0,2) -- (2.5,4.5);
        \draw[-] (0,3) -- (1.5,4.5);
      \end{scope}
      \begin{scope}[line width=1.5pt,color=red]
        \draw[-] (0,1) -- (3.5,4.5);
        \draw[-] (2,0) -- (5.5,3.5);
        \draw[-] (0,3) -- (5.5,3);
        \draw[-] (4,0) -- (4,4.5);
        \draw[-] (0,1) -- (5.5,1);
      \end{scope}
    \end{scope}

  \end{tikzpicture}
  \caption{Reset Operation: The abstract successor of the gray zone on the left is the blue dashed zone on the right.}
  \label{fig:reset}
\end{figure}

\begin{restatable}{lemma}{lemmaopreset}
\label{lemma:op-reset}
  For any Boolean formula~$A(\Bits,\Pred)$, and for any~$z\in \Clocks$, we
  have $\alpha_\Dom(\zReset_z(\sem{A}_\Dom)) \subseteq \Reset_z(A)$.
  Moreover, if~$\Dom$ is the concrete domain, and~$A$ is reduced, then
  the above holds with equality.
\end{restatable}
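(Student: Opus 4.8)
The plan is to follow the proof of Lemma~\ref{lemma:up} step by step, since $\Reset_z$ has exactly the same shape as $\Up$: a relation $S_{\Reset_z}$ encoding a DBM operation, wrapped in $\postop$ and then in $\reduce$. I would first treat the general inclusion $\alpha_\Dom(\zReset_z(\sem{A}_\Dom)) \subseteq \Reset_z(A)$, which by the Galois connection is equivalent to $\zReset_z(\sem{A}_\Dom) \subseteq \sem{\Reset_z(A)}_\Dom$; and since $\reduce$ preserves semantics (Lemma~\ref{lemma:reduce}), it suffices to prove $\zReset_z(\sem{A}_\Dom) \subseteq \sem{\postop_{S_{\Reset_z}}(A)}_\Dom$.

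For this direction I would take $\nu' \in \zReset_z(\sem{A}_\Dom)$, write $\nu' = \nu[z \leftarrow 0]$ with $\nu \in \sem{A}_\Dom$, and set $u = \alpha_\Dom(\nu) \in \minterms(A)$ and $v = \alpha_\Dom(\nu')$. The heart of the argument is to check $(u,v) \in S_{\Reset_z}$, using only the two facts $\nu'(z)=0$ and $\nu'(x)=\nu(x)$ for $x \neq z$. Concretely: the first and fourth conjuncts hold because $\nu'(z)=0$ makes every non-negative upper bound on $z$ and on $z-y$ true; the second and third conjuncts hold because an upper bound $x \prec' l$ satisfied by $\nu$ transfers to $\nu'(x)-\nu'(z)=\nu(x)$ (and symmetrically a lower bound on $y$ transfers to $\nu'(z)-\nu'(y)=-\nu(y)$); and the fifth conjunct holds because diagonals avoiding $z$ are unchanged. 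Once $(u,v)\in S_{\Reset_z}$ with $u\in\minterms(A)$, we get $v\in\minterms(\postop_{S_{\Reset_z}}(A))$, hence $\nu'\in\sem{v}_\Dom\subseteq\sem{\Reset_z(A)}_\Dom$.

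For the equality in the concrete domain with $A$ reduced, I would prove the reverse inclusion $\Reset_z(A)\subseteq\alpha_\Dom(\zReset_z(\sem{A}_\Dom))$, again exactly as in Lemma~\ref{lemma:up}. Picking $v\in\minterms(\Reset_z(A))$ together with a witness $u\in\minterms(A)$ such that $(u,v)\in S_{\Reset_z}$, I would establish the claim $\sem{v}\subseteq\zReset_z(\sem{u})$; this suffices because then $\{v\}=\alpha_\Dom(\sem{v})\subseteq\alpha_\Dom(\zReset_z(\sem{u}))\subseteq\alpha_\Dom(\zReset_z(\sem{A}_\Dom))$, where $\sem{v}\neq\emptyset$ because over the concrete domain $2$-reduction coincides with full reduction and hence with satisfiability. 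The claim itself is argued by reading $u$ and $v$ as DBMs and observing that, over the concrete domain, $S_{\Reset_z}$ reproduces precisely the DBM reset of $z$: the first and fourth conjuncts pin $z=0$, the fifth keeps the $z$-free diagonals, and the second and third set each $x-z$ (resp.\ $z-y$) from the upper bound on $x$ (resp.\ lower bound on $y$).

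The main obstacle is this last point in the equality direction. The second and third conjuncts only inspect single-clock bounds ($x-0$ and $0-y$) to fix the diagonal components $x-z$ and $z-y$, so they match the canonical reset DBM only when those single-clock bounds are already the tightest ones. This is exactly where the hypothesis that $A$ is reduced is used: reducedness makes the DBM of $u$ canonical, so the tightest bound on each clock is directly available, mirroring the DBM reset that sets $x-z$ to the tightest upper bound on $x$. Over a general abstract domain these tightest bounds need not be representable by the available predicates, which is precisely why only the over-approximation of the first direction survives in general.
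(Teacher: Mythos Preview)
Your proposal is correct and follows essentially the same approach as the paper: reduce the inclusion via the Galois connection to $\zReset_z(\sem{A}_\Dom)\subseteq\sem{\Reset_z(A)}_\Dom$, verify $(u,v)\in S_{\Reset_z}$ conjunct by conjunct from $\nu'=\nu[z\leftarrow 0]$, and for the equality direction observe that over the concrete domain with $A$ reduced the relation $S_{\Reset_z}$ reproduces exactly the DBM reset operation. Your discussion of why reducedness is needed (to make the single-clock bounds consulted in the second and third conjuncts the tightest ones) is in fact more explicit than the paper's, which simply cites the DBM reset algorithm.
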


\begin{proof}
  Let $r = \{z\}$.  We show the equivalent inclusion
  $\zReset_r(\sem{A}_\Dom) \subseteq \sem{\Reset_r(A)}_\Dom$.
  Let~$\nu' \in \zReset_r(\sem{A})$, and $\nu \in \sem{A}$ such
  that~$\nu' = \nu[r\leftarrow 0]$.  So~$u = \alpha_\Dom(\nu) \in
  A$. Letting~$v = \alpha_\Dom(\nu')$, let us show that~$(u,v) \in
  S_{\Reset_r}$, which proves that~$v \in \Reset_r(A)$, thus~$\nu' \in
  \sem{v}_\Dom \subseteq \sem{\Reset_r(A)}_\Dom$.

  \begin{itemize}
  \item Consider~$x \in r$. Since~$\nu'(x) = 0$, we have that~$v
    \models \clpred{x-y\prec k}$ for all $(k,\prec) \in \Dom_{x,y}$
    with~$(k,\mathord\prec) \geq (0,\mathord\leq)$;
    so $(u,v)$ satisfies the first conjunct.
    Fix~$(k,\mathord\prec) \in \Dom_{x,y}$.

  \item Assume~$x \not \in r, y \in r$, so that~$\nu'(x) = \nu(x),
    \nu'(y) = 0$.  If there is~$(l,\mathord{\prec'}) \in \Dom_{x,0}$
    with~$(l,\mathord{\prec'}) \leq (k,\mathord\prec)$, this
    means~$\nu(x) - 0 \prec' l \prec k$ so~$\nu'(x) - \nu'(y) \prec
    k$.  Thus the second conjunct is satisfied.

  \item Assume~$x \in r, y \not \in r$ so that~$\nu'(x) = 0, \nu'(y) =
    \nu(y)$.  If there is~$(l,\mathord{\prec'}) \in \Dom_{0,y}$
    with~$(l,\mathord{\prec'}) \leq (k,\mathord\prec)$, this means~$0
    - \nu(y) \prec' l \prec k$ so~$\nu'(x) - \nu'(y) \prec k$ as~well.
    Thus the third conjunct is satisfied.  We also have trivially
    $\nu'(x) - \nu'(y) \prec k $ for all~$(0,\mathord\leq) \leq
    (k,\mathord\prec)$, which entails the fourth conjunct.

  \item Observe that $\nu$ and~$\nu'$ satisfy the same predicates of
    type~$\clpred{x-y\prec \alpha}$ with~$x,y \not \in r$ since these
    values are not affected by the reset; so the pair~$(u,v)$
    satisfies the last conjunct of~$S_{\Reset_r}$ as well.
  \end{itemize}


  Now, if the domain is concrete and~$A$ is reduced, then we have
  $\sem{\Reset_r(A)}_\Dom \subseteq \zReset_r(\sem{A}_\Dom)$. In~fact,
  the operation then corresponds precisely to the reset operation in
  DBMs, see~\cite[Algorithm 10]{BY04}.  On~DBMs, for a
  component~$(x,y)$ with~$x \in r, y \not \in r$, the algorithm
  consists in setting this component to value~$(0,y)$. In~our
  encoding, we~thus set the predicate~$\clprpr{x-y\prec k}$ to true
  whenever $\clpred{0-x\prec' l}$ holds with~$(l,\mathord{\prec'})
  \leq (k,\mathord\prec)$.  The argument is symmetric for $(x,y)$
  with~$x \not \in r$, $y \in r$.
\end{proof}


\subsection{Model-checking algorithm}

Algorithm~\ref{alg:reach} shows how to check the reachability of a
target location given an abstract domain.
The list \layers contains, at position~$i$, the set of states that are 
reachable in~$i$ steps.  The function $\ApplyEdges$ computes
the disjunction of immediate successors by all edges.  It~consists in
looping over all edges~$e=(l_1,g,R,l_2)$, and gathering the following
image by~$e$:
\[
   \enc(\ell_2) \land
   \Reset_{r_k}(\Reset_{r_{k-1}}(\ldots(\Reset_{r_1}((((\exists \Bits. A(\Bits,\Pred) \land \enc(\ell_1)) \land \alpha_\Dom(g))))))),
\]
where~$R=\{r_1,\ldots,r_k\}$. We thus use a partitioned transition relation and do not
compute the monolithic transition relation.

When the target location is found to be reachable, \ExtractTrace{}(\layers)
returns
a trace reaching the target location. This is standard and can be done by computing backwards from the last element of \layers, by finding which edge can be applied to reach the current state.
Since both reset and time successor operations are defined using relations, predecessors
in our abstract system can be easily computed using the operator $\preop_R$.
As~it is standard, we~omit the precise definition of this function (the reader can refer to the implementation) but assume that it returns a trace of the form
\[
  A_1 \xrightarrow[]{\sigma_1} A_2 \xrightarrow{\sigma_2} \ldots \xrightarrow{\sigma_{n-1}} A_n,
\]
where the~$A_i(\Bits,\Pred)$ are minterms and the~$\sigma_i$ belong to
the trace alphabet $\Sigma = \{\up, \rempty\} \cup \{r(x)\}_{x \in
  \Clocks}$, with the following meaning:
\begin{itemize}
\item if $A_i\xrightarrow{\up} A_{i+1}$ then $A_{i+1} = \Up(A_i)$;
\item if~$A_i \xrightarrow{\rempty} A_{i+1}$ then $A_{i+1} = A_i$;
\item if~$A_i \xrightarrow{r(x)} A_{i+1}$ then~$A_{i+1} = \Reset_x(A_i)$.
\end{itemize}
The feasibility of such a trace is easily checked using DBMs.
\begin{algorithm}[t]
	\small
	\KwInput{$\TA = (\Locs, \invar, \ell_0, \Clocks, E)$,
          $\ell_T$, $\Dom$}\;
        $\nexts := \enc(l_0) \land \alpha_\Dom(\land_{x \in \Clocks} x=0)$\;
        $\layers := []$\;
        $\reachable := \false$\;
        \While{$(\lnot \reachable \land \nexts) \neq \false$}
        {
          $\reachable := \reachable \lor \nexts$\;
          $\nexts := \ApplyEdges(\Up(\nexts)) \land \lnot \reachable$\;
          $\layers.push(\nexts)$\;
          \uIf{$(\nexts \land \enc(l_T)) \neq \false$}
          {\Return \ExtractTrace(\layers)\;}
        }
        \Return Not reachable\;
        \caption{Algorithm \textsf{SymReach} that checks the
          reachability of a target location~$l_T$ in a given abstract
          domain~$\Dom$.}
    \label{alg:reach}
\end{algorithm}

The overall algorithm then follows a classical CEGAR scheme.
We~initialize~$\Dom$ by
adding the clock constraints that appear syntactically in~$\TA$, which
is often a good heuristic.
We~run the reachability check of Algorithm~\ref{alg:reach}. If no
trace is found, then the target location is not reachable. If a trace
is found, then we check for feasibility.  If it is feasible, then the
counterexample is confirmed. Otherwise, the trace is spurious and we
run the refinement procedure described in the next subsection, and
repeat the analysis.

\subsection{Abstraction refinement}
Since we initialize~$\Dom$ with all clock constraints appearing in
guards, we can
make the following hypothesis.
\begin{assumption}
All
guards are represented exactly in the considered abstractions.
\end{assumption}
Note that the algorithm can be easily extended to the general case;
but this simplifies the presentation.

The abstract transition relation we use is not the most precise
abstraction of the concrete transition relation.  Therefore, it is
possible to have abstract transitions~$A_1 \xrightarrow{a} A_2$ for
some action~$a$ while no concrete transition exists between
$\sem{A_1}$ and~$\sem{A_2}$.  This~requires care and is not a direct
application of the standard refinement technique
from~\cite{CGJ+03}.  A~second difficulty is due to incomplete
reduction of the predicates using~$\reduce^2_\Dom$. In~fact, some
reachable states in our abstract model will be unsatisfiable.  Let us
explain how we refine the abstraction in each of these cases.

Consider an algorithm $\interp$ that returns an interpolant of
two given zones~$Z_1,Z_2$.
In what follows, by the \emph{refinement of~$\Dom$
  by~$\interp(Z_1,Z_2)$}, we~mean the domain~$\Dom'$ obtained by
adding $(k,\mathord\prec)$ to~$\Dom_{x,y}$ for all constraints $x-y\prec k$
of~$\interp(Z_1,Z_2)$.  Observe that~$\alpha_{\Dom'}(Z_1)\cap
\alpha_{\Dom'}(Z_2) = \emptyset$ in this case.

We define concrete successor and predecessor operations for the
actions in~$\Sigma$.  For each~$a \in \Sigma$, let~$\preta_a^c$ denote
the concrete predecessor operation on zones defined straightforwardly,
and similarly for~$\postta^c_a$.

Consider domain~$\Dom$ and the induced abstraction
function~$\alpha_\Dom$.  Assume that we are given a spurious trace
$\pi=A_1 \xrightarrow[]{\sigma_1} A_2 \xrightarrow{\sigma_1} \ldots
\xrightarrow{\sigma_{n-1}} A_n$.  Let~$B_1\ldots B_n$ be the sequence
of concrete states visited along~$\pi$ in~$\TA$, that is, $B_1$ is the
concrete initial state, and for all~$2\leq i \leq n$, let~$B_i =
\postta^c_{\pi_{i-1}}(B_{i-1})$.  This~sequence can be computed using
DBMs.

The trace is \emph{realizable} if~$B_n \neq \emptyset$, in which case
the counterexample is confirmed. Otherwise it is \emph{spurious}.
We~show how to refine the abstraction to eliminate a spurious trace
$\pi$.

\smallskip

Let~$i_0$ be the maximal index such that~$B_{i_0} \neq \emptyset$. There are three possible reasons explaining why~$B_{i_0+1}$ is empty:
\begin{enumerate}
\item first, if the abstract successor~$A_{i_0+1}$ is unsatisfiable,
  that is, if it~contains contradictory predicates;
  in this case, $\sem{A_{i_0+1}} = \emptyset$, and the abstraction is
  refined by Lemma~\ref{lemma:empty-refine} to eliminate this case by
  strengthening $\reduce^k_\Dom$.
\item if there are predecessors of~$A_{i_0+1}$ inside~$A_{i_0}$ but none of them are in~$B_{i_0}$, i.e., $\preta_{\pi_{i_0}}^c(\sem{A_{i_0+1}}) \cap \sem{A_{i_0}} \neq \emptyset$;
in this case, we~refine the domain by separating these predecessors from the rest of~$A_{i_0}$ using $\interp(\preta_{\pi_{i_0}}^c(\sem{A_{i_0+1}}),B_{i_0-1})$, as in~\cite{CGJ+03}.
\item otherwise, there are no predecessors of~$A_{i_0+1}$ inside~$A_{i_0}$: we refine the abstraction according to the type of the transition
from step~$i_0$ to~$i_0+1$:
      \begin{enumerate}
      \item if $\pi_{i_0}=\up$: refine~$\Dom$ by~$\interp(\sem{A_{i_0}}{\uparrow}, \sem{A_{i_0+1}}{\downarrow})$.
      \item if~$\pi_{i_0}=r(x)$: refine~$\Dom$ by~$\interp(\Free_x(\sem{A_{i_0}}), \Free_x(\sem{A_{i_0+1}}))$.
      \end{enumerate}
\end{enumerate}
Note that the case~$\pi_{i_0}=\rempty$ is not possible since this
induces the identity function both in the abstract and concrete
systems.

Given abstraction~$\alpha_\Dom$ and spurious trace~$\pi$, let~$\refine(\alpha_\Dom,\pi)$
denote the refined abstraction~$\alpha_{\Dom'}$ obtained as described above.
%
%
%
%
The following two lemmas justify the two subcases of the third case
above.  They prove that the detected spurious transition disappears
after refinement.  The reset and up operations depend on the
abstraction, so we make this dependence explicit below by using
superscripts, as in $\Reset_x^\alpha$ and~$\Up^\alpha$, in order to
distinguish the operations before and after a refinement.
\begin{restatable}{lemma}{lemmarefineup}
\label{lemma:refine-up}
  Consider~$(A_1,A_2) \in \Up^\alpha$ with~$\timesucc{\sem{A_1}} \cap
  \sem{A_2} = \emptyset$.  Then $\timesucc{\sem{A_{1}}} \cap
  \timepred{\sem{A_{2}}} = \emptyset$.  Moreover, if~$\alpha'$ is
  obtained by refinement of~$\alpha$ by
  $\interp(\timesucc{\sem{A_{1}}}, \timepred{\sem{A_{2}}})$, then for
  all~$(A_1',A_2') \in \Up^{\alpha'}$, $\sem{A_1'} \subseteq
  \sem{A_1}$ implies~$\sem{A_2'} \cap \sem{A_2} = \emptyset$.
\end{restatable}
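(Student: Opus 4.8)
The statement has two parts, and I would treat them in order.

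\textbf{First claim (the disjointness $\timesucc{\sem{A_1}}\cap\timepred{\sem{A_2}}=\emptyset$).} The plan is to exploit that a set of time-successors is closed under further time elapse. Suppose toward a contradiction that some valuation~$\nu$ lies in $\timesucc{\sem{A_1}}\cap\timepred{\sem{A_2}}$. From $\nu\in\timepred{\sem{A_2}}$ there is a delay $t\geq 0$ with $\nu+t\in\sem{A_2}$; from $\nu\in\timesucc{\sem{A_1}}$ and closure of $\timesucc{(\cdot)}$ under delays, $\nu+t\in\timesucc{\sem{A_1}}$ as well. Then $\nu+t\in\timesucc{\sem{A_1}}\cap\sem{A_2}$, contradicting the hypothesis $\timesucc{\sem{A_1}}\cap\sem{A_2}=\emptyset$. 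This also guarantees that $\interp(\timesucc{\sem{A_1}},\timepred{\sem{A_2}})$ is well defined.

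\textbf{Second claim (refinement removes the transition).} Write $I=\interp(\timesucc{\sem{A_1}},\timepred{\sem{A_2}})$, so by the interpolant definition $\timesucc{\sem{A_1}}\subseteq\sem{I}$ and $\sem{I}\cap\timepred{\sem{A_2}}=\emptyset$; after refinement $I$ is $\Dom'$-definable. The structural observation I would establish is that $I$ can be taken to consist solely of diagonal constraints $x-y\prec k$ with $x,y\neq 0$. Indeed, by the construction of Lemma~\ref{lemma-smallinterpol} the constraints of the interpolant are drawn from the canonical DBM of $\timepred{\sem{A_2}}$, which, being a set of time-predecessors, carries no lower-bound constraint on an individual clock. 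Moreover none of these constraints can be an upper bound $x\prec k$: since $\timesucc{\sem{A_1}}$ is nonempty and closed under delay, $\timesucc{\sem{A_1}}\subseteq\sem{I}$ forces $\sem{I}$ to contain points with arbitrarily large values of each clock, so its canonical form has $I(x,0)=(\infty,<)$ for every~$x$. Hence every constraint defining~$I$ is diagonal (equivalently, $\sem{I}$ is closed under time elapse).

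With this in hand I would conclude as follows. Take any $(A_1',A_2')\in\Up^{\alpha'}$ with $\sem{A_1'}\subseteq\sem{A_1}$. Then $\sem{A_1'}\subseteq\sem{A_1}\subseteq\timesucc{\sem{A_1}}\subseteq\sem{I}$, so the satisfiable minterm~$A_1'$ entails every diagonal predicate defining~$I$. By the definition of the relation~$S_\Up$ (the second conjunct, see the proof of Lemma~\ref{lemma:up}), diagonal predicates $x-y\prec k$ with $x,y\neq 0$ are preserved verbatim from source to target, so $A_2'$ entails the same diagonal predicates and therefore $\sem{A_2'}\subseteq\sem{I}$. Finally, since $\sem{A_2}\subseteq\timepred{\sem{A_2}}$ and $\sem{I}\cap\timepred{\sem{A_2}}=\emptyset$, we obtain $\sem{A_2'}\cap\sem{A_2}\subseteq\sem{I}\cap\sem{A_2}=\emptyset$, as required.

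\textbf{Main obstacle.} The delicate point is the structural claim that the interpolant involves only diagonal constraints, which is exactly what makes the abstract $\Up$ relation respect the separator: $S_\Up$ preserves diagonals but relaxes upper bounds, so an interpolant carrying an upper bound could be violated by~$A_2'$. I would therefore argue carefully, from the up-closedness of $\timesucc{\sem{A_1}}$ and the down-closedness of $\timepred{\sem{A_2}}$, that the separating constraints are diagonal, and ensure the interpolant is added to~$\Dom'$ in reduced form so that no spurious upper bound enters the domain.
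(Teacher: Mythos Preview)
Your argument for the first claim is correct and matches the paper exactly.

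For the second claim your overall strategy coincides with the paper's---show $\sem{A_2'}\subseteq\sem I$ by checking that every constraint of~$I$ is inherited from $A_1'$ and preserved by $S_\Up$---but the structural claim you rely on is not justified. You assert that the interpolant can be taken to have \emph{only} diagonal constraints, arguing that its constraints come from the canonical DBM of $\timepred{\sem{A_2}}$. This reads Lemma~\ref{lemma-smallinterpol} the wrong way round: an interpolant for $(Z_1,Z_2)$ must \emph{contain} $Z_1$, so in the construction of that lemma it is built from constraints of $Z_1=\timesucc{\sem{A_1}}$, not of $Z_2$. But $\timesucc{\sem{A_1}}$ can perfectly well carry nontrivial lower bounds on individual clocks, so in general $I$ will too. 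Your ``only diagonals'' claim therefore fails, and with it the step where you conclude $\sem{A_2'}\subseteq\sem I$ from preservation of diagonals alone.

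The paper avoids this trap by proving less and using more: it only argues that $I$ has no \emph{upper} bounds (a direct consequence of $\timesucc{\sem{A_1}}\subseteq\sem I$, valid for any interpolant), and then observes that $S_\Up$ preserves \emph{both} diagonal constraints and lower bounds---the first conjunct $\lnot\clpred{x-0\prec k}\to\lnot\clprpr{x-0\prec k}$ says exactly that any lower bound on a clock survives. Hence every constraint that $I$ can possibly have (lower bound or diagonal) passes from $A_1'$ to $A_2'$, giving $\sem{A_2'}\subseteq\sem I$. Your proof is easily repaired along these lines: drop the ``no lower bounds'' claim and add the one-line observation that $S_\Up$ also preserves lower bounds.
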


\begin{figure}[t]
  \centering
\begin{subfigure}[t]{0.48\textwidth}
  \centering
  \begin{tikzpicture}
    \begin{scope}[scale=.35]
      \fill[black!40] (0,0) -- (2,0) -- (2,1) -- (1,1)-- cycle;
      \draw[-latex',line width=.6pt] (0,0) -- (5.5,0);
      \draw[-latex',line width=.6pt] (0,0) -- (0,4.5);
      \fill[pattern=north west lines, pattern color=blue]
      (0,0) -- (4.5,4.5) -- (5.5,4.5) -- (5.5,1.5) -- (4,0) -- cycle;

      \node at (0,5) {$y$};
      \node at (6,0) {$x$};
      \node at (1.3,-1.5) (A1) {$A_1$};
      \node at (3.8,-1.5) (A2) {$A_2$};
      \draw[-latex'] (A1) -- ($(A1)+(0,2)$);
      \draw[-latex'] (A2) -- ($(A2)+(0,2)$);

      \begin{scope}[line width=0.1pt]
        \draw[-] (1,0) -- (1,4.5);
        \draw[-] (2,0) -- (2,4.5);
        \draw[-] (3,0) -- (3,4.5);
        \draw[-] (4,0) -- (4,4.5);
        \draw[-] (5,0) -- (5,4.5);
        \draw[-] (0,1) -- (5.5,1);
        \draw[-] (0,2) -- (5.5,2);
        \draw[-] (0,3) -- (5.5,3);
        \draw[-] (0,4) -- (5.5,4);
        \draw[-] (0,0) -- (4.5,4.5);
        \draw[-] (1,0) -- (5.5,4.5);
        \draw[-] (2,0) -- (5.5,3.5);
        \draw[-] (3,0) -- (5.5,2.5);
        \draw[-] (0,1) -- (3.5,4.5);
        \draw[-] (0,2) -- (2.5,4.5);
        \draw[-] (0,3) -- (1.5,4.5);
      \end{scope}
      \begin{scope}[line width=1.5pt,color=red]
        \draw[-] (0,1) -- (5.5,1);
        \draw[-] (2,0) -- (2,4.5);
        \draw[-] (3,0) -- (3,4.5);
        \draw[-] (0,0) -- (4.5,4.5);
        \draw[-] (4,0) -- (5.5,1.5);
      \end{scope}
    \end{scope}
  \end{tikzpicture}
  \begin{tikzpicture}
    \begin{scope}[scale=.35]
      \draw[-latex',line width=.6pt] (0,0) -- (5.5,0);
      \draw[-latex',line width=.6pt] (0,0) -- (0,4.5);
      \fill[pattern=north west lines, pattern color=blue]
      (0,0) -- (2,0) -- (5.5,3.5) -- (5.5,4.5) -- (4.5,4.5) -- cycle;
      \node at (0,5) {$y$};
      \node at (6,0) {$x$};
      \node at (1.3,-1.5) (A1) {$A_1$};
      \node at (3.8,-1.5) (A2) {$A_2$};
      \draw[-latex'] (A1) -- ($(A1)+(0,2)$);
      \draw[-latex'] (A2) -- ($(A2)+(0,2)$);
      \begin{scope}[line width=0.1pt]
        \draw[-] (1,0) -- (1,4.5);
        \draw[-] (2,0) -- (2,4.5);
        \draw[-] (3,0) -- (3,4.5);
        \draw[-] (4,0) -- (4,4.5);
        \draw[-] (5,0) -- (5,4.5);
        \draw[-] (0,1) -- (5.5,1);
        \draw[-] (0,2) -- (5.5,2);
        \draw[-] (0,3) -- (5.5,3);
        \draw[-] (0,4) -- (5.5,4);
        \draw[-] (0,0) -- (4.5,4.5);
        \draw[-] (1,0) -- (5.5,4.5);
        \draw[-] (2,0) -- (5.5,3.5);
        \draw[-] (3,0) -- (5.5,2.5);
        \draw[-] (0,1) -- (3.5,4.5);
        \draw[-] (0,2) -- (2.5,4.5);
        \draw[-] (0,3) -- (1.5,4.5);
      \end{scope}
      \begin{scope}[line width=1.5pt,color=red]
        \draw[-] (0,1) -- (5.5,1);
        \draw[-] (2,0) -- (2,4.5);
        \draw[-] (3,0) -- (3,4.5);
        \draw[-] (0,0) -- (4.5,4.5);
        \draw[-] (4,0) -- (5.5,1.5);
        \draw[-] (2,0) -- (5.5,3.5);
      \end{scope}
    \end{scope}
  \end{tikzpicture}
  \caption{Refinement for the time successors operation.
    The interpolant that separates $\sem{A_1}{\uparrow}$ from~$\sem{A_2}{\downarrow}$
    contains the constraint~$x=y+2$. When this is added to the abstract domain,
    the set $A_2'$ (which is $A_2$ in the new abstraction)
    is no longer reachable by the time successors operation.
}
  \label{fig:refinement-up}
\end{subfigure}
\hfil
\begin{subfigure}[t]{0.45\textwidth}
  \begin{tikzpicture}
    \begin{scope}[scale=.35,shift={(0,0)}]
      \fill[black!40] (0,1) -- (2,3) -- (0,3) -- cycle;
     \fill[pattern=north west lines, pattern color=blue] (0,0) -- (4,0) -- (4,1) -- (0,1) -- cycle;
     \draw[-latex',line width=.6pt] (0,0) -- (5.5,0);
      \draw[-latex',line width=.6pt] (0,0) -- (0,4.5);
      \node at (0,5) {$y$};
      \node at (6,0) {$x$};
      \node at (-1,2.3) (A1) {$A_1$};
      \node at (3.3,-1.5) (A2) {$A_2$};
      \draw[-latex'] (A1) -- ($(A1)+(1.6,0)$);
      \draw[-latex'] (A2) -- ($(A2)+(0,2)$);
      \begin{scope}[line width=0.1pt]
        \draw[-] (1,0) -- (1,4.5);
        \draw[-] (2,0) -- (2,4.5);
        \draw[-] (3,0) -- (3,4.5);
        \draw[-] (4,0) -- (4,4.5);
        \draw[-] (5,0) -- (5,4.5);
        \draw[-] (0,1) -- (5.5,1);
        \draw[-] (0,2) -- (5.5,2);
        \draw[-] (0,3) -- (5.5,3);
        \draw[-] (0,4) -- (5.5,4);
        \draw[-] (0,0) -- (4.5,4.5);
        \draw[-] (1,0) -- (5.5,4.5);
        \draw[-] (2,0) -- (5.5,3.5);
        \draw[-] (3,0) -- (5.5,2.5);
        \draw[-] (0,1) -- (3.5,4.5);
        \draw[-] (0,2) -- (2.5,4.5);
        \draw[-] (0,3) -- (1.5,4.5);
      \end{scope}
      \begin{scope}[line width=1.5pt,color=red]
        \draw[-] (0,1) -- (3.5,4.5);
        \draw[-] (0,3) -- (5.5,3);
        \draw[-] (4,0) -- (4,4.5);
        \draw[-] (0,1) -- (5.5,1);
        \draw[-] (2,0) -- (5.5,3.5);
      \end{scope}
    \end{scope}
    \begin{scope}[scale=.35,shift={(8,0)}]
      \fill[black!40] (0,1) -- (2,3) -- (0,3) -- cycle;
     \fill[pattern=north west lines, pattern color=blue] (0,0) -- (2,0) -- (2,1) -- (0,1) -- cycle;
     \draw[-latex',line width=.6pt] (0,0) -- (5.5,0);
      \draw[-latex',line width=.6pt] (0,0) -- (0,4.5);
      \node at (0,5) {$y$};
      \node at (6,0) {$x$};

      \node at (-1.2,2.3) (A1) {$A_1'$};
      \node at (3.3,-1.5) (A2) {$A_2'$};
      \draw[-latex'] (A1) -- ($(A1)+(1.5,0)$);
      \draw[-latex'] (A2) -- ($(A2)+(0,2)$);

      \begin{scope}[line width=0.1pt]
        \draw[-] (1,0) -- (1,4.5);
        \draw[-] (2,0) -- (2,4.5);
        \draw[-] (3,0) -- (3,4.5);
        \draw[-] (4,0) -- (4,4.5);
        \draw[-] (5,0) -- (5,4.5);
        \draw[-] (0,1) -- (5.5,1);
        \draw[-] (0,2) -- (5.5,2);
        \draw[-] (0,3) -- (5.5,3);
        \draw[-] (0,4) -- (5.5,4);
        \draw[-] (0,0) -- (4.5,4.5);
        \draw[-] (1,0) -- (5.5,4.5);
        \draw[-] (2,0) -- (5.5,3.5);
        \draw[-] (3,0) -- (5.5,2.5);
        \draw[-] (0,1) -- (3.5,4.5);
        \draw[-] (0,2) -- (2.5,4.5);
        \draw[-] (0,3) -- (1.5,4.5);
      \end{scope}
      \begin{scope}[line width=1.5pt,color=red]
        \draw[-] (0,1) -- (3.5,4.5);
        \draw[-] (0,3) -- (5.5,3);
        \draw[-] (4,0) -- (4,4.5);
        \draw[-] (0,1) -- (5.5,1);
        \draw[-] (2,0) -- (5.5,3.5);
        \draw[-] (2,0) -- (2,4.5);
      \end{scope}
    \end{scope}

  \end{tikzpicture}
  \caption{Refinement for the reset operation.
    The interpolant that separates $\Free_y(A_1)$ from~$\Free_y(A_2)$
    contains the constraint~$x< 2$. When this is added to the abstract domain,
    the set $A_2'$ (which is $A_2$ in the new abstraction)
    is no longer reachable by the reset operation.
}
  \label{fig:reset-refinement}
\end{subfigure}
\end{figure}

\begin{proof}
  Assume there is~$v \in \timesucc{\sem{A_{1}}} \cap
  \timepred{\sem{A_{2}}}$. There exists~$d_1,d_2\geq 0$ and~$v_1 \in
  A_1$ such that $v = v_1 + d_1$ and~$v+d_2\in A_2$, which
  means~$v_1+d_1+d_2 \in A_2$, thus $\sem{A_1}{\uparrow} \cap
  \sem{A_2}\neq \emptyset$.

  Let~$Z = \sem{\interp(\timesucc{\sem{A_{1}}}, \timepred{\sem{A_{2}}})}$.
  By definition, $\timesucc{\sem{A_1}} \subseteq Z$, and~$Z \cap \timepred{\sem{A_2}} = \emptyset$.
  We have~$\timesucc{Z} = Z$; in~fact, $Z$~cannot have upper bounds since
  $\timesucc{\sem{A_1}}$ does not have~any.
  It~follows that~$\timesucc{Z} \cap \timepred{\sem{A_2}} = \emptyset$.

  Now, consider~$(A_1',A_2') \in \Up^{\alpha'}$ with~$\sem{A_1'}
  \subseteq \sem{A_1}$.
  Let us show that~$\sem{A_2'} \subseteq Z$.  Notice that all
  constraints of~$Z$ must be satisfied by~$A_2'$ due to the inclusion
  $\sem{A_1'} \subseteq Z$: there are no upper bounds in~$Z$, all its
  lower bounds are satisfied by~$A_1$, thus also by~$A_1'$, and are
  preserved by definition by~$\Up$, and all diagonal constraints
  of~$Z$ hold in~$A_1$, thus also in~$A_1'$, and are preserved as well
  in~$A_2'$.  This shows the required inclusion.  It~follows
  that~$\sem{A_2'} \cap \sem{A_2} = \emptyset$.
\end{proof}

\begin{restatable}{lemma}{lemmarefinereset}
\label{lemma:refine-reset}
  Consider~$x \in \Clocks$, and $(A_1,A_2) \in \Reset_x^\alpha$ such
  that~$\sem{A_1}[x\leftarrow 0] \cap \sem{A_2} = \emptyset$.  Then
  $\Free_x(\sem{A_{1}}) \cap \Free_x(\sem{A_{2}}) = \emptyset$.
  Moreover, if $\alpha'$ is obtained by refinement of~$\alpha$ by
  $\interp(\Free_x(\sem{A_{1}}), \Free_x(\sem{A_{2}}))$, then for
  all~$(A_1',A_2') \in \Reset_x^{\alpha'}$ with~$\sem{A_1'} \subseteq
  \sem{A_1}$, we have $\sem{A_2'} \cap \sem{A_2} = \emptyset$.
\end{restatable}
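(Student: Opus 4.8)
The plan is to mirror the proof of Lemma~\ref{lemma:refine-up}, with the operation $\Free_x$ playing the role that $\timesucc{}$ and $\timepred{}$ played there, and with the fact that $\Free_x(Z)$ is a \emph{cylinder} in the $x$-direction (invariant under changing the $x$-coordinate of its points) replacing the time-closure property used in the time-successor case. The two facts I will rely on repeatedly are the elementary identity $\reset_x(Z) = \Free_x(Z) \cap \sem{x=0}$ and this cylinder invariance.

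First I would establish the disjointness $\Free_x(\sem{A_1}) \cap \Free_x(\sem{A_2}) = \emptyset$. Taking a putative common point $v$, cylinder invariance gives that $v[x \leftarrow 0]$ again lies in both sets, and by the identity above it then lies in $\reset_x(\sem{A_1}) \cap \reset_x(\sem{A_2})$. To contradict the hypothesis $\sem{A_1}[x\leftarrow 0] \cap \sem{A_2} = \emptyset$, I need $\reset_x(\sem{A_2}) \subseteq \sem{A_2}$, i.e.\ that the reset image $A_2$ is stable under re-resetting $x$. This is the only place where $(A_1,A_2)\in\Reset_x^\alpha$ is used: the relation $S_{\Reset_x}$ forces the non-negative upper bounds on $x$ together with the lower bound $x\ge 0$, so that after reduction $A_2$ carries no positive lower bound on $x$, and hence setting $x$ to $0$ keeps a valuation inside $\sem{A_2}$. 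I expect this step to be the main obstacle, just as the passage from $\sem{A_2}$ to $\timepred{\sem{A_2}}$ was the delicate point in Lemma~\ref{lemma:refine-up}.

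Next, writing $Z = \sem{\interp(\Free_x(\sem{A_1}), \Free_x(\sem{A_2}))}$, I would record that $\Free_x(\sem{A_1}) \subseteq Z$ and $Z \cap \Free_x(\sem{A_2}) = \emptyset$ by definition of the interpolant, and that $Z$ is itself an $x$-cylinder: since neither $\Free_x(\sem{A_1})$ nor $\Free_x(\sem{A_2})$ carries any constraint involving $x$, the computed interpolant involves only clocks having non-trivial constraints in both zones (the property noted in the footnote on interpolants), hence only clocks distinct from $x$, so $\Free_x(Z) = Z$. This is the analogue of the time-closure $\timesucc{Z} = Z$ used in the previous lemma.

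Finally, for any $(A_1', A_2') \in \Reset_x^{\alpha'}$ with $\sem{A_1'} \subseteq \sem{A_1}$, I would show $\sem{A_2'} \subseteq Z$ and conclude. Indeed $\sem{A_1'} \subseteq \sem{A_1} \subseteq \Free_x(\sem{A_1}) \subseteq Z$, and every constraint of $Z$ involves only clocks distinct from $x$; such constraints are preserved by reset, since the last conjunct of $S_{\Reset_x}$ leaves all diagonals not involving $x$ unchanged, so they continue to hold on $\sem{A_2'}$. As $Z$ is an $x$-cylinder, membership in $Z$ depends only on the non-$x$ coordinates, whence $\sem{A_2'} \subseteq Z$. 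Since $\sem{A_2} \subseteq \Free_x(\sem{A_2})$ and $Z \cap \Free_x(\sem{A_2}) = \emptyset$, it follows that $\sem{A_2'} \cap \sem{A_2} = \emptyset$, as required.
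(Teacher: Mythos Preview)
Your argument for the first claim, $\Free_x(\sem{A_1})\cap\Free_x(\sem{A_2})=\emptyset$, is essentially the paper's: both rely on the fact that $\sem{A_2}$, being in the image of $\Reset_x$, is closed under resetting~$x$ (the paper obtains this by invoking Lemma~\ref{lemma:op-reset} applied to $A_2$, you argue it directly from the shape of~$S_{\Reset_x}$).

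For the second claim your route differs from the paper's and is in fact more complete. The paper's proof only establishes $\sem{A_1'}[x\leftarrow 0]\subseteq Z$ and hence $\sem{A_1'}[x\leftarrow 0]\cap\sem{A_2}=\emptyset$, which is strictly weaker than the stated conclusion $\sem{A_2'}\cap\sem{A_2}=\emptyset$ (the abstract reset $\sem{A_2'}$ can be larger than the concrete reset $\sem{A_1'}[x\leftarrow 0]$). You close this gap by showing $\sem{A_2'}\subseteq Z$, following the template of the paper's own proof of Lemma~\ref{lemma:refine-up}: the key additional observation is that the interpolant~$Z$ carries no constraint involving~$x$, hence is an $x$-cylinder, and such constraints are preserved verbatim by the last conjunct of~$S_{\Reset_x}$. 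Your appeal to the footnote on interpolants for this property is legitimate here, since neither $\Free_x(\sem{A_1})$ nor $\Free_x(\sem{A_2})$ has any non-trivial bound on~$x$, and the interpolants computed by Algorithm~\ref{alg:interpol} indeed only involve clocks appearing with finite bounds in both arguments. One small point: your inference ``$\sem{A_1'}\subseteq Z$, hence the predicates defining~$Z$ hold in the minterm~$A_1'$'' tacitly assumes $\sem{A_1'}\neq\emptyset$; this is harmless in the algorithm's intended use but worth stating.
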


\begin{proof}
  Let $v \in \Free_x(\sem{A_{1}}) \cap \Free_x(\sem{A_{2}})$. Then
  there exist~$v_1 \in \sem{A_1}$, $v_2 \in \sem{A_2}$
  and~$v_0$ such
  that~$v_0 = v[x\leftarrow 0] = v_1[x\leftarrow 0] = v_2[x\leftarrow 0]$.
  But~$\sem{A_2}$ is
  closed by resetting~$x$, that is, $\sem{A_2}[x:=0] \subseteq
  \sem{A_2}$.  This follows from Lemma~\ref{lemma:op-reset} applied
  to~$A_2$, by~observing that~$A_2$ is unchanged by the reset
  operation.  So~$v_0 \in \sem{A_2}$. But then~$\sem{A_1}[x:=0] \cap
  \sem{A_2} \neq \emptyset$ as witnessed by~$v_1[x:=0] = v_0$.
  
  Let $Z = \sem{\interp(\Free_x(\sem{A_1}),\Free_x(\sem{A_2}))}$.  For
  all~$A_1'$ satisfying~$\sem{A_1'} \subseteq \sem{A_1}$, we
  have 
  $\sem{A_1'}[x \leftarrow 0] \subseteq \sem{A_1}[x\leftarrow 0] \subseteq \Free_x(\sem{A_1}) \subseteq
  Z$. So~$Z \cap A_2 =\emptyset$ means that ${\sem{A_1'}[x\leftarrow 0] \cap
    A_2} = \emptyset$.
\end{proof}

\section{Experiments}
We implemented both algorithms. The symbolic version was implemented
in OCaml using the \texttt{CUDD}
library\footnote{\url{http://vlsi.colorado.edu/~fabio/}};
the explicit version was implemented in C++ within an existing
model checker using Uppaal DBM library.  Both prototypes take as input
networks of timed
automata with invariants, discrete variables, urgent and committed
locations. The~presented algorithms are adapted to these features
without difficulty.

We evaluated our algorithms on three classes of benchmarks we believe
are significant.  We~compare the performance of the algorithm with
that of Uppaal~\cite{BDL+06} which is based on zones, as well as
the BDD-based model checker engine of PAT~\cite{NguyenSLDL12}.  We were unable to
compare with RED~\cite{farn-ftnds01} which is not maintained anymore
and not open source, and with which we failed to obtain correct
results.  The tool used in~\cite{EFGP-rtss2010} was not available
either. We thus only provide a comparison here with two
well-maintained tools.

Two of our benchmarks are variants of schedulability-analysis problems
where task execution times depend on the internal states of executed processes,
so that an analysis of the state space is necessary to obtain a precise answer.

\noindent\textbf{Monoprocess Scheduling Analysis.}
In this variant, a single process sequentially executes tasks on a
single machine, and the execution time of each cycle depends on the
state of the process.  The goal is to determine a bound on the maximum
execution time of a single cycle.  This depends on the semantics of
the process since the bound depends on the reachable states.

More precisely, we built a set of benchmarks where the processes are
defined by synchronous circuit models taken from the Synthesis
Competition (\url{http://www.syntcomp.org}).  We assume that each
latch of the circuit is associated with a resource, and changing the
state of the resource takes some amount of time.  So a subset of the
latches have clocks associated with them, which measure the time
elapsed since the latest value change (latest moment when the value
changed from~$0$ to~$1$, or from~$1$ to~$0$). We provide two time
positive bounds~$\ell_0$ and $\ell_1$ for each latch, which determine
the execution time as follows: if~the value of latch~$\ell$ changes
from~$0$ to~$1$ (resp.~from $1$ to~$0$), then the execution time of
the present cycle cannot be less than~$\ell_1$ (resp.~$\ell_0$).
The~execution time of the step is then the minimum that satisfies
these constraints.

\noindent\textbf{Multi-process Stateful Scheduling Analysis.}
In this variant, three processes are scheduled on two machines with a
round-robin policy.  Processes schedule tasks one after the other
without any delay. As~in the previous benchmarks, a~process executing a
task (on~any machine) corresponds to a step of the synchronous circuit
model. Each~task is described by a tuple $(C_1,C_2,D)$ which defines the
minimum and maximum execution times, and the relative deadline. When a task
finishes, the~next task arrives immediately. The~values in the tuple
depend on the state of the process. The~goal is to check the absence
of any deadline miss.
Processes are also instantiated with AIG circuits from \url{http://www.syntcomp.org}.

\noindent\textbf{Asynchronous Computation.}
We consider an asynchronous network of ``threshold gates'', defined as
follows: each gate is characterized by a tuple $(n,\theta,[l,u])$
where~$n$ is the number of inputs, $0\leq \theta \leq n$ is the
threshold, and~$l\leq u$ are lower and upper bounds on activation
time.  Each gate has an output which is initially undefined. The~gate
becomes active during the time period $[l,u]$. During this time, if all
inputs are defined, and if at least~$\theta$ of the inputs have
value~$1$, then it~sets its output to~$1$. At~the end of the time period,
it~becomes deactivated and the output becomes undefined again, until
the next period, which starts $l$ time units after the
deactivation.  The~goal is to check whether the given gate can
output~$1$ within a given time bound~$T$.

\begin{figure}[ht]
  \centering
  \includegraphics[scale=0.4]{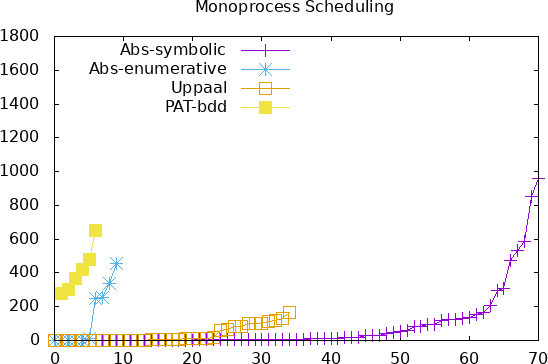}
  \includegraphics[scale=0.4]{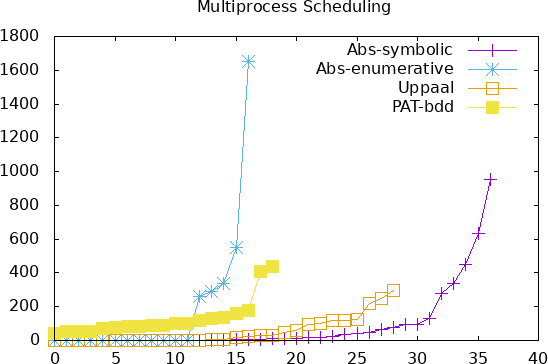}

\bigskip
  \includegraphics[scale=0.4]{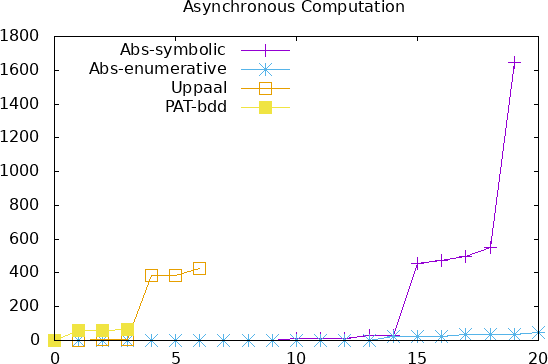}
  \includegraphics[scale=0.4]{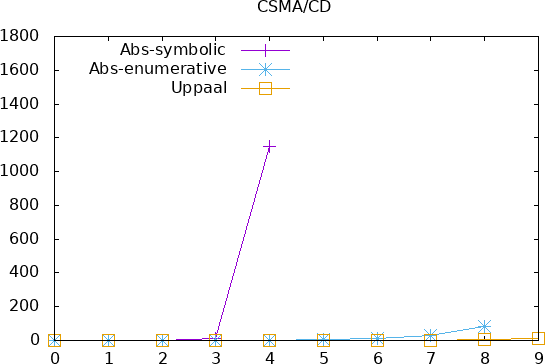}
  \caption{Comparison of our enumerative and symbolic algorithms
    (resp. Abs-enumerative and Abs-symbolic) with Uppaal and~PAT.
    Each figure is a cactus plot for the set of benchmarks: a~point
    (X,Y) means X benchmarks were solved within time bound
    Y.}\label{fig-results}
\end{figure}

\noindent\textbf{Results.}
%
Figure~\ref{fig-results} displays the results of our experiments.
All algorithms were given 8GB of memory and a timeout of 30 minutes,
and the experiments were run on laptop with an Intel i7@3.2Ghz
processor running Linux.  The symbolic algorithm performs best among
all on the monoprocess and multiprocess scheduling
benchmarks. Uppaal is the second best, but does not solve as many
benchmarks as our algorithm. Our enumerative algorithm quickly fails
on these benchmarks, often running out of memory.  On asynchronous
computation benchmarks, our enumerative algorithm performs
remarkably well, beating all other algorithms.
We ran our tools on the CSMA/CD benchmarks (with 3 to
12 processes); Uppaal performs the best but our enumerative algorithm
is slightly behind. The~symbolic algorithm does not scale, while PAT fails to terminate in all cases.

The tool used for the symbolic algorithm is open source and can be found
at \url{https://github.com/osankur/symrob} along with all the benchmarks.


\section{Conclusion and Future Work}

There are several ways to improve the
algorithm. Since the choice of interpolants determines the abstraction
function and the number of refinements, we~assumed that taking the
minimal interpolant should be preferable as it should keep the
abstractions as coarse as possible. But it might be better to predict
which interpolant is the most adapted for the rest of the computation
in order to limit future refinements. The~number of refinement also
depends on the search order, and although it has already been
studied in~\cite{DBLP:conf/formats/HerbreteauT15}, it~could be
interesting to study it in this case.
Generally speaking, it is  worth noting that we currently cannot
predict which (variant~of) our algorithms is better suited for which
model.

Several extensions of our algorithms could  be developed,
\textit{e.g.}
combining our algorithms
with other methods based on finer abstractions as
in~\cite{HSW-cav13},
integrating predicate abstraction on discrete variables, or
developing SAT-based versions of our algorithms.

\bibliographystyle{abbrv}
\bibliography{bibexport}

\clearpage
\appendix

\section{Details on Benchmarks}
\label{app:bench}
Monoprocess-scheduling benchmarks we considered are listed below.  In
each case, the \texttt{.aag} file is the circuit defining the process
(the~models with identical file names is available at
\url{http://www.syntcomp.org}).  Only a subset of the latches are
``clocked'', that have time constraints. This can be seen in the file
name: for~instance, in~\texttt{amba3b5y.aag\_4L\_200.xml}, only the
first four latches are clocked.  The~number of clocks is then~five
(an~additional one is used to test the elapsed global time).  The~last
number is the time bound to be tested.  The~complete list of
benchmarks are given in Table~\ref{table:mono}.

\def\arraystretch{1.1}
\begin{longtable}{|>{\quad\ttfamily}l<{\quad}|}
\hline\endhead
\hline\endfirsthead
\hline\endfoot
\endlastfoot
amba3b5y.aag\_10L\_300.xml \\\hline
amba3b5y.aag\_4L\_200.xml \\\hline
amba3b5y.aag\_4L\_290.xml \\\hline
amba3b5y.aag\_4L\_300.xml \\\hline
amba3b5y.aag\_5L\_290.xml \\\hline
amba3b5y.aag\_5L\_300.xml \\\hline
amba3b5y.aag\_6L\_290.xml \\\hline
amba3b5y.aag\_6L\_300.xml \\\hline
amba3b5y.aag\_7L\_290.xml \\\hline
amba3b5y.aag\_7L\_300.xml \\\hline
amba3b5y.aag\_8L\_300.xml \\\hline
amba3b5y.aag\_9L\_300.xml \\\hline
amba4c7y.aag\_10L\_300.xml \\\hline
amba4c7y.aag\_4L\_200.xml \\\hline
amba4c7y.aag\_4L\_300.xml \\\hline
amba4c7y.aag\_5L\_200.xml \\\hline
amba4c7y.aag\_5L\_300.xml \\\hline
amba4c7y.aag\_6L\_300.xml \\\hline
amba4c7y.aag\_7L\_300.xml \\\hline
amba4c7y.aag\_8L\_300.xml \\\hline
amba4c7y.aag\_9L\_300.xml \\\hline
bs16y.aag\_4L\_100.xml \\\hline
bs16y.aag\_4L\_150.xml \\\hline
bs16y.aag\_4L\_200.xml \\\hline
cnt5y.aag\_4L\_200.xml \\\hline
cnt5y.aag\_4L\_300.xml \\\hline
factory\_assembly\_3x3\_1\_1errors.aag\_10L\_500.xml \\\hline
factory\_assembly\_3x3\_1\_1errors.aag\_4L\_200.xml \\\hline
factory\_assembly\_3x3\_1\_1errors.aag\_4L\_300.xml \\\hline
factory\_assembly\_3x3\_1\_1errors.aag\_5L\_300.xml \\\hline
factory\_assembly\_3x3\_1\_1errors.aag\_6L\_300.xml \\\hline
factory\_assembly\_3x3\_1\_1errors.aag\_6L\_400.xml \\\hline
factory\_assembly\_3x3\_1\_1errors.aag\_6L\_500.xml \\\hline
factory\_assembly\_3x3\_1\_1errors.aag\_7L\_500.xml \\\hline
factory\_assembly\_3x3\_1\_1errors.aag\_8L\_500.xml \\\hline
factory\_assembly\_3x3\_1\_1errors.aag\_9L\_500.xml \\\hline
genbuf2b3unrealy.aag\_10L\_400.xml \\\hline
genbuf2b3unrealy.aag\_4L\_300.xml \\\hline
genbuf2b3unrealy.aag\_5L\_250.xml \\\hline
genbuf2b3unrealy.aag\_5L\_300.xml \\\hline
genbuf2b3unrealy.aag\_6L\_300.xml \\\hline
genbuf2b3unrealy.aag\_7L\_300.xml \\\hline
genbuf2b3unrealy.aag\_7L\_400.xml \\\hline
genbuf2b3unrealy.aag\_8L\_400.xml \\\hline
genbuf2b3unrealy.aag\_9L\_400.xml \\\hline
genbuf5f5n.aag\_10L\_300.xml \\\hline
genbuf5f5n.aag\_5L\_290.xml \\\hline
genbuf5f5n.aag\_5L\_300.xml \\\hline
genbuf5f5n.aag\_6L\_290.xml \\\hline
genbuf5f5n.aag\_6L\_300.xml \\\hline
genbuf5f5n.aag\_7L\_290.xml \\\hline
genbuf5f5n.aag\_7L\_300.xml \\\hline
genbuf5f5n.aag\_8L\_300.xml \\\hline
genbuf5f5n.aag\_9L\_300.xml \\\hline
moving\_obstacle\_8x8\_1glitches.aag\_10L\_300.xml \\\hline
moving\_obstacle\_8x8\_1glitches.aag\_4L\_150.xml \\\hline
moving\_obstacle\_8x8\_1glitches.aag\_4L\_300.xml \\\hline
moving\_obstacle\_8x8\_1glitches.aag\_5L\_150.xml \\\hline
moving\_obstacle\_8x8\_1glitches.aag\_5L\_300.xml \\\hline
moving\_obstacle\_8x8\_1glitches.aag\_6L\_150.xml \\\hline
moving\_obstacle\_8x8\_1glitches.aag\_6L\_300.xml \\\hline
moving\_obstacle\_8x8\_1glitches.aag\_7L\_150.xml \\\hline
moving\_obstacle\_8x8\_1glitches.aag\_7L\_300.xml \\\hline
moving\_obstacle\_8x8\_1glitches.aag\_8L\_300.xml \\\hline
moving\_obstacle\_8x8\_1glitches.aag\_9L\_300.xml \\\hline
\caption{Monoprocess benchmarks}
\label{table:mono}
\end{longtable}

\medskip
For the multiprocess-scheduling benchmarks, we generated
instances using the data shown in Table~\ref{table:multi}. 
All models have three clocks, one per process.
The~first three entries
show the circuits (from \url{http://www.syntcomp.org}) used to define
the processes that are being executed. The last number is the number
of the scenario,
which determines the execution times of arriving tasks according
to the value of a selected latch:
\begin{itemize}
\item in scenario 0, the two tuples~$(C_1,C_2,D)$ of execution time interval
  and relative deadlines are: $(500,1000,3000), (400,800,3000)$.
\item in scenario 1: $(500,1000, 1500), (400, 800, 1600)$.
\item in scenario 2: $(1000, 1000, 10000), (20000, 20000, 200000)$.
\end{itemize}

\begin{longtable}{|>{\ }c<{\ }|>{\ \ttfamily}c<{\ }|>{\ \ttfamily}c<{\ }|>{\ \ttfamily}c<{\ }|>{\ }c<{\ }|}
  \hline
  Model & Process1 & Process2 & Process3 & Scenario \\  \hline\hline\endhead
  \hline
  Model & Process1 & Process2 & Process3 & Scenario \\  \hline\hline\endfirsthead
\endfoot
\endlastfoot
0 & amba3b5y.aag & add2y.aag & add2y.aag & 0\\ \hline
1 & amba3b5y.aag & add2y.aag & add2y.aag & 1\\ \hline
2 & amba3b5y.aag & add2y.aag & add2y.aag & 2\\ \hline
3 & cnt4y.aag & cnt3y.aag & cnt3y.aag & 0\\ \hline
4 & cnt4y.aag & cnt3y.aag & cnt3y.aag & 1\\ \hline
5 & cnt4y.aag & cnt3y.aag & cnt3y.aag & 2\\ \hline
6 & cnt4y.aag & cnt4y.aag & cnt3y.aag & 0\\ \hline
7 & cnt4y.aag & cnt4y.aag & cnt3y.aag & 1\\ \hline
8 & cnt4y.aag & cnt4y.aag & cnt3y.aag & 2\\ \hline
9 & cnt4y.aag & cnt4y.aag & cnt4y.aag & 0\\ \hline
10 & cnt4y.aag & cnt4y.aag & cnt4y.aag & 1\\ \hline
11 & cnt4y.aag & cnt4y.aag & cnt4y.aag & 2\\ \hline
12 & cnt5y.aag & cnt4y.aag & cnt3y.aag & 0\\ \hline
13 & amba3b5y.aag & cnt3y.aag & cnt3y.aag & 2\\ \hline
14 & cnt5y.aag & cnt3y.aag & cnt3y.aag & 0\\ \hline
15 & cnt5y.aag & cnt3y.aag & cnt3y.aag & 1\\ \hline
16 & cnt5y.aag & cnt3y.aag & cnt3y.aag & 2\\ \hline
17 & cnt3y.aag & cnt3y.aag & cnt3y.aag & 0\\ \hline
18 & cnt3y.aag & cnt3y.aag & cnt3y.aag & 1\\ \hline
19 & cnt3y.aag & cnt3y.aag & cnt3y.aag & 2\\ \hline
20 & amba3b5y.aag & add2y.aag & add2y.aag & 1\\ \hline
21 & amba3b5y.aag & add2y.aag & add2y.aag & 0\\ \hline
22 & amba3b5y.aag & add2y.aag & cnt3y.aag & 1\\ \hline
23 & amba3b5y.aag & add2y.aag & cnt3y.aag & 0\\ \hline
24 & bs8y.aag & add2y.aag & add2y.aag & 0\\ \hline
25 & bs8y.aag & add2y.aag & add2y.aag & 1\\ \hline
26 & bs8y.aag & add2y.aag & add2y.aag & 2\\ \hline
27 & bs8y.aag & bs8y.aag & add2y.aag & 0\\ \hline
28 & bs8y.aag & bs8y.aag & add2y.aag & 1\\ \hline
29 & bs8y.aag & bs8y.aag & add2y.aag & 2\\ \hline
30 & bs8y.aag & bs8y.aag & bs8y.aag & 0\\ \hline
31 & bs8y.aag & bs8y.aag & bs8y.aag & 1\\ \hline
32 & bs8y.aag & bs8y.aag & bs8y.aag & 2\\ \hline
33 & mv4y.aag & mv4y.aag & add2y.aag & 0\\ \hline
34 & mv4y.aag & mv4y.aag & add2y.aag & 1\\ \hline
35 & mv4y.aag & mv4y.aag & add2y.aag & 2\\ \hline
36 & mv4y.aag & mv4y.aag & mv4y.aag & 2\\ \hline
37 & stay2y.aag & stay2y.aag & mv4y.aag & 2\\ \hline
38 & stay4y.aag & add2y.aag & add2y.aag & 2\\ \hline
39 & stay4y.aag & cnt4y.aag & add2y.aag & 2\\ \hline
40 & stay4y.aag & stay2y.aag & mv4y.aag & 2\\ \hline
41 & stay4y.aag & stay2y.aag & stay2y.aag & 2\\ \hline
\caption{Multiprocess benchmarks}
\label{table:multi}
\end{longtable}

\medskip

Information on asynchronous-computation benchmarks is listed in
Table~\ref{table:async}.  The number of clocks in each model is equal
to the number of non-input gates.

\begin{longtable}{|>{\ \ttfamily}c<{\ }|>{\ }c<{\ }|>{\ }c<{\ }|>{\ }c<{\ }|}
\hline File name & Number of gates & Number of inputs & Time bound \\\hline\hline\endhead
\hline File name & Number of gates & Number of inputs & Time bound \\\hline\hline\endfirsthead
\endfoot\endlastfoot
a0 & 8 & 4 & 50\\ \hline
a1 & 8 & 4 & 150\\ \hline
a2 & 9 & 4 & 50\\ \hline
a3 & 9 & 4 & 150\\ \hline
a4 & 9 & 4 & 400\\ \hline
a5 & 16 & 8 & 50\\ \hline
a6 & 16 & 8 & 150\\ \hline
a7 & 19 & 14 & 150\\ \hline
a8 & 19 & 14 & 300\\ \hline
a9 & 20 & 14 & 300\\ \hline
a10 & 20 & 14 & 300\\ \hline
b0 & 9 & 4 & 1000\\ \hline
b1 & 10 & 4 & 1000\\ \hline
b2 & 9 & 4 & 1000\\ \hline
b3 & 16 & 8 & 1000\\ \hline
b4 & 40 & 35 & 1000\\ \hline
b5 & 20 & 14 & 1000\\ \hline
b6 & 20 & 15 & 1000\\ \hline
b7 & 9 & 4 & 1000\\ \hline
b8 & 10 & 3 & 1000\\ \hline
b9 & 19 & 14 & 1000\\ \hline
b10 & 16 & 8 & 1000\\ \hline
\caption{Asynchronous Computation Benchmarks}
\label{table:async}
\end{longtable}

\end{document}